\documentclass{article}
\usepackage{graphicx} 
\usepackage[utf8]{inputenc}
\usepackage{braket}
\usepackage{amsfonts}
\usepackage{amsmath}
\usepackage{amssymb}
\usepackage{amsthm}
\usepackage{mathrsfs}
\usepackage[a4paper,total={6in,9in}]{geometry}
\usepackage{setspace}
\usepackage{verbatim}
\usepackage{dsfont}
\usepackage{tikz}
\usepackage[dvipsnames]{xcolor}
\usetikzlibrary{arrows.meta}
\usetikzlibrary{calc}
\usetikzlibrary{patterns}
\usetikzlibrary{3d}
\usepackage{comment}
\usepackage{centernot}
\usepackage{arydshln}
\usepackage{mathtools}
\usepackage{afterpage}

\usepackage{appendix}
\usepackage{url}
\usepackage{bm}
\usepackage{pgfplots}
\pgfplotsset{compat=1.9}
\usepgfplotslibrary{groupplots}

\usepackage[style=numeric, sorting=none]{biblatex} 
\newtheorem{theorem}{Theorem}[section]
\newtheorem{proposition}[theorem]{Proposition}
\newtheorem{lemma}[theorem]{Lemma}
\newtheorem{corollary}[theorem]{Corollary}
\newtheorem{definition}[theorem]{Definition}

\newcommand{\ind}[1]{\mathds{1}_{\{#1\}}}
\newcommand{\indnb}[1]{\mathds{1}_{#1}}

\newcommand{\nlr}{\not\leftrightarrow}

\newcommand{\texpec}{\tilde{\mathds{E}}}

\newcommand\blfootnote[1]{%
  \begingroup
  \renewcommand\thefootnote{}\footnote{#1}%
  \addtocounter{footnote}{-1}%
  \endgroup
}

\title{Cluster Representation of Renormalization Group Transformations and a Rigorous Proof for Convergence of the RG-Flow of the Ising Model to Trivial Fixed Points away from Criticality}
\bigskip

\author{Fabio Arz}

\date{Albert Einstein Center for Fundamental Physics\\
Institute for Theoretical Physics\\
University of Bern, Sidlerstrasse 5, CH-3012 Bern, Switzerland\\[2ex]
    \today}

\begin{document}

\maketitle

\begin{abstract}
    Many rigorous results in the modern era of statistical mechanics have been obtained through geometrical representations. A much studied tool in this setting is the random cluster representation of lattice spin models. Another area of statistical mechanics that is of great interest but lacking rigorous results is the theory of the renormalization group. This paper investigates the idea to find a common ground between these two concepts in order to obtain rigorous results on the renormalization group flow. A need for negative cluster-weights weakens the success of this approach. Nevertheless, we managed to establish a relation between the scaling limit of the renormalization group flow of the nearest-neighbour Ising model away from criticality with a simple one-dimensional dynamical system that follows the cluster connectivity of the renormalization group transformation. This allows for a rigorous proof of the convergence of this flow to the zero- and infinite-temperature fixed point respectively for a large family of renormalization group transformations. Explicit results will be established on $\mathds{Z}^2$ followed by a discussion of the available generalizations to higher dimensions.
\end{abstract}

\blfootnote{*E-mail: fabio.arz@unibe.ch}

\newpage

\tableofcontents

\newpage

\section{Introduction}

The Ising model was first studied in 1925 \cite{Ising} as a simple mathematical model to study phase transitions. Although it received relatively little initial attention, it has since become one of the most extensively studied and thoroughly understood models in statistical physics. This enduring prominence stems from its ability to exhibit rich phase transition behaviour despite its mathematically straightforward formulation.

The renormalization group (RG) framework emerged much later. In the late 1940s, physicists proposed renormalization techniques to tame the infinities plaguing the rising field of quantum field theory. Attempts to formalize these techniques led to early versions of the renormalization group. In 1966, Kadanoff \cite{Kadanoff} introduced the intuitive heuristic of block-spin transformations to investigate the $2\text{D}$ Ising model close to the critical temperature. By averaging out short-range degrees of freedom into collective block spins, Kadanoff showed how a system could be viewed at progressively larger length scales. In the 1970s, Wilson \cite{WilsonI},\cite{WilsonII} combined the renormalization group idea with Kadanoff's block-spin transformations into a definitive mathematical framework. Under Wilson’s paradigm, this iterative coarse-graining defines a flow of the system's Hamilton function through a space of coupling constants toward a scale-invariant fixed point.

Crucially, this new framework provided a physical explanation for the principle of universality, the phenomenon where large-scale physics depends only weakly on the exact details of microscopic interactions. The RG-flow systematically filters out these short-range details, preserving only the structural information relevant to a given scale. In particular, near the critical point of a system like the Ising model, long-range fluctuations dominate the physics. As a result, diverse systems whose trajectories flow toward the same fixed point share identical macroscopic critical behaviour, placing them within the same universality class. Consequently, this principle underscores the profound utility of the Ising model: by investigating its relatively simple interactions, it is possible to deduce the macroscopic behaviour of far more complex systems within the same universality class.

While significant progress has been made in certain areas of the field, a complete mathematical foundation is still lacking. For example, in a major breakthrough, Chelkak and Smirnov \cite{Chelkak-Smirnov} proved the conformal invariance of the $2\text{D}$ Ising model in the continuum limit, confirming a key prediction of RG-theory. However, their proof relies on specific, exact identities of the Ising model and thus fails to account for the broader mechanism of universality. Beyond the challenge of proving universality, the RG framework also suffers from internal mathematical issues. For instance, van Enter, Fernández and Sokal \cite{van_Enter_1993} rigorously demonstrated the occurrence of RG pathologies that strictly limit the validity of Wilson's real-space picture. Ultimately, establishing a rigorous mathematical basis for the principles driving RG flows remains a highly desirable goal, as it would resolve many outstanding problems in mathematical physics.\smallskip

This article is structured as follows. First, we want to establish a common notation required to formulate the main results. Afterwards, the remainder of the introduction is spent on a brief introduction to the random cluster model and its connection to the nearest-neighbour Ising model as well as some measure-theoretic technicalities. In Section \ref{Sec2} we will introduce the general concept of representing RGTs as a model of connectivity. In Section \ref{Sec3} we will then quickly go through decimation as an introductory example before we turn our attention to the family of $2\times2\to1$ block-spin transformations on the square lattice in Section \ref{Sec4}. There we will prove convergence of the RG-flow to the zero- or infinite temperature fixed points for all nearest-neighbour Ising models except the critical point for a suitable set of transformation, while also showcasing pathological behaviour for the remainder of the $2\times 2\to1$ maps. At last, Section \ref{Sec5} contains a short discussion of whether these results generalize to i) higher dimensions and ii) higher state Potts models.

\subsection{The Nearest-Neighbour Ising Model on $\mathds{Z}^2$}

This section presents the most fundamental results about the nearest-neighbour Ising model on $\mathds{Z}^2$. For a more complete picture, a well-written overview by Duminil-Copin of this and related topics including proofs can be found in \cite{10.1007/978-3-030-32011-9_2}.

\begin{definition}
    A graph is a pairing $G=(V,E)$ where $V$ is some set of vertices and $E\subseteq\{\{x,y\},x,y\in V\}$ is a set of edges.
\end{definition}

We call two vertices adjacent, denoted as $x\sim y$, in case that $\{x,y\}\in E$. Further, in that case we may denote the corresponding edge also as $\braket{xy}$. Let us also collect some useful notions:
\begin{enumerate}
    \item[i)] A subgraph $\Gamma\subseteq G$ is a graph $\Gamma=(V_\Gamma,E_\Gamma)$ such that $V_\Gamma\subseteq V_G$ and $\{x,y\}\in E_\Gamma\implies \{x,y\}\in E_G$. We call $\Gamma$ an induced subgraph in case that
    \begin{equation}
        \{x,y\}\in E_\Gamma\iff\{x,y\}\in E_G\quad\forall x,y\in V_\Gamma.
    \end{equation}
    The boundary of a subgraph $\Gamma\subset G$ is defined as $\partial\Gamma:=\{x\in V_\Gamma\,:\,\exists y\in V_G,\braket{xy}\notin E_\Gamma\}$.
    \item[ii)] A path is a sequence $x_i,i=1,...,n$ such that $x_i\sim x_{i+1}\,\forall i\in\{1,...,n-1\}$. We call a graph connected $\iff$ for every pair of vertices $x,y\in V$ there exists a path with endpoints $x,y$.
\end{enumerate}
Let us now zone in onto the standard $2\text{D}$ square lattice, which is a graph with vertex set $\mathcal{V}=\mathds{Z}^2$ and edge set $\mathcal{E}=\{\{x,y\}\,:\,\|x-y\|=1\}$. Further, we choose some finite\footnote{Here, finite simply means finite set of vertices $|V|<\infty$.} subgraph $G=(V,E)\subset\mathds{Z}^2$. For simplicity, unless explicitly stated otherwise, we will always choose any subgraph of $\mathds{Z}^2$ to be connected, induced and such that the complement of $G$ is connected as well. We then assign a configuration space
\begin{equation}
    \Sigma_G:=\{-1,1\}^V.
\end{equation}
This is all we need in order to introduce the nearest-neighbour Ising model with
\begin{enumerate}
    \item[i)]Free boundary conditions:
    \begin{equation}
        \mu_{G,\beta}^\mathrm{f}[\mathcal{A}]=\frac{\sum_{\sigma\in\Sigma_G}\indnb{\mathcal{A}}(\sigma)e^{-\beta H_G^\mathrm{f}(\sigma)}}{\sum_{\sigma\in \Sigma_G}e^{-\beta H_G^\mathrm{f}(\sigma)}},\quad H_G^\mathrm{f}(\sigma):=-\sum_{\braket{xy}\in E}\sigma_x\sigma_y,
    \end{equation}
    \item[ii)] With $\pm$ boundary conditions: 
    \begin{equation}
        \mu_{G,\beta}^\pm[\mathcal{A}]=\mu_{G,\beta}^\mathrm{f}[\mathcal{A}\,|\,\sigma_x=\pm1,\forall x\in\partial G].
    \end{equation}
\end{enumerate}
We denote $G_n\uparrow\mathds{Z}^2$ for a sequence of subgraphs $G_n=(V_n,E_n)\subset\mathds{Z}^2,n\in\mathds{N}$ in case that
\begin{enumerate}
    \item[i)] $G_n\subset G_{n+1}$,
    \item[ii)] $\bigcup_{n\in\mathds{N}} V_n=\mathcal{V}$.
\end{enumerate}
One can prove that for any sequence $G_n\uparrow\mathds{Z}^2$ the measures $\mu_{G_n,\beta}^\mathrm{f},\mu_{G_n,\beta}^\pm$ converge weakly to measures $\mu_\beta^\mathrm{f},\mu_\beta^\pm$ in the limit $n\to\infty$, independent of the exact choice of $G_n$. See for example \cite{friedli_velenik_2017}. Of course, we could also consider more general boundary conditions $\mu_{G,\beta}^\tau[\cdot]:=\mu_{G,\beta}^\mathrm{f}[\cdot\,|\,\sigma_x=\tau_x,\forall x\in\partial G]$ for some $\tau\in\Sigma_G$. However, a series of results in the 1970s proved that on $\mathds{Z}^2$ any infinite volume Gibbs measure compatible with the nearest-neighbour interaction at inverse temperature $\beta$ is translation invariant and can be written as a convex combination of $\mu_\beta^-$ and $\mu_\beta^+$. See \cite{CoVe2012} for an exposure to this matter. Also, in case one is unfamiliar with the meaning behind a measure being compatible with a well-behaved interaction, one can find a very detailed introduction in the book by Friedli and Velenik \cite{friedli_velenik_2017} in Chapter 6. The Ising model on $\mathds{Z}^2$ undergoes a continuous phase transition at a finite inverse temperature $\beta_c=\frac{1}{2}\log(1+\sqrt{2})$. At inverse temperatures $\beta\leq\beta_c$ there is a unique infinite volume Gibbs measure and thus we find that $\mu_\beta^-=\mu_\beta^\mathrm{f}=\mu_\beta^+$. At inverse temperatures $\beta>\beta_c$, the system spontaneously magnetizes such that $\mu_\beta^+[\sigma_0]=-\mu_\beta^-[\sigma_0]=m_\beta>0$. Here $\sigma_0$ is the spin at the origin (and hence by translation invariance any spin).

\subsection{Real-Space Renormalization Group Transformations}

Let us now start with a basic introduction on block-spin transformations, one particular class of renormalization group transformations (RGT). We shall consider the Ising model on $\mathds{Z}^2$. In order to apply a block spin transformation we partition the space into $2\times 2$-blocks and assign to the center of each block a vertex $x'$ forming a coarsened version of the same lattice.\smallskip

\begin{figure}[h!]
    \centering
    \centering
    \begin{tikzpicture}[scale= 0.7]
        \def \rows {4}  
        \def \cols {4}  
        \def \spacing {2}  
        \def \extend {0.0}  

        \filldraw[LimeGreen!20] ({8.5}, {8.5}) -- ({8.5}, {5.5}) -- ({5.5}, {5.5}) -- ({5.5}, {8.5}) -- cycle;
    
        \foreach \x in {1,...,\cols} {
            \foreach \y in {1,...,\rows} {
                \filldraw (\x*\spacing, \y*\spacing) circle (3 pt);
    
                \ifnum \x<\cols
                    \draw[thick] (\x*\spacing, \y*\spacing) -- ({(\x+1)*\spacing}, \y*\spacing);
                \fi
    
                \ifnum \y<\rows
                    \draw[thick] (\x*\spacing, \y*\spacing) -- (\x*\spacing, {(\y+1)*\spacing});
                \fi
    
                \ifnum \x=\cols
                    \draw[dashed] (\x*\spacing, \y*\spacing) -- ({(\x+\extend)*\spacing}, \y*\spacing);
                \fi
                
                \ifnum \y=\rows
                    \draw[dashed] (\x*\spacing, \y*\spacing) -- (\x*\spacing, {(\y+\extend)*\spacing});
                \fi
    
                \ifnum \x=1
                    \draw[dashed] (\x*\spacing, \y*\spacing) -- ({(\x-\extend)*\spacing}, \y*\spacing);
                \fi
    
                \ifnum \y=1
                    \draw[dashed] (\x*\spacing, \y*\spacing) -- (\x*\spacing, {(\y-\extend)*\spacing});
                \fi
            }
        }

        \filldraw[Blue] ({3}, {3}) circle (3 pt);
        \filldraw[Blue] ({3}, {7}) circle (3 pt);
        \filldraw[Blue] ({7}, {3}) circle (3 pt);
        \filldraw[Blue] ({7}, {7}) circle (3 pt);

        \draw[dashed, Blue] ({3}, {3}) -- ({3}, {7});
        \draw[dashed, Blue] ({3}, {7}) -- ({7}, {7});
        \draw[dashed, Blue] ({7}, {7}) -- ({7}, {3});
        \draw[dashed, Blue] ({7}, {3}) -- ({3}, {3});

        \draw[dashed, Blue] ({3}, {3}) -- ({3}, {1});
        \draw[dashed, Blue] ({3}, {3}) -- ({1}, {3});
        \draw[dashed, Blue] ({7}, {3}) -- ({9}, {3});
        \draw[dashed, Blue] ({7}, {3}) -- ({7}, {1});
        \draw[dashed, Blue] ({3}, {7}) -- ({3}, {9});
        \draw[dashed, Blue] ({3}, {7}) -- ({1}, {7});
        \draw[dashed, Blue] ({7}, {7}) -- ({9}, {7});
        \draw[dashed, Blue] ({7}, {7}) -- ({7}, {9});

        \draw[black][thick] ({2}, {2}) -- ({1}, {2});
        \draw[black][thick] ({2}, {2}) -- ({2}, {1});
        \draw[black][thick] ({2}, {4}) -- ({1}, {4});
        \draw[black][thick] ({2}, {6}) -- ({1}, {6});
        \draw[black][thick] ({2}, {8}) -- ({1}, {8});
        \draw[black][thick] ({2}, {8}) -- ({2}, {9});
        \draw[black][thick] ({4}, {2}) -- ({4}, {1});
        \draw[black][thick] ({6}, {2}) -- ({6}, {1});
        \draw[black][thick] ({8}, {2}) -- ({8}, {1});
        \draw[black][thick] ({8}, {2}) -- ({9}, {2});
        \draw[black][thick] ({8}, {4}) -- ({9}, {4});
        \draw[black][thick] ({8}, {6}) -- ({9}, {6});
        \draw[black][thick] ({8}, {8}) -- ({9}, {8});
        \draw[black][thick] ({8}, {8}) -- ({8}, {9});
        \draw[black][thick] ({6}, {8}) -- ({6}, {9});
        \draw[black][thick] ({4}, {8}) -- ({4}, {9});

        \node[Blue] at (7.4,7.4) {$x'$};
        \node[LimeGreen] at (8.8,5.2) {$B_{x'}$};
    \end{tikzpicture}
    
    \caption{The $2\times 2\to1$ coarse-graining geometry on $\mathds{Z}^2$.}
    \label{fig: 16 to 4 Blocking on the Square Lattice with PBCs}
\end{figure}

The idea is then to define transition rules on how to map the four block-spins to a spin on the vertex $x'$. In order to guarantee some physical relevance of the outcome, we want to put some restriction towards the choice of these rules.
\begin{enumerate}
    \item[R1:] The set of rules should be invariant under symmetry transformations of the block configurations. These include rotations as well as the $\mathds{Z}_2$-action $(s_B,\sigma)\mapsto(-s_B,-\sigma)$.
    \item[R2:] The rules must have a probabilistic interpretation, i.e.\ for any input configuration $s$ we must obtain
    \begin{equation}
        t(\uparrow|s),t(\downarrow|s)\in[0,1]\qquad;\qquad t(\uparrow|s)+t(\downarrow|s)=1.
    \end{equation}
    \item[R3:] $t$ must preserve the monotonicity of the Ising model, i.e $s\leq s'\implies t(\uparrow|s)\leq t(\uparrow|s')$.
    \item[R4:] In case that all four block-spins are parallel, also the center spin must be parallel.
\end{enumerate}
It might not be directly obvious why we want to insist on R4. We will discuss this briefly in Section \ref{Sec3}. Now, these conditions limit the set of rules to a one-parameter family with the following probabilities:

\begin{figure}[htpb!]
    \centering
    \begin{tikzpicture}[scale = 1.5]
        \draw[black] (0,0) circle (0.15);
        \draw[black] (0,1) circle (0.15);
        \draw[black] (1,0) circle (0.15);
        \draw[black] (1,1) circle (0.15);

        \draw[black][thick] (0.15,0) -- (0.85,0);
        \draw[black][thick] (0,0.15) -- (0,0.85);
        \draw[black][thick] (1,0.15) -- (1,0.85);
        \draw[black][thick] (0.15,1) -- (0.85,1);

        \node[] at (0,0) {\scriptsize$\uparrow$};
        \node[] at (0,1) {\scriptsize$\uparrow$};
        \node[] at (1,0) {\scriptsize$\uparrow$};
        \node[] at (1,1) {\scriptsize$\uparrow$};
        \node[] at (0.5,0.5) {\scriptsize$\uparrow$};
        \node[] at (0.5,-0.3) {$1$};

        \begin{scope}[shift={(2,0)}]
            \draw[black] (0,0) circle (0.15);
            \draw[black] (0,1) circle (0.15);
            \draw[black] (1,0) circle (0.15);
            \draw[black] (1,1) circle (0.15);
    
            \draw[black][thick] (0.15,0) -- (0.85,0);
            \draw[black][thick] (0,0.15) -- (0,0.85);
            \draw[black][thick] (1,0.15) -- (1,0.85);
            \draw[black][thick] (0.15,1) -- (0.85,1);
    
            \node[] at (0,0) {\scriptsize$\uparrow$};
            \node[] at (0,1) {\scriptsize$\uparrow$};
            \node[] at (1,0) {\scriptsize$\uparrow$};
            \node[] at (1,1) {\scriptsize$\downarrow$};
            \node[] at (0.5,0.5) {\scriptsize$\uparrow$};
            \node[] at (0.5,-0.25) {$1-\alpha$};
        \end{scope}

        \begin{scope}[shift={(4,0)}]
            \draw[black] (0,0) circle (0.15);
            \draw[black] (0,1) circle (0.15);
            \draw[black] (1,0) circle (0.15);
            \draw[black] (1,1) circle (0.15);
    
            \draw[black][thick] (0.15,0) -- (0.85,0);
            \draw[black][thick] (0,0.15) -- (0,0.85);
            \draw[black][thick] (1,0.15) -- (1,0.85);
            \draw[black][thick] (0.15,1) -- (0.85,1);
    
            \node[] at (0,0) {\scriptsize$\uparrow$};
            \node[] at (0,1) {\scriptsize$\uparrow$};
            \node[] at (1,0) {\scriptsize$\uparrow$};
            \node[] at (1,1) {\scriptsize$\downarrow$};
            \node[] at (0.5,0.5) {\scriptsize$\downarrow$};
            \node[] at (0.5,-0.25) {$\alpha$};
        \end{scope}

        \begin{scope}[shift={(6,0)}]
            \draw[black] (0,0) circle (0.15);
            \draw[black] (0,1) circle (0.15);
            \draw[black] (1,0) circle (0.15);
            \draw[black] (1,1) circle (0.15);
    
            \draw[black][thick] (0.15,0) -- (0.85,0);
            \draw[black][thick] (0,0.15) -- (0,0.85);
            \draw[black][thick] (1,0.15) -- (1,0.85);
            \draw[black][thick] (0.15,1) -- (0.85,1);
    
            \node[] at (0,0) {\scriptsize$\uparrow$};
            \node[] at (0,1) {\scriptsize$\downarrow$};
            \node[] at (1,0) {\scriptsize$\uparrow$};
            \node[] at (1,1) {\scriptsize$\downarrow$};
            \node[] at (0.5,0.5) {\scriptsize$\uparrow$};
            \node[] at (0.5,-0.25) {$1/2$};
        \end{scope}

        \begin{scope}[shift={(8,0)}]
            \draw[black] (0,0) circle (0.15);
            \draw[black] (0,1) circle (0.15);
            \draw[black] (1,0) circle (0.15);
            \draw[black] (1,1) circle (0.15);
    
            \draw[black][thick] (0.15,0) -- (0.85,0);
            \draw[black][thick] (0,0.15) -- (0,0.85);
            \draw[black][thick] (1,0.15) -- (1,0.85);
            \draw[black][thick] (0.15,1) -- (0.85,1);
    
            \node[] at (0,0) {\scriptsize$\uparrow$};
            \node[] at (0,1) {\scriptsize$\downarrow$};
            \node[] at (1,0) {\scriptsize$\downarrow$};
            \node[] at (1,1) {\scriptsize$\uparrow$};
            \node[] at (0.5,0.5) {\scriptsize$\uparrow$};
            \node[] at (0.5,-0.25) {$1/2$};
        \end{scope}

    \end{tikzpicture}
    \caption{The possible transition values for the $2\times 2\to1$ coarse-graining, respecting R1-R4. The values of the other possible block-configurations are determined through R1 and R2.}
    \label{fig:placeholder}
\end{figure}

In order to comply with R3, the allowed range is $\alpha\in[0,1/2]$. Once we have specified the transition probabilities $t_\alpha(\sigma_{x'}|s_{B_{x'}})$ we can define a product probability kernel
\begin{equation}
    \mathcal{T}_\alpha=\prod_{x'\in\mathds{Z}^2}t_\alpha(\sigma_{x'}|s_{B_{x'}}).
\end{equation}
The corresponding RGT is defined as the map
\begin{equation}
    \mu\mapsto \mathcal{T}_\alpha\mu,\qquad \mathcal{T}_\alpha\mu(A):=\int_\Omega \mathcal{T}_\alpha(A|s)\mu(ds).
\end{equation}
This kernel then creates a discrete-time dynamical system on the space of probability measures $\mathscr{M}_1(\Sigma_{\mathds{Z}^2})$. Although these dynamics are always well-defined, the physical theory of RGT goes beyond that and expects that we can always lift these dynamics to a corresponding dynamical system on a space of well-behaved interactions (interactions are the infinite volume equivalent to Hamilton functions). This is not the case, as the authors of \cite{van_Enter_1993} have shown rigorously in many different examples. Generally, this lift is not globally well-defined on any reasonable space of interactions. The issue is that measures compatible with an interaction may lose this property after acting on them with the RGT. In particular, this already happens in some part of the low-temperature region of the nearest-neighbour Ising model itself after a single application of the RGT. It remains an open question, to find the maximal support of any real-space RGT on a suitable space of well-behaved interactions and whether this will include the critical point of the Ising model/general systems at criticality or not. Nevertheless, the measure itself already carries most of the information and even allows to reconstruct the interaction, as long as it is compatible with an interaction of sufficient regularity (again, see \cite{van_Enter_1993}). Thus, it is an interesting question whether we can establish the expected RG-flow, at least as a property of measures. One expects that the scaling limit is given by
\begin{equation}\label{Eq2}
    \lim_{n\to\infty}\mathcal{T}_\alpha^n\mu_\beta^\pm=\begin{cases}
        \mu_{\beta=0}&\beta<\beta_c\\
        \mu_*&\beta=\beta_c\\
        \mu_{\beta=\infty}^\pm&\beta>\beta_c
    \end{cases}
\end{equation}
where $\mu_*$ is some non-trivial fixed point measure (non-trivial may be understood as having non-vanishing truncated correlation functions). It is a straightforward exercise to prove that $\mathcal{T}_\alpha\mu_{\beta=0}=\mu_{\beta=0}$ as well as $\mathcal{T}_\alpha\mu_{\beta=\infty}^\pm=\mu_{\beta=\infty}^\pm$. Subsequently, we will refer to them as the trivial high- resp. low-temperature fixed points. Recall that within $\beta\in[0,\beta_c]$ there is a unique infinite volume Gibbs measure such that the scaling limit in this regime is also independent on the choice of boundary condition. In the low-temperature phase $\beta>\beta_c$ we may use that, by linearity of the RGT, we find
\begin{equation}
    \lim_{n\to\infty}\mathcal{T}_\alpha^n\left(s\mu_\beta^++(1-s)\mu_\beta^-\right)=s\mu_{\beta=\infty}^++(1-s)\mu_{\beta=\infty}^-
\end{equation}
which determines the RG-flow for all possible infinite volume nearest-neighbour Ising states on $\mathds{Z}^2$.

\subsection{Main Results}\label{Sec1.3}

In Section \ref{Sec4} we will establish that for $\alpha\in(0,1/4)$ the RG-flow is in agreement with (\ref{Eq2}) for all $\beta\in\mathds{R}_{\geq0}\backslash\{\beta_c\}$. For $\alpha\in(1/4,1/2)$, we will further prove that
\begin{equation}
    \lim_{n\to\infty}\mathcal{T}_\alpha^n\mu_\beta^\pm=\mu_{\beta=0}
\end{equation}
for all $\beta\in\mathds{R}_{\geq0}$ thus refuting the expected behaviour. Preceding these results, we will also describe the much simpler case $\alpha=1/4$ where the RGT becomes linear and the scaling limit is given by
\begin{equation}
    \lim_{n\to\infty}\mathcal{T}_{1/4}^n\mu_\beta^\pm=\nu_\beta^\pm:=\prod_{x\in\mathds{Z}^2}\left(\frac{1+m_\beta}{2}\delta_{\pm1}(\sigma_x)+\frac{1-m_\beta}{2}\delta_{\mp1}(\sigma_x)\right).
\end{equation}
where $m_\beta$ is the magnetization of $\mu_\beta^+$. At $\beta\leq\beta_c$ we find that $m_\beta=0$ and thus $\nu^\pm_{\beta\leq\beta_c}=\mu_{\beta=0}$. Therefore, also this case does not admit the expected picture in the scaling limit. Crucially, both in the linear case $\alpha=1/4$ as well as in the suplinear case $\alpha\in(1/4,1/2)$ we find that $\lim_{n\to\infty}\mathcal{T}_\alpha^n\mu_{\beta_c}=\mu_{\beta=0}$. One characteristic of the critical point $\mu_{\beta_c}$ is that the decay of correlations with distance follows a power law. This decay behaviour cannot vanish after finitely many iterations of the RGT as we will showcase for $\alpha=1/4$. Therefore, the correlation length (which governs the decay rate for systems with exponentially decaying correlations) remains infinite for the entirety of the flow and only vanishes in the limit.\smallskip

The cases $\alpha=0,1/2$ are expected to show the same behaviour as the $\alpha<1/4$ (resp. $\alpha>1/4)$ cases. However, they are just out of reach for our method. The main open question, which is unfortunately also the most interesting one, is what happens for $\alpha\in[0,1/4)$ and $\beta=\beta_c$. There the simple estimates of correlations fail and some more sophisticated methods will be required to obtain any meaningful results in the future.

\subsection{Geometrical Representation of the Ising Model}

In 1972 Fortuin and Kasteleyn \cite{Fortuin:1971dw} introduced the random cluster model which serves as the baseline for geometrical representations of lattice spin models. The results presented in this section can also be found in \cite{10.1007/978-3-030-32011-9_2} including proofs. The starting point is once more some finite subgraph $G=(V,E)\subset\mathds{Z}^2$. Now, instead of assigning spins to vertices, we consider bond variables living on the edges of $G$. The configuration space therefore is
\begin{equation}
    \Omega_G:=\{0,1\}^E.
\end{equation}
We call an edge $e\in E$ open (resp. closed) if $\omega_e=1$ (resp. $\omega_e=0$)\footnote{In physics literature, one may find the notions of a bond being put ($\omega_e=1$) or not put ($\omega_e=0$).}. For a given bond configuration $\omega\in\Omega_G$ we denote the number of open/closed edges as
\begin{equation}
    o(\omega):=\sum_{e\in E}\ind{\omega_e=1},\qquad c(\omega):=\sum_{e\in E}\ind{\omega_e=0}.
\end{equation}
Further, we also require the count of clusters. A cluster is a maximally, connected domain of open edges. For a precise definition of the number of clusters we must introduce boundary conditions. Similar to the Ising model, we will only be using a very limited variety of boundary conditions. These are 
\begin{enumerate}
    \item[i)] Free boundary conditions, where all edges in $\mathcal{E}\backslash E$ are closed, denoted as $\omega^0$.
    \item[ii)] Wired\footnote{If we also allowed subgraphs with a non-connected complement, the definition of wired boundary conditions would be slightly more subtle as it would collapse all potentially disjoint boundaries into a single cluster. However, for our purposes we have no need for this scenario.} boundary conditions, where all edges in $\mathcal{E}\backslash E$ are open, denoted as $\omega^1$.
\end{enumerate}
This now allows us to define the set of all clusters for a given set of bond variables
\begin{equation}
    \mathcal{K}(\omega^i):=\{C\subseteq\mathcal{V}\,:\,C\,\,\mathrm{is\,\,a\,\,maximally\,\,connected\,\,component\,\,of\,\,open\,\,edges\,\,regarding}\,\,\omega^i\}.
\end{equation}
Finally, we can write down the last ingredient required for the definition of the random cluster model, namely
\begin{equation}
    k(\omega^i):=|\mathcal{K}(\omega^i)|.
\end{equation}
The general random cluster measure on $G$ takes the form
\begin{equation}
    \phi_{G,p,q}^{i}[\mathcal{A}]:=\frac{\sum_{\omega\in\Omega_G}\indnb{\mathcal{A}}(\omega)p^{o(\omega)}(1-p)^{c(\omega)}q^{k(\omega^i)}}{\sum_{\omega\in\Omega_G}p^{o(\omega)}(1-p)^{c(\omega)}q^{k(\omega^i)}}.
\end{equation}
The parameter $q$ can, a priori, be varied continuously in $\mathds{R}_{>0}$. It can be interpreted as a parameter that controls how much the model favours many small clusters for $q>1$ or fewer large clusters for $q<1$. However, there is a special relation between $q\in\mathds{N}_{\geq2}$ and the $q$-state Potts models. Our focus lies entirely on the Ising model which corresponds to the $2$-state Potts model. Therefore, we shall fix $q=2$ and drop the subscript $q$ from now on. This relation can be made manifest in the form of a coupling, sometimes known as the Edwards-Sokal coupling. Let us consider the combined configuration space of spins and bonds
\begin{equation}
    X_G:=\Omega_G\times\Sigma_G,\qquad \Omega_G=\{0,1\}^E,\Sigma_G:=\{-1,1\}^V.
\end{equation}
We then obtain a probability measure
\begin{equation}
    \mathds{P}_{G,p}^\#(\omega,\sigma):=\frac{p^{o(\omega)}(1-p)^{c(\omega)}\cdot\Psi_\omega^\#(\sigma)}{\sum_{\omega,\sigma}p^{o(\omega)}(1-p)^{c(\omega)}\cdot\Psi_\omega^\#(\sigma)}.
\end{equation}
Let us first focus on the free boundary conditions. For any $\omega\in\Omega_G$, $\Psi_\omega^0(\sigma)$ is a measure on $\Sigma_G$ defined as 
\begin{equation}
    \Psi_\omega^0(\sigma):=\indnb{h^0(\omega)}(\sigma),
\end{equation}
where $h^0$ is a map
\begin{equation}
    h^0:\Omega_G\to2^{\Sigma_G},\quad \omega\mapsto\{\sigma\in\Sigma_G\,:\,x\leftrightarrow y\implies\sigma_x=\sigma_y\,\forall x,y\in V\}.
\end{equation}
$x\leftrightarrow y$ is the statement that there exists a path of open edges in $E$ connecting $x$ and $y$ regarding $\omega$. In words, this condition tells us that connected spins must be parallel. Thus, $\Psi_\omega^0(\sigma)$ uniformly distributes spins among the clusters of $\omega^0$. With a quick computation, we see that
\begin{equation}
    \mathds{P}_{G,p}^0(\omega)=\phi_{G,p}^0(\omega)\qquad ;\qquad \mathds{P}_{G,p}^0(\sigma)=\mu_{G,\beta_p}^\mathrm{f}(\sigma),
\end{equation}
for $\beta_p=-\frac{1}{2}\ln(1-p)$. The marginals of the coupling $\mathds{P}_{G,p}^0$ are the random cluster model with free boundary conditions and the nearest-neighbour Ising model with free boundary conditions respectively. For the $\pm$ boundary conditions, we consider
\begin{equation}
    h^\pm:\Omega_G\to2^{\Sigma_G},\quad \omega\mapsto\{\sigma\in\Sigma_G\,:\,x\leftrightarrow y\implies \sigma_x=\sigma_y,x\leftrightarrow\partial G\implies\sigma_x=\pm1\,\forall x,y\in V\}.
\end{equation}
Thus, in addition to connected spins being parallel, we now also demand that everything connected to the boundary has fixed spin $\pm1$. The marginals are then given by
\begin{equation}
    \mathds{P}_{G,p}^\pm(\omega)=\phi_{G,p}^1(\omega)\qquad;\qquad\mathds{P}_{G,p}^\pm(\sigma)=\mu_{G,\beta_p}^\pm(\sigma),
\end{equation}
where the relation between $p$ and $\beta$ remains the same as before. This coupling allows us to transfer computations in the Ising model to questions about connectivity in the random cluster model. Some useful special cases are:
\begin{enumerate}
    \item[i)] The two-point function satisfies\footnote{$\phi_{G,p}^0[x\leftrightarrow y]$ corresponds to the probability of the event $\{x\leftrightarrow y\}$ under the probability measure $\phi_{G,p}^0$. When we write $\{\mathrm{statement}\}$ we refer to the set of all configurations that satisfy said statement.}
    \begin{equation}
        \mu_{G,\beta_p}^\mathrm{f}[\sigma_x\sigma_y]=\mathds{P}_{G,p}^\mathrm{f}[\sigma_x\sigma_y]=\phi_{G,p}^0[x\leftrightarrow y].
    \end{equation}
    This can be proven very quickly via
    \begin{equation}
        \mathds{P}_{G,p}^\mathrm{f}[\sigma_x\sigma_y]=\mathds{P}_{G,p}^\mathrm{f}\big[\mathds{P}_{G,p}^\mathrm{f}[\sigma_x\sigma_y\,|\,\omega]\big].
    \end{equation}
    In case that $x\overset{\omega}{\leftrightarrow}y$ their spins must be aligned and we find $\mathds{P}_{G,p}^\mathrm{f}[\sigma_x\sigma_y\,|\,\omega]=1$. On the other hand, when they are not connected, their spins are independent such that
    \begin{equation}
        \mathds{P}_{G,p}^\mathrm{f}[\sigma_x\sigma_y\,|\,\omega]=\mathds{P}_{G,p}^\mathrm{f}[\sigma_x\,|\,\omega]\cdot\mathds{P}_{G,p}^\mathrm{f}[\sigma_y\,|\,\omega]=0.
    \end{equation}
    This generalizes to arbitrary correlation functions in the following way:
    \begin{equation}
        \mu_{G,\beta_p}^\mathrm{f}[\sigma_A]=\mathds{P}_{G,p}^\mathrm{f}[\sigma_A]=\phi_{G,p}^\mathrm{f}[\mathcal{A}],\qquad \sigma_A:=\prod_{x\in A}\sigma_x,A\subset G
    \end{equation}
    for the event $\mathcal{A}=\{\omega\,:\,|A\cap C|\in2\mathds{N}\,\forall C\in\mathcal{K}(\omega)\}$.
    \item[ii)] The pure $\pm$ boundary conditions of the Ising model relate to wired boundary conditions in the random cluster model. We find
    \begin{equation}
        \mu_{G,\beta_p}^+[\sigma_x]=-\mu_{G,\beta_p}^-[\sigma_x]=\phi_{G,p}^1[x\leftrightarrow\partial G].
    \end{equation}
    
\end{enumerate}
The question now is how the Ising phase transition characterizes itself in the random cluster language. Here, one can prove that for all values of $p\in[0,1]$, the measures $\phi_{G_n,p}^\#$ converge weakly to a measure $\phi_p^\#$ along the limit $G_n\uparrow\mathds{Z}^2$. Further, perhaps a bit surprisingly, we find that $\phi_p^0=\phi_p^1=\phi_p$ for all $p\in[0,1]$. The limit measure $\phi_p$ is translation invariant. The Ising phase transition expresses itself through the following result. Set $p_c:=1-e^{-2\beta_c}$. Then we find that\footnote{The event $\{0\leftrightarrow\infty\}$ is given by $\{0\leftrightarrow\infty\}:=\bigcap_{n\in\mathds{N}}\{\exists x\in\mathds{Z}^2\,:\,\|x\|\geq n\wedge0\leftrightarrow x\}$.}
\begin{enumerate}
    \item[i)] $\phi_p[0\leftrightarrow\infty]=0\,\forall p\in[0,p_c]$. Further, for any $p<p_c$ there exist constants $C_p,c_p>$ such that
    \begin{equation}
        \phi_p[0\leftrightarrow x]\leq C_pe^{-c_p\|x\|}.
    \end{equation}
    This implies that
    \begin{enumerate}
        \item[a)] $\chi_p:=\sum_{x\in\mathcal{V}}\phi_p[0\leftrightarrow x]<\infty$,
        \item[b)] There exist constants $C_p',c_p'>0$ such that $\phi_p[\exists y\,:\,0\leftrightarrow y,\|y\|>r]\leq C_p'e^{-c_p'r}$.
    \end{enumerate}
    At the critical point the decay becomes polynomial and the susceptibility diverges $\chi_{p_c}=\infty$. The scaling is given by $\phi_{p_c}[0\leftrightarrow x]=\|x\|^{-1/4+o(1)}$.
    \item[ii)] In the supercritical (or broken) phase $p>p_c$ there exists $\phi_p$-a.s.\footnote{The abbreviation $\mu$-a.s. stands for $\mu$-almost surely and, in case of probability measures, refers to events that have a probability of one regarding $\mu$.} exactly one infinitely large cluster, i.e.\  $\phi_p[\exists!\,C\,:\,|C|=\infty]=1\,\forall p\in(p_c,1]$ and in turn $\phi_p[0\leftrightarrow\infty]=:\theta_p>0\,\forall p\in(p_c,1]$. Further, this is related to the magnetization via $\theta_p=m_{\beta(p)}$. 
\end{enumerate}
Another property that we will make use of later on is that $\phi_p$ is mixing for all values of $p$ even at $p_c$ (although, of course the mixing rate is much slower at $p_c$). Mixing refers to the property that for local events $\mathcal{A},\mathcal{B}$ we find
\begin{equation}
    \lim_{\|x\|\to\infty}\phi_p[\mathcal{A}\cap\tau_x\mathcal{B}]=\phi_p[\mathcal{A}]\phi_p[\mathcal{B}],
\end{equation}
where $\tau_x$ is a translation by $x$. Before we end this discussion by diving into duality and stochastic domination, we shortly want to establish the following result we will need later on.

\begin{lemma}\label{Lem1}
    Let $p<p_c$. Then we find $\phi_p[|C_0|^n]<\infty$ where $C_0\in\mathcal{K}(\omega)$ denotes the cluster containing the origin.
\end{lemma}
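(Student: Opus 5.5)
The plan is to reduce all moments of $|C_0|$ to the exponential tail of the cluster radius, point i) b) above. If $C_0$ has some vertex $y$ with $\|y\|>r$, then $0\leftrightarrow y$. So on the complementary event, $C_0$ lies in the box $\Lambda_r:=\{x\in\mathds{Z}^2:\|x\|\le r\}$. This yields the deterministic inclusion
\begin{equation*}
    \{|C_0|> |\Lambda_r|\}\subseteq\{\exists y\,:\,0\leftrightarrow y,\|y\|>r\},
\end{equation*}
where $|\Lambda_r|\le (2r+1)^2$, whichever norm is used, up to a constant. Inserting the bound from i) b), I get
\begin{equation*}
    \phi_p\big[|C_0|>(2r+1)^2\big]\le C_p'e^{-c_p'r}\qquad\forall r\in\mathds{N}.
\end{equation*}
Equivalently, $\phi_p[|C_0|>k]\le C''e^{-c''\sqrt{k}}$ for all $k\ge1$, with suitable constants depending on $p$. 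This is a stretched-exponential tail, which already suffices for all moments.

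Next I would write the moment via the layer-cake formula for a nonnegative integer variable:
\begin{equation*}
    \phi_p[|C_0|^n]=\sum_{k\ge0}\big((k+1)^n-k^n\big)\,\phi_p[|C_0|>k]\le \sum_{k\ge0} n(k+1)^{n-1}\,\phi_p[|C_0|>k].
\end{equation*}
Alternatively, I would sum over shells: $\phi_p[|C_0|^n]\le \sum_{r\ge0}(2r+3)^{2n}\,\phi_p[\mathrm{rad}(C_0)\in(r,r+1]]$. Either way the series is dominated by $\sum_k k^{n-1}e^{-c''\sqrt k}<\infty$. Finiteness also covers the event $|C_0|=\infty$: that event is contained in $\{\exists y: 0\leftrightarrow y,\|y\|>r\}$ for every $r$, so it has probability zero by the same bound. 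Hence $|C_0|$ is a.s. finite and all its moments are finite.

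The only genuine input is the exponential decay i) b), which the paper lists as a consequence of the sharpness statement $\phi_p[0\leftrightarrow x]\le C_pe^{-c_p\|x\|}$ for $p<p_c$. If one insisted on deriving b) from the two-point bound directly, that would be the main step. A union bound over the boundary sphere gives
\begin{equation*}
    \phi_p[\exists y: 0\leftrightarrow y,\|y\|> r]\le\sum_{\|y\|=r+1}\phi_p[0\leftrightarrow y]\le C r\,e^{-c_p r}.
\end{equation*}
Here the boundary sphere is taken in the lattice-path sense, which any path from $0$ to outside $\Lambda_r$ must cross. The polynomial prefactor is absorbed by slightly decreasing the constant, so no real obstacle remains. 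The only care needed is the constant bookkeeping when converting from radius to volume via $|C_0|\le|\Lambda_{\mathrm{rad}(C_0)}|$.
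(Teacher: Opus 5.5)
Your proof is correct and follows the same route as the paper. Both arguments convert the size of $C_0$ into a radius: you use $|C_0|\le|\Lambda_{\mathrm{rad}(C_0)}|$, and the paper uses the equivalent bound $\max_{x\in C_0}\|x\|\ge\sqrt{|C_0|}/4$. Both then apply the exponential radius tail i) b) to get a stretched-exponential tail $e^{-c\sqrt{k}}$ and sum it against $k^n$. Your layer-cake bookkeeping and your remark that $\{|C_0|=\infty\}$ is a null event are slightly more careful than the paper's write-up, which also drops the minus sign in the exponent $e^{-c_p'\sqrt{k}}$.
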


\begin{proof}
    Let us quickly compute
    \begin{equation}
        \phi_p[|C_0|^n]=\sum_{k=1}^\infty k^n\phi_p[|C_0|=k]\leq\sum_{k=1}^\infty k^n\phi_p[|C_0|\geq k].
    \end{equation}
    Additionally, we may use that on the square lattice
    \begin{equation}
        \max_{x\in C_0}\|x\|\geq\frac{\sqrt{|C_0|}}{4}.
    \end{equation}
    Therefore, we find that
    \begin{equation}
        \sum_{k=1}^\infty k^n\phi_p[|C_0|\geq k]\leq\sum_{k=1}^\infty k^n\phi_p\left[\exists y\,:\,0\leftrightarrow y,\|y\|\geq\sqrt{k}/4\right]\leq\sum_{k=1}^\infty k^nC_p'e^{c_p'\sqrt{k}}<\infty.
    \end{equation}
\end{proof}

\subsubsection{Duality}

A very useful concept that is special in two dimensions is the duality transformation for the random cluster model. Let us consider an identical copy of $\mathds{Z}^2$ denoted as $(\mathds{Z}^2)^*$ which we shift by $(1/2,1/2)$. Now, for every edge $e\in\mathcal{E}$ there exists a unique dual edge $e^*\in\mathcal{E}^*$ given by the edge that crosses $e$. Thus, for any subgraph $G\subset\mathds{Z}^2$ we can construct a dual graph $G^*=(V^*,E^*)$ given by all dual edges $e^*\in E^*\iff e\in E$ and their endpoints. Let $\mathds{P}$ denote the coupling
\begin{equation}
    \mathds{P}_G(\omega,\omega^*)=\phi_{G,p}^\#(\omega)\cdot\ind{\omega^*_{e^*}=1-\omega_e}(\omega^*).
\end{equation}
Then the marginal regarding $\omega^*$ satisfies
\begin{equation}
    \mathds{P}_G(\omega^*)=\phi^{1-\#}_{G^*,p^*}(\omega^*)
\end{equation}
for $p^*=(1-p)/(1-p/2)$. In particular, we find that $p=p_c\implies p^*=p_c$ and $p>p_c\iff p^*<p_c$ and vice versa. One useful application of duality is the fact that in the supercritical phase $p>p_c$ the finite clusters themselves behave subcritically. This has the consequence that the susceptibility of finite clusters is itself finite, i.e.\
\begin{equation}
    \chi_p^f:=\sum_{x\in\mathcal{V}}\phi_p[0\leftrightarrow x,0\nlr\infty]<\infty.
\end{equation}
This can be proven using the insight that for any finite cluster $C\in\mathcal{K}(\omega)\backslash\{C_\infty\}$ there must exist a dual cluster $C^*\in\mathcal{K}(\omega^*)$ that fully encircles $C$. Since the dual system is subcritical, the size of dual clusters is exponentially suppressed. Note, however, that the exponential decay of finite clusters in the supercritical phase has been established in all dimensions and is not restricted to the duality argument in $d=2$ (see \cite{Duminil-Copin} for a discussion of this result).

\subsubsection{Stochastic Domination}

The last concept of general Ising/RCM theory we want to look into is stochastic domination. In order to define it properly, we first have to introduce the notion of increasing functions. The single-spin space $\{-1,1\}$ comes with a natural ordering $-1<1$. This then imposes a partial ordering on the configuration space $\Sigma_G$
\begin{equation}
    \sigma\leq\tau\iff\sigma_x\leq\tau_x\,\,\forall x\in V,\quad \sigma,\tau\in\Sigma_G.
\end{equation}
We call a function $f:\Sigma_G\to\mathds{R}$ increasing in case that
\begin{equation}
    \sigma\leq\tau\implies f(\sigma)\leq f(\tau).
\end{equation}
Now let $\mu,\nu\in\mathscr{M}(\Sigma_G)$. We say $\nu$ stochastically dominates $\mu$ denoted as $\mu\preceq\nu$ in case that $\mu[f]\leq\nu[f]$ for all increasing functions $f$. For an arbitrary subgraph $G\subset\mathds{Z}^2$ (here $G$ must not necessarily be connected) and boundary conditions $\tau\leq\tau'$ as well as $\beta\leq\beta'$ we find that $\mu_{G,\beta}^\tau\preceq\mu_{G,\beta'}^{\tau'}$.

\subsection{Some Additions Regarding Measure Theory}

Although we generally do not want to spend too much time dealing with measure theoretic technicalities, it is not possible to avoid them completely. A much more careful and complete introduction can be found in Chapter 6 of \cite{friedli_velenik_2017}.\\

As long as the configuration space $\Omega$ is finite, things are quite trivial. A probability measure can be identified with a map $\mu:\Omega\to[0,1]$ satisfying $\sum_{\omega\in\Omega}\mu(\omega)=1$. The probability of an event $\mathcal{A}\subseteq\Omega$ is then simply given by
\begin{equation}
    \mu[\mathcal{A}]=\sum_{\omega\in\Omega}\indnb{\mathcal{A}}(\omega)\mu(\omega).
\end{equation}
However, things become more subtle when we consider for example the configuration space $\Sigma_{\mathds{Z}^2}=\{-1,1\}^{\mathds{Z}^2}$. Here, we no longer identify the $\sigma$-algebra with the entire power set $2^{\Sigma_{\mathds{Z}^2}}$ but instead consider the $\sigma$-algebra generated by local events
\begin{equation}
    \mathscr{F}:=\sigma\Big(\{\mathcal{A}\subset\Sigma_{\mathds{Z}^2}\,:\,\exists\Lambda\subset\mathds{Z}^2,|\Lambda|<\infty,\mathcal{A}_\Lambda\subseteq\Sigma_\Lambda\,\,|\,\,s\in\mathcal{A}\iff s|_\Lambda\in\mathcal{A}_\Lambda\}\Big).
\end{equation}
We denote the space of all probability measures on the measurable space $(\Sigma_{\mathds{Z}^2},\mathscr{F})$ as $\mathscr{M}_1(\Sigma_{\mathds{Z}^2})$. Two measures coincide in $\mathscr{M}_1(\Sigma_{\mathds{Z}^2})$ iff they coincide on a generating set, in our case on local events. Equivalently, for $\mu,\nu\in\mathscr{M}_1(\Omega)$ we find
\begin{equation}
    \mu=\nu\iff \mu[f]=\nu[f]\quad \forall f\in\mathcal{B}_\mathrm{loc},
\end{equation}
where $\mathcal{B}_\mathrm{loc}$ is the space of local functions. A function $f:\Sigma_{\mathds{Z}^2}\to\mathds{R}$ is called local in case that there exists some finite subset $\Lambda\subset\mathds{Z}^2$ such that $f(\sigma)=f(\tau)$ whenever $\sigma|_\Lambda=\tau|_\Lambda$. Furthermore, a sequence of measures $(\mu_n)_{n\in\mathds{N}}\subset\mathscr{M}_1(\Sigma_{\mathds{Z}^2})$ converges weakly to some limit measure $\mu\in\mathscr{M}_1(\Sigma_{\mathds{Z}^2})$ in case that
\begin{equation}
    \mu_n[f]\to\mu[f],n\to\infty\quad\forall f\in\mathcal{B}_\mathrm{loc}.
\end{equation}
This relates nicely to correlation functions via the fact that for each $\Lambda\subset\mathds{Z}^2,|\Lambda|<\infty$ the set $\{\sigma_A,A\subseteq\Lambda\}$ is a basis of the space of functions depending only on spins in $\Lambda$. The upshot of this is that
\begin{equation}
    \mu_n\to\mu,n\to\infty\quad\iff\quad \mu_n[\sigma_A]\to\mu[\sigma_A],n\to\infty,\,\,\forall A\subset\mathds{Z}^2,|A|<\infty.
\end{equation}
Therefore, we can prove weak convergence of a sequence of measures purely by proving the convergence of correlation functions. For the remainder of this article, whenever we mention convergence of a sequence of measures we refer to weak convergence. One more useful relation is that we can write any correlation function as a linear combination of increasing functions via
\begin{equation}
    \sigma_A=\sum_{S\subseteq A}(-1)^{|S|}\prod_{x\in S}\ind{\sigma_x=-1}\prod_{y\in A\backslash S}\ind{\sigma_y=+1}=\sum_{S\subseteq A}(-1)^{|S|}\prod_{x\in S}(1-\ind{\sigma_x=+1})\prod_{y\in A\backslash S}\ind{\sigma_y=+1}.
\end{equation}
Therefore, we may instead prove weak convergence via proving convergence on increasing functions.

\section{Real Space RGT as a Random Cluster Model}\label{Sec2}

The general idea of this section is to reinterpret real-space RGTs not as maps acting on measures but as measures set up on a multi-layered graph. Doing so, allows us to compute correlations between coarse spins through considering connectivity arguments similar to the Edwards-Sokal coupling for correlations between Ising spins.

\subsection{RG-Admissible Graphs}\label{Sec2.1}

The first step towards the random cluster representation for block spin transformations is to understand the corresponding graphs we will use for these representations. The key idea is that the graph structure must allow for appropiate cluster connections between the block-spin and the underlying fine spins. The starting point is some countable base graph $L=(V(L),E(L))$. We then construct a graph $G$ with vertex-set
\begin{equation}
    V(G)=V(L)\times\mathds{N}.
\end{equation}
We may denote the vertices on the $k$th level as $V_k(G)=\{(x,k)\,:\,x\in L\}\subset V(G)$. In order to give a proper definition of the edge set of $G$, we first want to decompose it into horizontal and vertical edges
\begin{equation}
    E(G)=E_\mathrm{h}(G)\cup E_\mathrm{v}(G).
\end{equation}
Their definition is given as follows:
\begin{enumerate}
    \item[i)] The horizontal edges $E_\mathrm{h}(G)=E(L)\times\mathds{N}$ satisfy
    \begin{equation}
        (x,k)\sim_G(y,k)\iff x\sim_L y,\quad k\in\mathds{N}.
    \end{equation}
    \item[ii)] $E_\mathrm{v}(G)$ contains vertical edges between vertices $(x,n)$ and $(y,n+1)$, respecting the coarse-graining geometry we want our transformation to implement. Of course, it should be homogeneous throughout the entire graph, i.e.\
    \begin{equation}
        (x,k)\sim (y,k+1)\iff (x,l)\sim (y,l+1)\quad\forall k,l\in\mathds{N}.
    \end{equation}
    Further, we also demand that
    \begin{equation}
        (y,k+1)\sim (x,k)\implies (y',k+1)\not\sim(x,k)\,\,\forall y'\in V(L)\backslash\{y\},
    \end{equation}
    which guarantees that the corresponding blocks of the RGT are non-overlapping.
\end{enumerate}
We shall refer to graphs of this form as RG-admissible graphs. In order to get a better grasp of this definition, let us look at two examples:
\begin{enumerate}
    \item[i)] For the first example we shall consider $L=(\mathds{Z},\{\{z,z+1\}\,:\,z\in\mathds{Z}\})$. Further, the vertical edges are chosen in a way that we obtain a two to one coarse-graining,
    \begin{equation}
        E_{\mathrm{v}}(G)=\{\{(2z,k),(z,k+1)\},\{(2z+1,k),(z,k+1)\}\,:\,z\in\mathds{Z},k\in\mathds{N}\}.
    \end{equation}

    \begin{figure}[h!]
        \centering
        \begin{tikzpicture}
            
            \draw[black][-] (-5.5,0) to (6.5,0);
            \fill[black] (-5,0) circle (0.1);
            \fill[black] (-4,0) circle (0.1);
            \fill[black] (-3,0) circle (0.1);
            \fill[black] (-2,0) circle (0.1);
            \fill[black] (-1,0) circle (0.1);
            \fill[black] (0,0) circle (0.1);
            \fill[black] (1,0) circle (0.1);
            \fill[black] (2,0) circle (0.1);
            \fill[black] (3,0) circle (0.1);
            \fill[black] (4,0) circle (0.1);
            \fill[black] (5,0) circle (0.1);
            \fill[black] (6,0) circle (0.1);
    
            \draw[black][-] (-5.5,1) to (6.5,1);
            \draw[black][-] (-5,0) to (-4.5,1);
            \draw[black][-] (-4,0) to (-4.5,1);
            \draw[black][-] (-3,0) to (-2.5,1);
            \draw[black][-] (-2,0) to (-2.5,1);
            \draw[black][-] (-1,0) to (-0.5,1);
            \draw[black][-] (0,0) to (-0.5,1);
            \draw[black][-] (1,0) to (1.5,1);
            \draw[black][-] (2,0) to (1.5,1);
            \draw[black][-] (3,0) to (3.5,1);
            \draw[black][-] (4,0) to (3.5,1);
            \draw[black][-] (5,0) to (5.5,1);
            \draw[black][-] (6,0) to (5.5,1);
            \fill[black] (-4.5,1) circle (0.1);
            \fill[black] (-2.5,1) circle (0.1);
            \fill[black] (-0.5,1) circle (0.1);
            \fill[black] (1.5,1) circle (0.1);
            \fill[black] (3.5,1) circle (0.1);
            \fill[black] (5.5,1) circle (0.1);
    
            \draw[black][-] (-5.5,2) to (6.5,2);
            \draw[black][-] (-4.5,1) to (-3.5,2);
            \draw[black][-] (-2.5,1) to (-3.5,2);
            \draw[black][-] (-0.5,1) to (0.5,2);
            \draw[black][-] (1.5,1) to (0.5,2);
            \draw[black][-] (3.5,1) to (4.5,2);
            \draw[black][-] (5.5,1) to (4.5,2);
            \fill[black] (-3.5,2) circle (0.1);
            \fill[black] (0.5,2) circle (0.1);
            \fill[black] (4.5,2) circle (0.1);
            
        \end{tikzpicture}
        \caption{A possible RG-admissible graph-structure for a one-dimensional spin-chain.}
        \label{1D}
    \end{figure}

    \item[ii)] The second example takes as base $L=\mathds{Z}^2$. The vertical edges between the different layers represent a $2\times 2\to1$ coarse-graining as illustrated in Figure \ref{SquareRG}. This graph structure allows us to implement the $\mathcal{T}_\alpha$-transformations we studied in the introduction.
\end{enumerate}

\begin{figure}[h!]
    \centering
    \begin{tikzpicture}[scale=1.2,
        x={(1cm,0cm)}, y={(0.5cm,0.866cm)}, z={(0cm,1cm)} 
        ]
        
        \def\N{3}     
        \def\Nz{1}    
        \def\h{1.5}     
        
        \foreach \i in {0,...,\N} {
          \foreach \j in {0,...,\N} {
            \coordinate (L\i\j) at (\i,\j,0);
          }
        }
        \foreach \i in {0,...,\N} {
          \draw[thick][gray] (L0\i) -- (L\N\i);
          \draw[thick][gray] (L\i0) -- (L\i\N);
        }

        \foreach \i in {0,...,\N}{
            \draw[thick][gray] (3,\i,0) -- (3.5,\i,0);
            \draw[thick][gray] (-0.5,\i,0) -- (0,\i,0);
            \draw[thick][gray] (\i,-0.5,0) -- (\i,0,0);
            \draw[thick][gray] (\i,3,0) -- (\i,3.5,0);
        }

        \foreach \i in {0,...,\Nz} {
          \foreach \j in {0,...,\Nz} {
            \coordinate (U\i\j) at (2*\i+0.5,2*\j+0.5,\h);
          }
        }
        \foreach \i in {0,...,\Nz} {
          \draw[thick][blue] (U0\i) -- (U\Nz\i);
          \draw[thick][blue] (U\i0) -- (U\i\Nz);
        }

        \foreach \i in {0,...,\Nz}{
            \draw[thick][blue] (-0.5,2*\i+0.5,1.5) -- (0.5,2*\i+0.5,1.5);
            \draw[thick][blue] (2.5,2*\i+0.5,1.5) -- (3.5,2*\i+0.5,1.5);
             \draw[thick][blue] (2*\i+0.5,-0.5,1.5) -- (2*\i+0.5,0.5,1.5);
            \draw[thick][blue] (2*\i+0.5,2.5,1.5) -- (2*\i+0.5,3.25,1.5);
        }
        
        \foreach \i in {0,...,1} {
          \foreach \j in {0,...,1} {
            \foreach \a/\b in {0/0,0/1,1/0,1/1} {
              \pgfmathtruncatemacro{\x}{2*\i+\a}
              \pgfmathtruncatemacro{\y}{2*\j+\b}
              \draw[thick] (L\x\y) -- (U\i\j);
            }
          }
        }

        \foreach \i in {0,...,\N} {
            \foreach \j in {0,...,\N} {
                \fill (L\i\j) circle (0.05);
            }
        }
        
        \foreach \i in {0,...,\Nz} {
          \foreach \j in {0,...,\Nz} {
            \fill[blue] (U\i\j) circle (0.06);
          }
        }
        
        \end{tikzpicture}
    \caption{RG-admissible graph-structure for a $2\times 2\to1$ coarse-graining on the $2\text{D}$ square lattice.}
    \label{SquareRG}
\end{figure}

\subsection{Cluster Representation of Real-Space Renormalization Group Transformations}

Once we have determined an RG-admissible graph $G$ we want to work with, we must figure out a way to choose cluster rules representing the coarse-graining we want to implement. As always, we should define the structure first on a finite subgraph of $G$ and then concern ourselves with the infinite volume limit afterwards. For a given $x\in V_n(G),n\in\mathds{N}$ we define the cone $\mathcal{C}_x\subset G$ to be the induced subgraph with vertex set\footnote{$[x_1,...,x_n]$ denotes a path of vertices, i.e.\ vertices such that $x_i\sim x_{i+1}$ .}
\begin{equation}
    V(\mathcal{C}_x)=\{y\in V_k(G)\,:\,k<n,\,\,\exists[(y,k),(z_1,k+1),...,(z_{n-k-1},n-1),(x,n)]\}\cup\{x\}.
\end{equation}
For a given finite subset $\Lambda\subset V_n(G)$ we then define $\mathcal{C}_\Lambda\subset G$ to be the induced subgraph with vertex set
\begin{equation}
    V(\mathcal{C}_\Lambda)=\bigcup_{x\in\Lambda}V(\mathcal{C}_x).
\end{equation}
This leads us to the finite configuration space
\begin{equation}
    X_\Lambda=\Omega_\Lambda\times\Sigma_\Lambda
\end{equation}
where $\Omega_\Lambda=\{0,1\}^{E_\mathrm{v}(\mathcal{C}_\Lambda)}$ and $\Sigma_\Lambda=\{-1,1\}^{V(\mathcal{C}_\Lambda)}$. Since $X_\Lambda$ is finite, in a straightforward manner we can write down a (possibly signed) measure given by
\begin{equation}\label{Eq1}
    \mu_{\phi_\mathrm{v},\Lambda}(\sigma_{\mathcal{C}_\Lambda},\omega_{\mathcal{C}_\Lambda})=\mu_{0,\Lambda}(\sigma^{(0)})\times\Phi_\mathrm{v}(\omega)\times\Psi_\omega(\sigma).
\end{equation}
$\mu_{0,\Lambda}(\sigma^{(0)})$ is the input measure we want to act on with the RGT. $\sigma^{(0)}$ denotes the spin configuration restricted to vertices in $V_0(\mathcal{C}_\Lambda)$. Later on, we will focus on the case $\mu_{0,\Lambda}=\mu_{\mathcal{C}^0_\Lambda,\beta}^\#$ where $\mathcal{C}^0_\Lambda\subset\mathds{Z}^2$ is the subgraph obtained by considering all vertices and edges of $\mathcal{C}_\Lambda$ in the bottom layer. $\Psi_\omega(\sigma)$ is defined similarly to the object we used in the definition of the Edwards-Sokal coupling. It enforces spins connected through $\omega$ to be parallel. The bond part $\Phi_\mathrm{v}$ is generally a signed product measure over all blocks, where we define a block $B_x$ to be a subgraph with vertex set
\begin{equation}
    V(B_x):=\{y\in V_{k-1}(\mathcal{C}_\Lambda)\,:\,y\sim x\}\cup\{x\}\qquad x\in V_k(\mathcal{C}_\Lambda),k>0
\end{equation}
and edge set
\begin{equation}
    E(B_x):=\{\braket{xy}\,:\,y\in B_x\}.
\end{equation}
$\Phi_\mathrm{v}$ then takes the form
\begin{equation}
    \Phi_\mathrm{v}(\omega)=\prod_{\substack{x\in V_k(\mathcal{C}_\Lambda)\\k>0}}\phi_\mathrm{v}(\omega_{B_x}).
\end{equation}
The idea is that the values of coarse spins are fully determined by how they can connect down to the base layer.

\begin{definition}
    Let $\mathcal{T}$ be some RGT applicable to the input measure $\mu_{0,\Lambda}$. We call a measure $\mu_{\phi_\mathrm{v},\Lambda}$ a geometrical representation of the RGT $\mathcal{T}$ in case that for all $k\in\{0,...,n\}$ the marginal measures satisfy
    \begin{equation}
        \mu_{\phi_\mathrm{v},\Lambda}(\sigma_k)=\mathcal{T}^k\mu_{0,\Lambda}(\sigma_k),\qquad \mathcal{T}^0\mu_{0,\Lambda}:=\mu_{0,\Lambda}.
    \end{equation}
\end{definition}

\begin{definition}
    Let $\mu_{\phi_\mathrm{v},\Lambda}$ be a geometrical representation of some RGT $\mathcal{T}$ on some RG-admissible graph $G$. Further, let $x\in V_n(G)$ and $S\subseteq\mathcal{C}_x^0$. We then set
    \begin{equation}
        \Phi_\mathrm{v}(S):=\sum_{\omega\in\Omega_\Lambda}\Phi_\mathrm{v}(\omega)\ind{\{z\in \mathcal{C}_x^0\,:\,x\leftrightarrow z\}=S}(\omega).
    \end{equation}
    The evolution map $f_n$ corresponding to the kernel $\mathcal{T}$ is defined as
    \begin{equation}
        f_n:\mathds{R}^{\mathcal{C}_x^0}\to\mathds{R},\qquad \mathrm{\mathbf{x}}\mapsto\sum_{S\subseteq\mathcal{C}_x^0}\Phi_\mathrm{v}(S)\prod_{i\in S}\mathrm{\mathbf{x}}_i.
    \end{equation}
    In case that $f_n$ is linear, we refer to the map $\mu\mapsto T\mu$ as a linear RGT.
\end{definition}

For a given set $S\subset\mathcal{C}_x^0$ we may define $\textbf{x}_S\in\mathds{R}^{\mathcal{C}_x^0},(\textbf{x}_S)_i:=\ind{i\in S}$. $f_n(\textbf{x}_S)$ then computes the cluster weight of $x$ only being connected to spins in $S$. This is particularly useful in order to compute
\begin{equation}\label{Eq4}
    \mu_{\phi_\mathrm{v},\Lambda}[\sigma_x=\pm1\,|\,\sigma^{(0)}]=f_n\left(\textbf{x}^\pm(\sigma^{(0)})\right),\qquad \textbf{x}^\pm(\sigma^{(0)}):=\textbf{x}_{\{i\in\mathcal{C}_x^0\,:\,\sigma_i^{(0)}=\pm1\}}.
\end{equation}
Recall that in order to prove weak convergence, we want to compute correlation functions. Therefore, let us consider some $A\subset\Lambda,\Lambda\subset V_n(G)$. We then find
\begin{equation}
    \mathcal{T}^n\mu_{0,\Lambda}[\sigma_A]=\mu_{\phi_\mathrm{v},\Lambda}[\sigma_A]=\mu_{\phi_\mathrm{v},\Lambda}[\mu_{\phi_\mathrm{v},\Lambda}[\sigma_A\,|\,\sigma^{(0)}]]=\mu_{0,\Lambda}[\mu_{\phi_\mathrm{v},\Lambda}[\sigma_A|\sigma^{(0)}]],
\end{equation}
where we have used the law of total expectation for the second equality. Crucially, once we condition on the spin configuration at the bottom, $\mu_{\phi_\mathrm{v},\Lambda}[\cdot\,|\,\sigma^{(0)}]$ becomes a pure product measure. In particular, we find
\begin{equation}
    \mu_{\Phi_\mathrm{v},\Lambda}[\sigma_A|\sigma^0]=\sum_{S\subseteq A}(-1)^{|S|}\prod_{i\in S}\mu_{\Phi_\mathrm{v},\Lambda}[\sigma_i=-1|\sigma^0]\prod_{j\in S^c}\mu_{\Phi_\mathrm{v},\Lambda}[\sigma_j=+1|\sigma^0].
\end{equation}
Here we can insert the identity from Equation (\ref{Eq4}) giving us (omitting the $\sigma^{(0)}$-dependence from the notation)
\begin{equation}
    \mathcal{T}^n\mu_{0,\Lambda}[\sigma_A]=\sum_{S\subseteq A}(-1)^{|A|}\mu_{0,\Lambda}\left[\prod_{i\in S}f_n(\textbf{x}^-_i)\prod_{j\in S^c}f_n(\textbf{x}^+_j)\right].
\end{equation}
The $f_n$ are simply local functions. In case that the $\mu_{0,\Lambda}$ converge weakly to some measure $\mu_0$ for $\Lambda\uparrow\mathds{Z}^2$, taking the infinite volume limit of the renormalized quantity becomes trivial 
\begin{equation}
    \lim_{\Lambda\uparrow\mathds{Z}^2}\mathcal{T}^n\mu_{0,\Lambda}[\sigma_A]=\sum_{S\subseteq A}(-1)^{|A|}\mu_{0}\left[\prod_{i\in S}f_n(\textbf{x}^-_i)\prod_{j\in S^c}f_n(\textbf{x}^+_j)\right].
\end{equation}
Therefore, instead of computing correlation functions of the renormalized measures, we can compute
expectation values of local observables of the input measure $\mu_0$. The $f_n$ themselves can be thought of as a signal propagation process. $\textbf{x}^+(\sigma^{(0)})$ assigns to each vertex in $\mathcal{C}_x^0$ a gate that is active whenever the corresponding spin is $+$. The cluster weights $\phi_\mathrm{v}$ control the transmission of this signal up the layers. The probability that the spin $\sigma_x$ at the top of $\mathcal{C}_x^0$ is also $+$ is then given by the strength of the processed signal $f_n(\textbf{x}^+)$.

\begin{figure}[h!]
    \centering
    \begin{tikzpicture}

        \draw[black][-] (-5.5,-1) to (6.5,-1);
        \fill[black] (-5.25,-1) circle (0.1);
        \fill[black] (-4.75,-1) circle (0.1);
        \fill[black] (-4.25,-1) circle (0.1);
        \fill[black] (-3.75,-1) circle (0.1);
        \fill[black] (-3.25,-1) circle (0.1);
        \fill[black] (-2.75,-1) circle (0.1);
        \fill[black] (-2.25,-1) circle (0.1);
        \fill[black] (-1.75,-1) circle (0.1);
        \fill[black] (-1.25,-1) circle (0.1);
        \fill[black] (-0.75,-1) circle (0.1);
        \fill[black] (-0.25,-1) circle (0.1);
        \fill[black] (0.25,-1) circle (0.1);
        \fill[black] (0.75,-1) circle (0.1);
        \fill[black] (1.25,-1) circle (0.1);
        \fill[black] (1.75,-1) circle (0.1);
        \fill[black] (2.25,-1) circle (0.1);
        \fill[black] (2.75,-1) circle (0.1);
        \fill[black] (3.25,-1) circle (0.1);
        \fill[black] (3.75,-1) circle (0.1);
        \fill[black] (4.25,-1) circle (0.1);
        \fill[black] (4.75,-1) circle (0.1);
        \fill[black] (5.25,-1) circle (0.1);
        \fill[black] (5.75,-1) circle (0.1);
        \fill[black] (6.25,-1) circle (0.1);

        \foreach \x in {-5,...,6} {
            \draw[black, -] ({\x - 0.25}, -1) -- (\x, 0);
            \draw[black, -] ({\x + 0.25}, -1) -- (\x, 0);
        }

        \draw[black][-] (-5.5,0) to (6.5,0);
        \fill[black] (-5,0) circle (0.1);
        \fill[black] (-4,0) circle (0.1);
        \fill[black] (-3,0) circle (0.1);
        \fill[black] (-2,0) circle (0.1);
        \fill[black] (-1,0) circle (0.1);
        \fill[black] (0,0) circle (0.1);
        \fill[black] (1,0) circle (0.1);
        \fill[black] (2,0) circle (0.1);
        \fill[black] (3,0) circle (0.1);
        \fill[black] (4,0) circle (0.1);
        \fill[black] (5,0) circle (0.1);
        \fill[black] (6,0) circle (0.1);

        \draw[black][-] (-5.5,1) to (6.5,1);
        \draw[black][-] (-5,0) to (-4.5,1);
        \draw[black][-] (-4,0) to (-4.5,1);
        \draw[black][-] (-3,0) to (-2.5,1);
        \draw[black][-] (-2,0) to (-2.5,1);
        \draw[black][-] (-1,0) to (-0.5,1);
        \draw[black][-] (0,0) to (-0.5,1);
        \draw[black][-] (1,0) to (1.5,1);
        \draw[black][-] (2,0) to (1.5,1);
        \draw[black][-] (3,0) to (3.5,1);
        \draw[black][-] (4,0) to (3.5,1);
        \draw[black][-] (5,0) to (5.5,1);
        \draw[black][-] (6,0) to (5.5,1);
        \fill[black] (-4.5,1) circle (0.1);
        \fill[black] (-2.5,1) circle (0.1);
        \fill[black] (-0.5,1) circle (0.1);
        \fill[black] (1.5,1) circle (0.1);
        \fill[black] (3.5,1) circle (0.1);
        \fill[black] (5.5,1) circle (0.1);

        \draw[black][-] (-5.5,2) to (6.5,2);
        \draw[black][-] (-4.5,1) to (-3.5,2);
        \draw[black][-] (-2.5,1) to (-3.5,2);
        \draw[black][-] (-0.5,1) to (0.5,2);
        \draw[black][-] (1.5,1) to (0.5,2);
        \draw[black][-] (3.5,1) to (4.5,2);
        \draw[black][-] (5.5,1) to (4.5,2);
        \fill[black] (-3.5,2) circle (0.1);
        \fill[black] (0.5,2) circle (0.1);
        \fill[black] (4.5,2) circle (0.1);

        \foreach \y in {-5.25,-4.75,-3.75,-2.25,-1.75,-1.25,0.25,1.75,2.25,4.75,5.25,6.25}{
            \node[] at (\y,-1.3) {$+$};
        }

        \foreach \z in {-4.25,-3.25,-2.75,-0.75,-0.25,0.75,1.25,2.75,3.25,3.75,4.25,5.75}{
            \node[] at (\z,-1.3) {$-$};
        }

        \draw[blue, very thick, -] (-5.25,-1) to (-5,0);
        \draw[blue, very thick, -] (-4.75,-1) to (-5,0);
        \draw[blue, very thick, -] (-3.75,-1) to (-4,0);
        \draw[blue, very thick, -] (-2.25,-1) to (-2,0);
        \draw[blue, very thick, -] (-1.75,-1) to (-2,0);
        \draw[blue, very thick, -] (-5,0) to (-4.5,1);
        \draw[blue, very thick, -] (-4,0) to (-4.5,1);
        \draw[blue, very thick, -] (-2,0) to (-2.5,1);
        \draw[blue, very thick, -] (-4.5,1) to (-3.5,2);
        \draw[blue, very thick, -] (-2.5,1) to (-3.5,2);

        \fill[black] (-5.25,-1) circle (0.1);
        \fill[black] (-4.75,-1) circle (0.1);
        \fill[black] (-3.75,-1) circle (0.1);
        \fill[black] (-2.25,-1) circle (0.1);
        \fill[black] (-1.75,-1) circle (0.1);
        \fill[black] (-5,0) circle (0.1);
        \fill[black] (-4,0) circle (0.1);
        \fill[black] (-2,0) circle (0.1);
        \fill[black] (-4.5,1) circle (0.1);
        \fill[black] (-2.5,1) circle (0.1);
        \fill[black] (-3.5,2) circle (0.1);

    \end{tikzpicture}
    \caption{$1\mathrm{D}$ visualization of the signal from the base spins travelling up to the top.}
    \label{1D with Clusters}
\end{figure}

In case that $\mathcal{T}$ is an ordinary probability kernel, we deduce from Equation (\ref{Eq4}) that $f_n\left(\textbf{x}^+(\sigma^{(0)})\right)\in[0,1]$ as well as $f_n\left(\textbf{x}^+(\sigma^{(0)})\right)+f_n\left(\textbf{x}^-(\sigma^{(0)})\right)=1$. Combining these relations with the fact that the $f_n$ are multilinear polynomials, we deduce that either $f_n$ is linear or $f_n$ contains some negative coefficients. The second case subsequently implies that some of the cluster weights themselves have to be negative such that $\phi_\mathrm{v}$ is a signed measure. In the next two sections we will conclude that linear RGTs do not induce the expected RG-flow such that we have no other choice than to work with signed measures for the bond part $\phi_\mathrm{v}$. Unfortunately, dealing with signed measures makes it more difficult to use pure connectivity arguments (which usually rely on estimates of connection events) as we must always keep analytic control over the cancellations going on.

\section{Decimation}\label{Sec3}

Although decimation is outside of the family of RGTs we introduced in the introduction, it does serve as a very simple example to showcase the general idea. The base graph remains $L=\mathds{Z}^2$. However, we alter the choice of vertical edges to represent the decimation idea. One suitable choice is given by
\begin{equation}
    E_\mathrm{v}\left(G\right)=\left\{\{(z,n),(2z,n-1)\}\,:\,z\in\mathds{Z}^2,n\in\mathds{N}\right\}.
\end{equation}
Let us for now dismiss R4 from our set of transition rules and only impose conditions R1-R3. In that case, there are two transition probabilities (we denote the transition probabilities for a single block as $[s\to\sigma]:=t(\sigma|s)$)
\begin{align}
    [+\to+]&=1-\kappa\\
    [+\to-]&=\kappa
\end{align}
for some $\kappa\in[0,1/2]$. Now, let's match those with the possible bond configurations. Of course, each block only consists of a single vertical edge and thus the single-block measure takes the form
\begin{equation}
    \phi_\kappa(\omega)=A\ind{\omega=1}+B\ind{\omega=0}.
\end{equation}
In the $[+\to+]$ scenario we can either set the edge to open or closed while in the $[+\to-]$ scenario the edge must be closed. We find the solution
\begin{equation}
    \begin{aligned} 
        &[+\to+]=1-\kappa=A+B \\ 
        &[+\to-]=\kappa=B 
    \end{aligned} \implies \phi_\kappa(\omega)=(1-2\kappa)\delta_{\omega1}+\kappa\delta_{\omega0}.
\end{equation}
The corresponding evolution map is given by
\begin{equation}
    f_n(x)=(1-2\kappa)^nx+\frac{1}{2}\left(1-(1-2\kappa)^n\right).
\end{equation}
For $\kappa>0$ we find $f_n\to\frac{1}{2},n\to\infty$. Consequently, we must find
\begin{equation}
    \mathcal{T}^n_\kappa\mu_{0}=\mu_{\beta=0}
\end{equation}
for any choice of $\mu_0$. By introducing a non-zero probability that a coarse spin does not connect to anything we do introduce a non-zero probability that the coarse spin does not contain any information on the system below. As we do not want the coarse-spins to completely forget the physics of the scales below, we do want to hold on to R4. Imposing it here forces $\kappa=0$ which leads to the evolution map $f_n(x)=x$. The correlation functions follow from
\begin{equation}
    \mathcal{T}^n\mu_{\beta}^+[\sigma_S]=\mu_\beta^+[\sigma_{S_0^n}],\qquad S_0^n:=\{y\in\mathcal{C}_x^0\,:\,x\in S\},
\end{equation}
where we note that each of the $\mathcal{C}_x^0$ consists of just a single vertex. The distance between any pair of vertices in $S_0^n$ is at least $2^n$. Therefore, the probability that two of them are in the same finite cluster goes to zero in the $n\to\infty$ limit, even at $\beta=\beta_c$. We conclude that
\begin{equation}
    \lim_{n\to\infty}\mathds{P}_\beta^+[\sigma_{S_0^n}]=\lim_{n\to\infty}\mathds{P}_\beta^+[S_0^n\subset C_\infty]=\lim_{n\to\infty}\phi_{p(\beta)}^1[S_0^n\subset C_\infty].
\end{equation}
Since $\phi_{p(\beta)}^1$ is mixing for all values of $p\in[0,1]$ (see \cite{10.1007/978-3-030-32011-9_2}) local events at infinite spatial separation become independent such that
\begin{equation}
    \lim_{n\to\infty}\phi_{p(\beta)}^1[S_0^n\subset C_\infty]=\lim_{n\to\infty}\prod_{x\in S_0^n}\phi_{p(\beta)}^1[x\leftrightarrow\infty]=\theta_\beta^{|S|}.
\end{equation}
These are exactly the correlation functions of the measures $\nu_\beta^+$ from the introduction. This derivation could also be carried out for $\nu_\beta^-$ leading to the correlations $\lim_{n\to\infty}\mathcal{T}_{\kappa=0}^n\mu_\beta^-[\sigma_S]=(-\theta_\beta)^{|S|}$. All in all, we conclude that
\begin{equation}
    \mathcal{T}_{\kappa=0}^n\mu_\beta^\pm\to\nu_\beta^\pm,n\to\infty\qquad \forall\beta\in\mathds{R}_{\geq0}.
\end{equation}
In particular, the critical point $\mu_{\beta_c}$ converges to $\nu_{\beta_c}=\mu_{\beta=0}$ in the scaling limit. Thus, decimation does not reproduce the expected RG-flow nor does it exhibit a non-trivial limit for $\beta_c$. However, in this case it is particularly easy to prove that the correlation length remains infinite after finitely many RG-steps when we start at $\beta_c$. Let us compute the two-point function
\begin{equation}
    \mathcal{T}^n\mu_{\beta_c}[\sigma_x\sigma_y]=\phi_{p_c}[x_0^n\leftrightarrow y_0^n]=(2^n\|x-y\|)^{-1/4+o(1)}.
\end{equation}
Thus, the decay remains polynomial with the same exponent for all $n\in\mathds{N}$. The factor that drives the vanishing of correlation in the limit is the constant $2^{-1/4n+o(n)}$.

\section{$2\times 2\to1$-RGT on the Square Lattice}\label{Sec4}

After working through the very simple example of the decimation map, we want to turn our attention to the family of RGTs we considered in the introduction. In particular, we are interested whether these RGTs are more in line with the expected picture of RG-flow we painted in the introduction. The RG-admissible graph of choice is the one from Example ii) in Section \ref{Sec2.1}. Recall that single blocks contain four vertical edges.

\subsection{$\alpha=1/4$}

Before we consider the general case, we want to investigate the special case obtained by setting $\alpha=1/4$. The single-block measure is given by
\begin{equation}
    \phi_\mathrm{v}(\omega)=\begin{cases}
        1/4&|\{e\,:\omega_e=1\}|=1\\
        0&\mathrm{otherwise}
    \end{cases}.
\end{equation}
which looks simply like a space averaged version of decimation. This is again a case of a linear RGT with the evolution map being given by
\begin{equation}
    f_n(\textbf{x})=\frac{1}{|\mathcal{C}_x^0|}\sum_{i\in\mathcal{C}_x^0}\textbf{x}_i.
\end{equation}
Let us compute the expectation of $f_n$
\begin{equation}
    \mu_\beta^+[f_n]=\frac{1}{|\mathcal{C}_x^0|}\sum_{x\in\mathcal{C}_x^0}\frac{1+\phi_{p(\beta)}[x\leftrightarrow\infty]}{2}=\frac{1+\theta_\beta}{2}.
\end{equation}
The variance is further given by $\mathrm{Var}_\beta^+(f_n)=\frac{1}{4}\mathrm{Var}_\beta^+(m_{\mathcal{C}_x^0})$ where $m_{\mathcal{C}_x^0}:=\frac{1}{|\mathcal{C}_x^0|}\sum_{z\in\mathcal{C}_x^0}\sigma_z$ is the magnetization density in the volume $\mathcal{C}_x^0$. In the limit $\mathcal{C}_x^0\uparrow\mathds{Z}^2$ the variance of $m_{\mathcal{C}_x^0}$ regarding the pure state $\mu_\beta^+$ vanishes for all values $\beta\in\mathds{R}$ and hence $\lim_{n\to\infty}\mathrm{Var}(f_n)=0$. This implies that in the limit $n\to\infty$ $f_n$ takes on the value $(1+\theta_\beta)/2$ $\mu_\beta^+$-a.s. We conclude that\footnote{Note that $\nu^\pm_{\beta=\infty}=\mu^\pm_{\beta=\infty}$.}
\begin{equation}
    \lim_{n\to\infty}\mathcal{T}^n_{\alpha=1/4}\mu_\beta^\pm=\nu_\beta^\pm,\qquad \beta\in\mathds{R}_{\geq0}\cup\{\infty\}.
\end{equation}
Thus, rather unsurprisingly the scaling limit of the $\alpha=1/4$ transformation coincides with the scaling limit of decimation. Although not a formal proof, these two examples should be fairly convincing that any linear RGT that preserves the Ising symmetries and monotonicity will always produce $\nu_\beta^\pm$ in the scaling limit, even at $\beta=\beta_c$. Therefore, in order to obtain an RG-flow in line with the expected saddle point behaviour, it is inevitable to turn our attention to non-linear RGTs. Let us also quickly highlight that
\begin{equation}
    \mathcal{T}_{\alpha=1/4}^n\mu_{\beta_c}[\sigma_x\sigma_y]=\frac{1}{|\mathcal{C}_x^0|^2}\sum_{z\in\mathcal{C}_x^0,w\in\mathcal{C}_y^0}\phi_{p_c}[z\leftrightarrow w]=\frac{1}{|\mathcal{C}_x^0|^2}\sum_{z\in\mathcal{C}_x^0,w\in\mathcal{C}_y^0}\|z-w\|^{-1/4+o(1)}.
\end{equation}
For an arbitrary pairing $z\in\mathcal{C}_x^0,w\in\mathcal{C}_y^0$ we find that $2^n(\|x-y\|-1)\leq\|z-w\|\leq2^n(\|x-y\|+\sqrt{2})$. Once $\|x-y\|$ becomes large we can neglect the additional displacement of the vertices in $\mathcal{C}_x^0,\mathcal{C}_y^0$ relative to the respective centres. Thus, also the scaling of the two-point function coincides with the result we obtained for the decimation map, i.e. $\mathcal{T}_{\alpha=1/4}^n\mu_{\beta_c}[\sigma_x\sigma_y]=(2^n\|x-y\|)^{-1/4+o(1)}$. Let us iterate again that this remains polynomial for all finite $n\in\mathds{N}$ and only vanishes in the limit.

\subsection{General $\alpha$}

Now that we have worked out the $\alpha=1/4$ case we shall start considering the general case. In order to compute the cluster weights, we must first figure out all possible bond configurations:

\begin{figure}[htpb!]
    \centering
    \begin{tikzpicture}[scale = 0.8]
        
        \draw[black][thick][-] (-1,0) to (0,1);
        \draw[black][-] (-1,0) to (1,0);
        \draw[black][-] (-1,0) to (-1,2);
        \draw[black][-] (1,0) to (1,2);
        \draw[black][-] (-1,2) to (1,2);
        \fill[black] (-1,0) circle (0.1);
        \fill[black] (1,0) circle (0.1);
        \fill[black] (-1,2) circle (0.1);
        \fill[black] (1,2) circle (0.1);
        \fill[blue] (0,1) circle (0.1);
        \node[] at (0,-0.35) {$A$};

        \begin{scope}[shift={(3.5,0)}]
            \draw[black][thick][-] (-1,0) to (0,1);
            \draw[black][thick][-] (1,0) to (0,1);
            \draw[black][-] (-1,0) to (1,0);
            \draw[black][-] (-1,0) to (-1,2);
            \draw[black][-] (1,0) to (1,2);
            \draw[black][-] (-1,2) to (1,2);
            \fill[black] (-1,0) circle (0.1);
            \fill[black] (1,0) circle (0.1);
            \fill[black] (-1,2) circle (0.1);
            \fill[black] (1,2) circle (0.1);
            \fill[blue] (0,1) circle (0.1);
            \node[] at (0,-0.35) {$B$};
        \end{scope}

        \begin{scope}[shift={(7,0)}]
            \draw[black][thick][-] (-1,0) to (0,1);
            \draw[black][thick][-] (1,2) to (0,1);
            \draw[black][-] (-1,0) to (1,0);
            \draw[black][-] (-1,0) to (-1,2);
            \draw[black][-] (1,0) to (1,2);
            \draw[black][-] (-1,2) to (1,2);
            \fill[black] (-1,0) circle (0.1);
            \fill[black] (1,0) circle (0.1);
            \fill[black] (-1,2) circle (0.1);
            \fill[black] (1,2) circle (0.1);
            \fill[blue] (0,1) circle (0.1);
            \node[] at (0,-0.35) {$C$};
        \end{scope}

        \begin{scope}[shift={(10.5,0)}]
            \draw[black][thick][-] (-1,0) to (0,1);
            \draw[black][thick][-] (1,0) to (0,1);
            \draw[black][thick][-] (-1,2) to (0,1);
            \draw[black][-] (-1,0) to (1,0);
            \draw[black][-] (-1,0) to (-1,2);
            \draw[black][-] (1,0) to (1,2);
            \draw[black][-] (-1,2) to (1,2);
            \fill[black] (-1,0) circle (0.1);
            \fill[black] (1,0) circle (0.1);
            \fill[black] (-1,2) circle (0.1);
            \fill[black] (1,2) circle (0.1);
            \fill[blue] (0,1) circle (0.1);
            \node[] at (0,-0.35) {$D$};
        \end{scope}

        \begin{scope}[shift={(14,0)}]
            \draw[black][thick][-] (-1,0) to (0,1);
            \draw[black][thick][-] (1,0) to (0,1);
            \draw[black][thick][-] (-1,2) to (0,1);
            \draw[black][thick][-] (1,2) to (0,1);
            \draw[black][-] (-1,0) to (1,0);
            \draw[black][-] (-1,0) to (-1,2);
            \draw[black][-] (1,0) to (1,2);
            \draw[black][-] (-1,2) to (1,2);
            \fill[black] (-1,0) circle (0.1);
            \fill[black] (1,0) circle (0.1);
            \fill[black] (-1,2) circle (0.1);
            \fill[black] (1,2) circle (0.1);
            \fill[blue] (0,1) circle (0.1);
            \node[] at (0,-0.35) {$E$};
        \end{scope}
        
    \end{tikzpicture}
    \caption{The possible bond configurations on the vertical edges of a single block.}
    \label{fig:placeholder}
\end{figure}

The next step is to match these cluster weights with the transition probabilities of the $\mathcal{T}_\alpha$-kernels
\begin{align}
    \left[\begin{matrix}
        +&+\\+&+
    \end{matrix}\,\,\to\,\,+\right]&=1=4A+4B+2C+4D+E,\nonumber\\
    \left[\begin{matrix}
        +&+\\+&-
    \end{matrix}\,\,\to\,\,+\right]&=1-\alpha=3A+2B+C+D,\nonumber\\
    \left[\begin{matrix}
        +&+\\+&-
    \end{matrix}\,\,\to\,\,-\right]&=\alpha=A,\\
    \left[\begin{matrix}
        +&+\\-&-
    \end{matrix}\,\,\to\,\,+\right]&=1/2=2A+B,\nonumber\\
    \left[\begin{matrix}
        +&-\\-&+
    \end{matrix}\,\,\to\,\,+\right]&=1/2=2A+C.\nonumber
\end{align}
Solving this simple system of equations leads us to
\begin{align}
    A&=\alpha,\nonumber\\
    B&=1/2-2\alpha,\nonumber\\
    C&=1/2-2\alpha,\\
    D&=1-\alpha-3\alpha-3/2+6\alpha=-1/2+2\alpha,\nonumber\\
    E&=1-4\alpha-1+4\alpha=0,\nonumber
\end{align}
where we find that $B=C$ such that we will from now on denote all edge-configurations with two open edges as $B$. The corresponding evolution map for a single step is given by\footnote{Unless we require an explicit choice of $\alpha$, we will not adopt it into the notation.}
\begin{equation}
    f_1(x_1,...,x_4)=\alpha\sum_ix_i+\left(\frac{1}{2}-2\alpha\right)\left(\sum_{i<j}x_ix_j-\sum_{i<j<k}x_ix_jx_k\right).
\end{equation}
For multiple steps, we can compute the evolution map through the following simple recursion: $f_n(\textbf{x})=f(f_{n-1}(\textbf{x}_1),...,f_{n-1}(\textbf{x}_4))$ where we denote $f:=f_1$ from this point onwards. In order to compute correlation functions using the evolution map, it is desirable to establish some basic properties of $f_n$. Recall that the relation $\mu_{\alpha,n,\Lambda,\beta}[\sigma_x=+1|\sigma^{(0)}]=f_n\left(\textbf{x}^+(\sigma^{(0)})\right)$ implies
\begin{enumerate}
    \item[i)] $f_n\left(\textbf{x}^+(\sigma^{(0)})\right)\in[0,1]$,
    \item[ii)] $f_n\left(\textbf{x}^+(\sigma^{(0)})\right)+f_n\left(\textbf{x}^+(-\sigma^{(0)})\right)=f_n\left(\textbf{x}^+(\sigma^{(0)})\right)+f_n\left(\textbf{x}^-(\sigma^{(0)})\right)=1$.
\end{enumerate}
A third property we profit from as well is the following.
\begin{lemma}
    The map $f_n(\mathrm{\mathbf{x}}^+(\sigma))$ is increasing in $\sigma$.
\end{lemma}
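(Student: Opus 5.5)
We argue by induction on $n$, using the recursion $f_n(\textbf{x})=f\big(f_{n-1}(\textbf{x}_1),\dots,f_{n-1}(\textbf{x}_4)\big)$. Here $\textbf{x}_j$ denotes the restriction of $\textbf{x}$ to the cone $\mathcal{C}^0$ of the $j$-th child of the top vertex. The point is that monotonicity in $\sigma$ is only needed on the set of values that actually occur. By property i), every $f_{n-1}(\textbf{x}^+(\sigma))$ lies in $[0,1]$. It therefore suffices to show that the one-step map $f$ is coordinatewise nondecreasing on the cube $[0,1]^4$, and not merely on $\{0,1\}^4$.

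\emph{Base case.} For $n=1$, $f(\textbf{x}^+(s))=t_\alpha(\uparrow|s)$ is exactly the transition probability of the block rule. It is increasing in $s$ by R3, since $\alpha\le 1/2$.

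\emph{Key step: monotonicity of $f$ on $[0,1]^4$.} Since $f$ is multilinear, $\partial f/\partial x_1$ does not depend on $x_1$ and is multilinear in $(x_2,x_3,x_4)$. A multilinear function on a cube attains its minimum at a vertex. Hence it suffices to check $\partial_1 f\ge 0$ at the eight points $(x_2,x_3,x_4)\in\{0,1\}^3$. Explicitly,
\begin{equation*}
\partial_1 f=\alpha+\left(\tfrac12-2\alpha\right)\Big(x_2+x_3+x_4-(x_2x_3+x_2x_4+x_3x_4)\Big).
\end{equation*}
Let $k$ be the number of ones among $x_2,x_3,x_4$. The bracket equals $k-\binom{k}{2}$, which takes the values $0,1,1,0$ for $k=0,1,2,3$. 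Thus $\partial_1 f$ takes only the values $\alpha$ and $\tfrac12-\alpha$, both nonnegative for $\alpha\in[0,1/2]$. Equivalently, $\partial_1 f$ at a vertex is the difference of two transition probabilities of the block rule, which is $\ge 0$ by R3. By the rotation symmetry R1, the same holds for every coordinate.

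\emph{Inductive step.} Suppose $\sigma\le\sigma'$. Their restrictions to each sub-cone satisfy the same ordering, so by the induction hypothesis $f_{n-1}(\textbf{x}^+_j(\sigma))\le f_{n-1}(\textbf{x}^+_j(\sigma'))$ for $j=1,\dots,4$. All these values lie in $[0,1]$ by property i). Since $f$ is nondecreasing in each coordinate on $[0,1]^4$, we obtain $f_n(\textbf{x}^+(\sigma))\le f_n(\textbf{x}^+(\sigma'))$.

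The main obstacle is the key step. The signed cluster weights $D=-\tfrac12+2\alpha<0$ mean that $f$ is not obviously monotone, and R3 alone only controls $f$ at the vertices of the cube. The multilinear "extreme point" argument for $\partial_i f$ bridges this gap; it relies on the bounds $\alpha\ge0$ and $\tfrac12-\alpha\ge0$, which hold throughout the admissible range $\alpha\in[0,1/2]$.
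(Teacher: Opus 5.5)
Your proposal is correct and follows the paper's own argument. You show $\partial_i f\ge 0$ on all of $[0,1]^4$, then carry monotonicity through the recursion $f_n=f\big(f_{n-1}(\mathbf{x}_1),\dots,f_{n-1}(\mathbf{x}_4)\big)$ by induction, and along the way you make explicit two points the paper leaves implicit: multilinearity reduces $\partial_i f\ge 0$ to the vertex values $\alpha$ and $\tfrac12-\alpha$, and the inner values $f_{n-1}(\mathbf{x}^+_j)$ lie in $[0,1]$, where that monotonicity holds.
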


\begin{proof}
    It is obvious that $\textbf{x}^+(\sigma)$ itself is an increasing function in $\sigma$. Thus, it is sufficient to prove that the $f_n$ are increasing on $[0,1]^{\mathcal{C}_x^0}$. Therefore, we shall compute
    \begin{equation}
        \frac{\partial f}{\partial x_i}=\alpha+\left(\frac{1}{2}-2\alpha\right)\left(\sum_{j\neq i}x_j-\sum_{j<k\neq i}x_ix_k\right)\geq0\quad\forall x\in[0,1]
    \end{equation}
    which implies that $f$ itself is an increasing function. Of course, by induction we immediately find that $f_{n+1}(\textbf{x}^+)=f(f_n(\textbf{x}^+_1),...,f_n(\textbf{x}^+_4))$ is increasing as well.
\end{proof}

At last we also want to introduce
\begin{equation}
    \delta f_n:=\sup_{i\in\mathcal{C}_x^0}\sup_{\textbf{x}\in\{0,1\}^{\mathcal{C}_x^0}}|f_n(\textbf{x}_{i=+1})-f_n(\textbf{x}_{i=-1})|,
\end{equation}
which describes the maximal increase $f_n$ can experience when flipping a single spin. Since $f_n$ is multilinear, we find that
\begin{equation}
    f_n(\textbf{x}_{i=+1})-f_n(\textbf{x}_{i=-1})=\frac{\partial f_n}{\partial x_i}(\textbf{x}).
\end{equation}
Therefore, we may compute the $\delta f_n$ via
\begin{equation}\label{Eq3}
    \frac{\partial f}{\partial x_i}(\textbf{x})\leq\max\left\{\alpha,\left(\frac{1}{2}-\alpha\right)\right\}\implies\delta f_n=\max\left\{\alpha^n,\left(\frac{1}{2}-\alpha\right)^n\right\}.
\end{equation}\smallskip

The path towards establishing the results presented in Section \ref{Sec1.3} is as follows. First, we prove that for $\beta\neq\beta_c$ the variance $\mu_\beta^\pm\big[(f_n(\textbf{x}^+)-\mu_\beta^\pm[f_n(\textbf{x}^+)])^2\big]$ vanishes in the limit $n\to\infty$. This implies that in the limit the value of $f_n(\textbf{x}^+)$ becomes $\mu_\beta^\pm$-a.s. constant and hence independent of the distribution of all the other coarse spins. As a direct consequence this implies that $\mathcal{T}_\alpha^n\mu_\beta^\pm$ converges weakly to a product measure. Note, however, that we cannot conclude that the spins become deterministic, only their distribution. If, for example, $f_n(\textbf{x}^+)=1/2$ $\mu_\beta^\pm$-a.s. then the corresponding coarse spin will be equally likely up or down independent on the configuration of the coarse spins elsewhere. Afterwards, we also have to determine the limit $\lim_{n\to\infty}\mu_\beta^\pm[f_n(\textbf{x}^+)]$ which characterizes the resulting product measure. This second step, however, will only be required in the low-temperature phase. In the high-temperature phase, where $\mu_\beta^+=\mu_\beta^-=\mu_\beta$ is $\mathds{Z}_2$-invariant, it is clear that for any $n\in\mathds{N}$
\begin{equation}
    \mu_\beta[f_n(\textbf{x}^+)]=\frac{1}{2}\mu_\beta[f_n(\textbf{x}^+(\sigma^{(0)}))+f_n(\textbf{x}^+(-\sigma^{(0)}))]=\frac{1}{2}\mu_\beta[f_n(\textbf{x}^+)+f_n(\textbf{x}^-)]=\frac{1}{2}.
\end{equation}

\subsubsection{High-Temperature Phase}

We want to investigate the two temperature regimes separately starting with the high-temperature phase $\beta<\beta_c$. The general idea for the low-temperature phase will be the same, although there will arise some technical difficulties we do not encounter at high temperatures. As discussed above, we want to prove that the variance $\mathrm{Var}_\beta(f_n(\textbf{x}^+))$ vanishes in the limit $n\to\infty$ which implies that $f_n(\textbf{x}^+),f_n(\textbf{x}^-)$ both attain the value $1/2$ $\mu_\beta$-a.s. In that case, we obtain
\begin{equation}
    \lim_{n\to\infty}\mu_\beta^\mathrm{f}\mathcal{T}^n[\sigma_A]=\lim_{n\to\infty}\sum_{S\subseteq A}(-1)^{|S|}\mu_\beta^\mathrm{
    f}\left[\prod_{i\in S}f_n(\textbf{x}^-_i)\prod_{j\in S^c}f_n(\textbf{x}^+_j)\right]=\left(\frac{1}{2}\right)^{|A|}\sum_{S\subseteq A}(-1)^{|S|}=\indnb{A=\emptyset}
\end{equation}
which implies convergence to the trivial $T=\infty$ fixed point.

\begin{proposition}\label{Prop2}
    At inverse temperatures $\beta<\beta_c$ and $\alpha\in(0,1/2)$ we find
    \begin{equation}
        \mathrm{Var}_\beta(f_n(\mathrm{\mathbf{x}}^+))\to0,n\to\infty.
    \end{equation}
\end{proposition}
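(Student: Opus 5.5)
We will bound the variance by an Efron--Stein / covariance expansion for the multilinear function $F_n:=f_n(\mathbf{x}^+(\sigma^{(0)}))$ and exploit the finite susceptibility at $\beta<\beta_c$. Since $F_n$ is multilinear in the indicators $\ind{\sigma_i=+1}=(1+\sigma_i)/2$, we can write its Fourier--Walsh expansion
\begin{equation*}
F_n(\sigma)=\sum_{S\subseteq\mathcal{C}_x^0}\hat F_n(S)\,\sigma_S .
\end{equation*}
By the $\mathds{Z}_2$ symmetry (property ii)), only odd $|S|$ (and $S=\emptyset$, with $\hat F_n(\emptyset)=1/2$) contribute. The variance is then
\begin{equation*}
\mathrm{Var}_\beta(F_n)=\sum_{S,T\neq\emptyset}\hat F_n(S)\hat F_n(T)\,\mu_\beta[\sigma_S\sigma_T].
\end{equation*}
Controlling the coefficients $\hat F_n(S)$ directly looks painful, so we will use a cruder route via a covariance bound.

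\textbf{Step 1 (covariance inequality).} The plan is to prove an inequality of the form
\begin{equation*}
\mathrm{Var}_\beta(F_n)\le \sum_{i,j\in\mathcal{C}_x^0}\Big\|\frac{\partial F_n}{\partial x_i}\Big\|_\infty\Big\|\frac{\partial F_n}{\partial x_j}\Big\|_\infty\,\mathrm{Cov}_\beta(\sigma_i,\sigma_j).
\end{equation*}
For an increasing function $F_n$ (Lemma above) under the FKG measure $\mu_\beta$, this is the standard Newman/Bulinskii type inequality: for monotone functions of positively associated spins, $\mathrm{Cov}(F,G)\le\sum_{i,j}\delta_iF\,\delta_jG\,\mathrm{Cov}(\sigma_i,\sigma_j)$, where $\delta_i$ are the single-site Lipschitz constants. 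We will prove it, or cite it, using the Edwards--Sokal coupling. Condition on $\omega$. Given $\omega$, the spins are i.i.d.\ uniform per cluster, so the conditional variance of $F_n$ is controlled by an Efron--Stein argument over clusters. The fluctuation from resampling a cluster $C$ is at most $\sum_{i\in C}\delta_iF_n$. The fluctuation of the conditional mean is handled separately; it is constant in the high-temperature, free-boundary case, where $\mathds{P}[F_n\mid\omega]$ is not constant, so we expect to bound it by the same cluster sum. Summing gives
\begin{equation*}
\mathrm{Var}(F_n)\le \mathds{E}\Big[\sum_{C}\Big(\sum_{i\in C\cap\mathcal{C}_x^0}\delta_iF_n\Big)^2\Big]=\sum_{i,j}\delta_iF_n\,\delta_jF_n\,\phi_p[i\leftrightarrow j].
\end{equation*}

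\textbf{Step 2 (insert the bounds).} Using $\delta_iF_n\le\delta f_n=\gamma^n$, with $\gamma:=\max\{\alpha,1/2-\alpha\}<1/2$ for $\alpha\in(0,1/2)$, by Equation (\ref{Eq3}), we get
\begin{equation*}
\mathrm{Var}_\beta(F_n)\le \gamma^{2n}\,|\mathcal{C}_x^0|\,\chi_p=\gamma^{2n}4^n\chi_p=(2\gamma)^{2n}\chi_p,
\end{equation*}
which tends to zero since $2\gamma<1$ and $\chi_p<\infty$ for $p<p_c$. Note that this uses $\alpha\neq0,1/2$ precisely, which is consistent with the remark in Section \ref{Sec1.3} that the endpoints are out of reach.

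\textbf{Main obstacle.} The delicate point is Step 1, namely justifying the quadratic cluster bound for a function whose partial derivatives may vary in sign pattern. We get around this because $F_n$ is increasing, so all $\delta_i\ge 0$, and the bound only needs $|F_n(\sigma)-F_n(\sigma^{C})|\le\sum_{i\in C}\delta_iF_n$, where $\sigma^C$ is $\sigma$ with cluster $C$ flipped. The Efron--Stein step over the independent cluster spins given $\omega$ is then rigorous. The remaining worry is the $\omega$-dependence of $\mathds{E}[F_n\mid\omega]$. Here we plan to use the exact identity $\mathds{E}[F_n\mid\omega]=1/2$, which holds because given $\omega$ the law is invariant under the global flip and $F_n(\sigma)+F_n(-\sigma)=1$. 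That makes the total variance equal to the mean conditional variance and closes the argument.
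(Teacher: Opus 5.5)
Your proposal is correct and is essentially the paper's proof. Both use the free-boundary Edwards--Sokal coupling, the exact identity $\mathds{E}[F_n\mid\omega]=1/2$ from the global flip to kill the variance of the conditional mean, an Efron--Stein bound over the independent cluster spins with $|F_n-F_n^{C}|\le\delta f_n\,|C\cap\mathcal{C}_x^0|$, the rewriting of the cluster sum as $\sum_{z,w}\phi_p[z\leftrightarrow w]\le 4^n\chi_p$, and finally $(\delta f_n)^2\,4^n\chi_p=(2\gamma)^{2n}\chi_p\to0$. The Newman/FKG framing and the monotonicity of $F_n$ are not used in the argument you actually carry out. The self-contradictory sentence in Step 1 about the conditional mean is superseded by the exact identity you give at the end.
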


\begin{proof}
    This proof follows an idea originally due to Efron and Stein \cite{EfronStein}. We start at fixed $n$ as well as in a fixed finite volume $\Lambda$. In the high-temperature regime the infinite volume Gibbs measure is unique such that our results will be independent of the choice of boundary conditions. Therefore, we may use $\mu_{\mathcal{C}_\Lambda^0,\beta}^\mathrm{f}$ which allows us to benefit from exact $\mathds{Z}_2$-symmetry already in finite volumes. Let us now switch to the Edwards-Sokal coupling, i.e.\ we consider
    \begin{equation}
        \mathrm{Var}_{\mathcal{C}_\Lambda^0,\beta}^\mathrm{f}(f_n)=\mu_{\mathcal{C}_\Lambda^0,\beta}^\mathrm{f}\left[\left(f_n-\mu_{\mathcal{C}_\Lambda^0,\beta}^\mathrm{f}[f_n]\right)^2\right]=\mathds{P}_{\mathcal{C}_\Lambda^0,\beta}^\mathrm{f}\left[\left(f_n-\mathds{P}_{\mathcal{C}_\Lambda^0,\beta}^\mathrm{f}[f_n]\right)^2\right],
    \end{equation}
    where we abbreviate $f_n=f_n(\textbf{x}^+)$. Using the law of total variance we write (denoting $\mathds{E}[\cdot]:=\mathds{P}_{\mathcal{C}_\Lambda^0,\beta}^\mathrm{f}[\cdot]$ and $\mathrm{Var}(\cdot):=\mathrm{Var}_{\mathcal{C}_\Lambda^0,\beta}^\mathrm{f}$ for the remainder of this proof)
    \begin{equation}
        \mathrm{Var}(f_n)=\mathds{E}[\mathrm{Var}(f_n|\omega)]+\mathrm{Var}(\mathds{E}[f_n|\omega]).
    \end{equation}
    However, since our measure is $\mathds{Z}_2$-symmetric, we can use that
    \begin{equation}
        \mathds{E}[f_n|\omega]=\frac{1}{2}\mathds{E}[f_n(\textbf{x}^+)+f_n(\textbf{x}^-)|\omega]=\frac{1}{2}
    \end{equation}
    and therefore constant. We conclude that the second term vanishes since constant functions have zero variance. For the first term we consider
    \begin{equation}
        \mathrm{Var}(f_n|\omega)=\mathds{E}[(f_n-\mathds{E}[f_n|\omega])^2|\omega].
    \end{equation}
    For clarity, let us denote $\texpec[\cdot]:=\mathds{E}[\cdot|\omega]$. Also, let $C_1,...,C_m\in\mathcal{K}(\omega)$ be the clusters that intersect $\mathcal{C}_x^0$. The measure $\texpec[\cdot]$ is simply an independent $1/2$-Bernoulli type measure over the corresponding spins $S_1,...,S_m$. Now, let us quickly compute
    \begin{equation}
        f_n-\texpec[f_n]=\texpec[f_n|S_1,...,S_m]-\texpec[f_n]=\sum_{i=1}^m\texpec[f_n|S_1,...,S_i]-\texpec[f_n|S_1,...,S_{i-1}],
    \end{equation}
    where we may denote $\Delta_if_n:=\texpec[f_n|S_1,...,S_i]-\texpec[f_n|S_1,...,S_{i-1}]$. The idea is that
    \begin{equation}
        (f_n-\texpec[f_n])^2=\sum_{i,j=1}^m\Delta_if_n\Delta_jf_n,
    \end{equation}
    where, however, the off-diagonal elements vanish in expectation. We can prove this as follows: Assume that $i<j$. Then we consider
    \begin{equation}
        \texpec[\Delta_if_n\Delta_jf_n]=\texpec[\texpec[\Delta_if_n\Delta_jf_n|S_1,...,S_{j-1}]]=\texpec[\Delta_if_n\texpec[\Delta_jf_n|S_1,...,S_{j-1}]].
    \end{equation}
    By construction we find
    \begin{equation}
        \texpec[\texpec[f_n|S_1,...,S_j]|S_1,...,S_{j-1}]=\texpec[f_n|S_1,...,S_{j-1}]\implies\texpec[\Delta_jf_n|S_1,...,S_{j-1}]=0.
    \end{equation}
    Thus, we end up with
    \begin{equation}
        \texpec[(f_n-\texpec[f_n])^2]=\sum_{i=1}^m\texpec[(\Delta_if_n)^2].
    \end{equation}
    In order to complete the proof we therefore require a bound of $|\Delta_if_n|$. Our starting point is
    \begin{align}
        \texpec[f_n|S_1,...,S_i]=\ind{S_i=+1}\texpec[f_n|S_1,...,S_{i-1},S_i=+1]+\ind{S_i=-1}\texpec[f_n|S_1,...,S_{i-1},S_i=-1].
    \end{align}
    Let us denote the terms as
    \begin{equation}
        A_i:=\texpec[f_n|S_1,...,S_{i-1},S_i=+1]\qquad;\qquad B_i:=\texpec[f_n|S_1,...,S_{i-1},S_i=-1].
    \end{equation}
    Similarly, we find
    \begin{align}
        \texpec[f_n|S_1,...,S_{i-1}]&=\texpec[\texpec[f_n|S_1,...,S]|S_1,...,S_{i-1}]\nonumber\\
        &=A_i\texpec[\ind{S_i=+1}|S_1,...,S_{i-1}]+B_i\texpec[\ind{S_i=-1}|S_1,...,S_{i-1}].
    \end{align}
    The spin variables are independent of each other such that 
    \begin{equation}
        \texpec[\ind{S_i=+1}|S_1,...,S_{i-1}]=\texpec[\ind{S_i=-1}|S_1,...,S_{i-1}]=1/2
    \end{equation}
    and therefore
    \begin{equation}
        \Delta_if_n=A_i\ind{S_i=+1}+B_i\ind{A_i=-1}-\frac{1}{2}(A_i+B_i).
    \end{equation}
    At last, we note that $\ind{S_i=-1}=1-\ind{S_i=+1}$ such that
    \begin{equation}
        \Delta_if_n=(A_i-B_i)\left(\ind{S_i=+1}-\frac{1}{2}\right).
    \end{equation}
    In absolute value, we then find $|\Delta_if_n|\leq|A_i-B_i|$. This expression can be computed via
    \begin{equation}
        A_i-B_i=\texpec[f_n-f_n^i|S_1,...,S_{i-1},S_i=+1],
    \end{equation}
    where $f_n^i$ is defined just as $f_n$ but with the sign of cluster $C_i$ flipped. All in all, we get
    \begin{align}
        \mathrm{Var}(f_n|\omega)&\leq\sum_{i=1}^m\texpec[(\texpec[|f_n-f_n^i|\,|\,S_1,...,S_{i-1},S_i=+1])^2]\nonumber\\
        &\leq\sum_{i=1}^m\texpec[\texpec[|f_n-f_n^i|^2\,|\,S_1,...,S_{i-1},S_i=+1]]=\sum_{i=1}^m\texpec[|f_n-f_n^i|^2\,|\,S_i=+1].
    \end{align}
    For the second inequality we have used Jensen's inequality \cite{Jensen}. By multilinearity of $f_n$ we can bound the difference $|f_n-f_n^i|$ by flipping the spins one by one such that
    \begin{equation}
        |f_n-f_n^i|\leq\delta f_n\cdot|C_i\cap\mathcal{C}_x^0|,
    \end{equation}
    which gives us
    \begin{equation}
        \mathrm{Var}(f_n|\omega)\leq(\delta f_n)^2\sum_{C\cap\mathcal{C}_x^0}|C\cap\mathcal{C}_x^0|^2\implies\mathds{E}[\mathrm{Var}(f_n|\omega)]\leq(\delta f_n)^2\,\mathds{E}\left[\sum_{C\cap\mathcal{C}_x^0}|C\cap\mathcal{C}_x^0|^2\right].
    \end{equation}
    The last step is to estimate the remaining expectation. This is done via
    \begin{equation}
        \sum_{C\cap\mathcal{C}_x^0}|C\cap\mathcal{C}_x^0|^2=\sum_C\left(\sum_{z\in\mathcal{C}_x^0}\ind{z\in C}\right)^2=\sum_C\sum_{z,w\in\mathcal{C}_x^0}\ind{z\in C}\ind{w\in C}.
    \end{equation}
    By changing the order of summation, we then obtain
    \begin{equation}
        \sum_{z,w\in\mathcal{C}_x^0}\sum_C\ind{z\in C}\ind{w\in C}=\sum_{z,w\in\mathcal{C}_x^0}\ind{z\leftrightarrow w}.
    \end{equation}
    Thus, we are considering
    \begin{equation}
        \mathrm{Var}(f_n)\leq(\delta f_n)^2\,\mathds{E}\left[\sum_{z,w\in\mathcal{C}_x^0}\ind{z\leftrightarrow w}\right].
    \end{equation}
    This is finally the point where we want to take the infinite volume limit $\Lambda\uparrow\mathds{Z}^2$.
    \begin{equation}
        \mathrm{Var}_\beta(f_n)\leq(\delta f_n)^2\sum_{z,w\in\mathcal{C}_x^0}\phi_p^0[z\leftrightarrow w].
    \end{equation}
    This final expression is very simple to bound as we find
    \begin{equation}
        \sum_{w\in\mathcal{C}_x^0}\phi_p^0[z\leftrightarrow w]\leq\sum_{w\in\mathds{Z}^2}\phi_p^0[z\leftrightarrow w]=\chi(\beta)<\infty.
    \end{equation}
    Putting everything together we end up with
    \begin{equation}
        \mathrm{Var}_\beta(f_n)\leq(\delta f_n)^2\chi(\beta)\cdot|\mathcal{C}_x^0|.
    \end{equation}
    Let us recall from Equation (\ref{Eq3}) that $\delta f_n=C(\alpha)^n$ where $C(\alpha)\in[1/4,1/2)$ for $\alpha\in(0,1/2)$. Since $|\mathcal{C}_x^0|=4^n$, we conclude
    \begin{equation}
        \mathrm{Var}_\beta(f_n)\to0,n\to\infty\qquad\forall\alpha\in(0,1/2).
    \end{equation}
    We have successfully established that the variance of $f_n$ and with it the correlations of $\mathcal{T}^n\mu_\beta$ vanish in the $n\to\infty$ limit for $\beta<\beta_c$.
\end{proof}

\subsubsection{Low-Temperature Phase}

Let us now consider the supercritical phase $\beta>\beta_c$. Here, obtaining the desired result requires more work than in the previous case. This is due to the fact that i) when proving that the variance vanishes we no longer profit from $\mathds{Z}_2$-symmetrie and ii) the value $\mu_\beta^\pm[f_n(\textbf{x}^+)]$ is no longer fixed by symmetrie arguments. Hence, alongside proving that the variance vanishes we must also find a way to determine the value $\lim_{n\to\infty}\mu_\beta^\pm[f_n(\textbf{x}^+)]$. Let us for now focus on minus boundary conditions $\mu_\beta^-$. Similar to the high-temperature phase, the first step is again to prove that the variance of $f_n$ vanishes. 

\begin{proposition}\label{Prop1}
    For $\beta>\beta_c$ the variance vanishes for $\alpha\in(0,1/2)$ in the limit
    \begin{equation}
        \lim_{n\to\infty}\mathrm{Var}^-_\beta(f_n(\mathrm{\mathbf{x}}^+))=0.
    \end{equation}
\end{proposition}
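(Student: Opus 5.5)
We follow the Efron--Stein strategy of Proposition \ref{Prop2}, now with the coupling $\mathds{P}^-_{\mathcal{C}_\Lambda^0,\beta}$ whose bond marginal is $\phi^1_{\mathcal{C}_\Lambda^0,p}$. Without $\mathds{Z}_2$-symmetry, both terms of the law of total variance
\begin{equation*}
    \mathrm{Var}(f_n)=\mathds{E}[\mathrm{Var}(f_n|\omega)]+\mathrm{Var}(\mathds{E}[f_n|\omega])
\end{equation*}
have to be controlled.

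\textbf{First term.} Conditionally on $\omega$, every cluster touching $\partial G$ has spin fixed to $-1$, and the remaining finite clusters $C_1,\dots,C_m$ meeting $\mathcal{C}_x^0$ carry independent fair signs. The martingale decomposition from the proof of Proposition \ref{Prop2} applies verbatim to these free clusters and gives
\begin{equation*}
    \mathds{E}[\mathrm{Var}(f_n|\omega)]\leq(\delta f_n)^2\sum_{z,w\in\mathcal{C}_x^0}\phi_p^1[z\leftrightarrow w,\ z\nlr\partial G].
\end{equation*}
In the infinite volume limit, each inner sum over $w$ is bounded by $\chi_p^f<\infty$, which follows from duality. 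Hence this term is at most $(\delta f_n)^2\chi^f_p4^n\to0$, since $\delta f_n=C(\alpha)^n$ with $C(\alpha)<1/2$.

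\textbf{Second term: the main obstacle.} Set
\begin{equation*}
    g(\omega):=\mathds{E}[f_n|\omega]=\tilde f_n(\mathbf{y}(\omega)),
\end{equation*}
where $\mathbf{y}_z=0$ if $z\leftrightarrow\infty$ (equivalently $\partial G$) and $\mathbf{y}_z=1/2$ otherwise. Here $g$ is obtained from $f_n$ by averaging out the free cluster signs. Since averaging is a convex combination, multilinearity together with the bound on $\partial f_n/\partial x_i$ gives
\begin{equation*}
    |g(\omega)-g(\omega')|\leq\delta f_n\,\bigl|\{z\in\mathcal{C}_x^0:\ \ind{z\leftrightarrow\infty}\ \text{differs between }\omega,\omega'\}\bigr|,
\end{equation*}
up to corrections from how the free clusters are grouped. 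These corrections are again bounded by $\delta f_n$ times the sizes of the affected clusters.

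I would then apply Efron--Stein to $g$ viewed as a function of the independent-ish edge variables. Because $\phi_p$ is not a product measure, I would first try a resampling-free variant instead: write $g-\mathds{E}[g]$ as a sum of martingale differences. These come either from exploring edges of $\mathcal{C}_x^0$ in a fixed order, or more cleanly from revealing $\omega$ in $2^k$-boxes.

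The alternative I would try first is a covariance bound. Since $\mathrm{Var}(g)\leq\mathrm{Var}$ of a Lipschitz function of the indicators $\ind{z\leftrightarrow\infty}$ and of the free-cluster structure, expanding $g$ to first order gives
\begin{equation*}
    \mathrm{Var}(g)\lesssim(\delta f_n)^2\sum_{z,w\in\mathcal{C}_x^0}\Bigl(\bigl|\mathrm{Cov}_{\phi_p}(\ind{z\leftrightarrow\infty},\ind{w\leftrightarrow\infty})\bigr|+\phi_p[z\leftrightarrow w,\ z\nlr\infty]\Bigr).
\end{equation*}
The covariance of the events $\{z\leftrightarrow\infty\}$ at $p>p_c$ decays exponentially. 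This follows from duality: decorrelation fails only if a finite dual-subcritical structure connects the $z$- and $w$-neighbourhoods. Consequently its sum over $w$ is finite.

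\textbf{Justifying the linearization.} A rigorous replacement for the first-order expansion is to bound $\mathrm{Var}(g)$ by $\sum_{z,w}$ of the influence products through the multilinear (Fourier--Walsh) structure. The worry is higher-order terms. These carry factors $(\delta f_n)^k$ multiplied by $k$-point truncated connectivities, which are summable by exponential decay of the dual clusters; the estimate of Lemma \ref{Lem1} applied to the dual model controls the moments.

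\textbf{Conclusion.} Combining the two terms yields
\begin{equation*}
    \mathrm{Var}^-_\beta(f_n)\leq K_\beta\,(\delta f_n)^2\,4^n\longrightarrow0 \qquad (n\to\infty).
\end{equation*}
The step I expect to be hardest is controlling the correlations of the infinite-cluster indicators, that is, making the linearization of $g$ rigorous in the presence of a non-product bond measure.
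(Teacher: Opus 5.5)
Your first term matches the paper. You fix the spin of the boundary cluster, rerun the argument of Proposition \ref{Prop2} over the free clusters, and sum using $\chi_p^f<\infty$. Your Lipschitz-type estimate for $g(\omega)=\mathds{E}[f_n|\omega]$ is also in the spirit of the paper's bound $|F(\omega)-F(\omega')|\le\delta f_n\,h(\omega,\omega')$.

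The gap is in converting that estimate into a bound on $\mathrm{Var}(g)$ under the non-product measure $\phi_p$. Your preferred route, the first-order covariance bound, is not a valid inequality. Bounded differences plus summable pairwise covariances do not control the variance of a nonlinear function of dependent variables. For example, take $N=2M$ bits. Given a hidden fair coin, let the pairs $(Y_{2k-1},Y_{2k})$ be independent, with $Y_{2k-1}$ fair and $Y_{2k}=Y_{2k-1}$ on heads, $Y_{2k}=1-Y_{2k-1}$ on tails. All $Y_i$ are then pairwise independent. Yet $\frac1M\sum_k\ind{Y_{2k-1}=Y_{2k}}$ has bounded differences $2/N$ and variance $1/4$, while your right-hand side is of order $1/N$. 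In the FK setting such behaviour can only be excluded by strong spatial mixing of the bond dependence, and proving that is exactly the missing step. Your Fourier--Walsh repair claims that higher-order terms carry $(\delta f_n)^k$ times summable truncated connectivities; this is asserted, not proved, and with $\binom{4^n}{k}$ terms at order $k$ it does not obviously sum. The exponential decay of $\mathrm{Cov}_{\phi_p}(\ind{z\leftrightarrow\infty},\ind{w\leftrightarrow\infty})$ is likewise only sketched.

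The martingale decomposition over edges that you mention in passing is the paper's route, but it needs an ingredient you do not supply. Set $A_i,B_i=\mathds{E}[F\,|\,\omega_1,\dots,\omega_{i-1},\omega_i=1,0]$, so that $|\Delta_iF|\le|A_i-B_i|$. Because the bonds are dependent, switching $e_i$ changes the conditional law of every unrevealed edge, so $A_i-B_i$ cannot be bounded edge-locally. The paper resolves this with Lemma \ref{Lem4}. It builds, from coupled heat-bath dynamics together with Lemma \ref{Lem2}, a monotone coupling $\omega'\le\omega$ of the two conditional FK measures. In this coupling every disagreeing edge has its dual inside the dual cluster $C_{e_i^*}(\omega'^*)$. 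Hence $|A_i-B_i|$ is at most $\delta f_n$ times the expected number of sites of $\mathcal{C}_x^0$ enclosed by that subcritical dual cluster.

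One then sums over \emph{all} edges of the volume. This is necessary because $F$ depends on the global events $z\leftrightarrow\infty$, so revealing only edges of $\mathcal{C}_x^0$ does not exhaust $F-\mathds{E}[F]$. The sum is controlled by the isoperimetric bound $|\mathrm{Int}(C^*)|\le|C^*|^2/16$, Cauchy--Schwarz, the moment bound of Lemma \ref{Lem1} for the dual model, and exponential decay of $\phi_p[C_{e^*}\circlearrowleft0]$ in $\|e\|$. Together these give $\mathrm{Var}(F)=O\bigl((\delta f_n)^2\,4^n\bigr)$. Without this localization of the disagreement, or an equivalent mixing statement, neither of your routes closes. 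As written, the proposal establishes only the first of the two bounds.
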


\begin{proof}
    The beginning of this proof is very similar to the previous one for Proposition \ref{Prop2}. Switching to the Edwards-Sokal coupling and then applying the law of total variance we again consider the expression
    \begin{equation}
        \mathrm{Var}(f_n(\textbf{x}^+))=\mathds{E}[\mathrm{Var}(f_n(\textbf{x}^+)|\omega)]+\mathrm{Var}(\mathds{E}[f_n(\textbf{x}^+)|\omega]).
    \end{equation}
    Note, however, that this time we consider $-$-boundary conditions, i.e.\ we set $\mathds{E}[\cdot]=\mathds{P}_{\mathcal{C}_\Lambda^0,\beta}^-$ and $\mathrm{Var}(\cdot)=\mathrm{Var}_{\mathcal{C}_\Lambda^0,\beta}^-$. One rather unfortunate consequence is that the second term $\mathrm{Var}(\mathds{E}[f_n(\textbf{x}^+)|\omega])$ no longer vanishes for finite $n$. This is a by-product of us losing the $\mathds{Z}_2$-symmetry such that $\mathds{E}[f_n^x(\textbf{x}^+)|\omega]$ now depends on the bond configuration $\omega$, particularly the shape of $C_\infty$ which denotes the cluster containing the boundary of our finite volume (it will become the unique infinitely large cluster in the infinite volume limit). The good news are that obtaining a bound for the first term is almost equal to the proof of Proposition \ref{Prop2}. All steps up to the point where we took the infinite volume limit remain exactly the same with one exception. The boundary cluster $C_\infty$ has a fixed spin and thus does not contribute to the variance. Therefore, we find
    \begin{equation}
        \mathrm{Var}(f_n|\omega)\leq(\delta f_n)^2\sum_{C\in\mathcal{K}(\omega)\backslash \{C_\infty\}}|C\cap\mathcal{C}_x^0|^2.
    \end{equation}
    Here, we may rewrite
    \begin{equation}
        \sum_{C\in\mathcal{K}(\omega)\backslash \{C_\infty\}}|C\cap\mathcal{C}_x^0|^2=\sum_{z,w\in\mathcal{C}_x^0}\ind{z\notin C_\infty,z\leftrightarrow w},
    \end{equation}
    which gives us
    \begin{equation}\label{Eq6}
        \mathds{E}[\mathrm{Var}(f_n|\omega)]\leq(\delta f_n)^2\sum_{z,w\in\mathcal{C}_x^0}\mathds{E}[\ind{z\notin C_\infty,z\leftrightarrow w}].
    \end{equation}
    Before we take the infinite volume limit, we also want to find a bound for the second term. For better readability, let us denote $F(\omega):=\mathds{E}[f_n^x(\textbf{x}^+)|\omega]$. The plan is to once again use an idea akin to the Efron-Stein inequality. However, now our random variables are the bonds $\omega_e$ which are no longer independent making things a bit more tricky. We start off by implementing the martingale decomposition
    \begin{equation}
        F-\mathds{E}[F]=\sum_{i=1}^m\mathds{E}[F|\omega_1,...,\omega_i]-\mathds{E}[F|\omega_1,...,\omega_{i-1}],
    \end{equation}
    where we have ordered the edges $E(\mathcal{C}_\Lambda^0)=\{e_1,...,e_m\}$ in some way. The fact that in expectation, the off-diagonal terms vanish has nothing to do with the question of (in-)dependence of the variables. Therefore, we still obtain
    \begin{equation}\label{Eq5}
        \mathrm{Var}(F)=\sum_{i=1}^m\mathds{E}[(\Delta_iF)^2].
    \end{equation}
    The subsequent step we made in the proof of Proposition \ref{Prop2} can also be applied here, i.e.\ we find $|\Delta_iF|\leq|A_i-B_i|$ where
    \begin{equation}
        A_i,B_i=\mathds{E}[F|\omega_1,...,\omega_{i-1},\omega_i=1,0].
    \end{equation}
    The main complication due to the now dependent bond variables $\omega_i$ arises in the computation of $|A_i-B_i|$. In the independent case we were able to do something like $\mathds{E}[f]-\hat{\mathds{E}}[f]=\mathds{E}[f-\hat f]$ and then bound the difference $|f-\hat f|$. In order to achieve something similar here, we will prove in Lemma \ref{Lem4} below the existence of a coupling $\mathds{P}_i(\omega,\omega')$ with the properties
    \begin{enumerate}
        \item[i)] $\mathds{P}_i[F(\omega)-F(\omega')]=A_i-B_i$,
        \item[ii)] $\mathds{P}_i[\omega'\leq\omega]=1$,
        \item[iii)] $\mathds{P}_i[D(\omega',\omega)^*\subseteq C_{e_i^*}(\omega'^*)]=1$ where $D(\omega',\omega):=\{e\in\mathcal{E}\,:\,\omega_e'<\omega_e\}$.
    \end{enumerate}
    Recall that $\mathcal{E}^*$ is the set of dual edges and $\omega'^*$ is the configuration dual to $\omega'$. For now, let us continue under the assumption that such a coupling exists. Also, we may denote $\mathcal{K}^\circ(\omega):=\mathcal{K}(\omega)\backslash\{C_\infty\}$. In order to find a bound on $|A_i-B_i|$ we should be looking into
    \begin{equation}
        |F(\omega)-F(\omega')|=\left|\frac{1}{2^{|\mathcal{K}^\circ(\omega)|}}\sum_{\sigma\in\{-1,1\}^{\mathcal{K}^\circ(\omega)}}f_n(\textbf{x}^+(\sigma))-\frac{1}{2^{|\mathcal{K}^\circ(\omega')|}}\sum_{\sigma'\in\{-1,1\}^{\mathcal{K}^\circ(\omega')}}f_n(\textbf{x}^+(\sigma'))\right|.
    \end{equation}
    We may consider an embedding $i:\mathcal{K}(\omega)\to\mathcal{K}(\omega')$ such that $i(C):=\max\{|C'|\,:\,C'\in\mathcal{K}(\omega'),C'\subseteq C\}$. Note that, due to $\omega'\leq\omega$, every cluster $C\in\mathcal{K}(\omega)$ decomposes into a union of clusters in $\mathcal{K}(\omega')$. $i(C)$ then simply selects the largest constituent. Doing so, we find
    \begin{equation}
        F(\omega)-F(\omega')=\frac{1}{2^{|\mathcal{K}^\circ(\omega')|}}\sum_{\sigma\in\{-1,1\}^{\mathcal{K}^\circ(\omega)}}\sum_{\tilde\sigma\in\{-1,1\}^{\mathcal{K}^\circ(\omega')\backslash i(\mathcal{K}^\circ(\omega))}}\left[f_n(\textbf{x}^+(\sigma))-f_n(\textbf{x}^+(s(\sigma,\tilde\sigma)))\right]
    \end{equation}
    where $s(\sigma,\tilde\sigma)\in\{-1,1\}^{\mathcal{K}^\circ(\omega')}$ is a spin configuration defined via
    \begin{equation}
        s_{C'}(\sigma,\tilde\sigma):=\begin{cases}
            \sigma_{i^{-1}(C')}&C'\in i(\mathcal{K}^\circ(\omega))\\
            \tilde\sigma_{C'}&C'\notin i(\mathcal{K}^\circ(\omega))
        \end{cases}.
    \end{equation}
    Therefore, once again using the bound $\delta f_n$ together with the multilinearity of $f_n$, we obtain
    \begin{equation}
        |F(\omega)-F(\omega')|\leq\delta f_n\underbrace{\bigcup_{C'\in\mathcal{K}^\circ(\omega')\backslash i(\mathcal{K}^\circ(\omega))}|C'|}_{=:h(\omega,\omega')}.
    \end{equation}
    Inserting this bound into the computation of $|A_i-B_i|$, we arrive at
    \begin{equation}
        \mathrm{Var}(F)\leq(\delta f_n)^2\sum_{i=1}^m\mathds{E}\left[\left(\mathds{P}_i[h(\omega,\omega')]\right)^2\right]\leq(\delta f_n)^2\sum_{i=1}^m\mathds{E}\left[\mathds{P}_i[h(\omega,\omega')^2]\right],
    \end{equation}
    where we have again made use of Jensen's inequality in order to pull the square inside the expectation. Now let us think about the ramifications of property iii) of the coupling $\mathds{P}_i$ on the value of $h(\omega,\omega')$. Because $\omega,\omega'$ use wired boundary conditions, their duals have free boundary conditions. Therefore, $C_{e_i^*}(\omega'^*)$ is surely bounded. Since the cluster structures of $\omega,\omega'$ agree outside of $C_{e_i^*}(\omega'^*)$, it is always possible to choose an embedding that hits all spins not cut off from the universe through $C_{e_i^*}(\omega'^*)$ (although this might not be the one that minimizes $h$). See Figure \ref{Fig2} for a visualization. Therefore, we find that
    \begin{equation}
        h(\omega,\omega')\leq\sum_{z\in\mathcal{C}_x^0}\ind{C_{e_i^*}(\omega'^*)\circlearrowleft z}\implies\mathds{P}_i[h(\omega,\omega')^2]\leq\sum_{z,w\in\mathcal{C}_x^0}\mathds{P}_i[\ind{C_{e_i^*}(\omega'^*)\circlearrowleft z,C_{e_i^*}(\omega'^*)\circlearrowleft w}],
    \end{equation}
    where $C_{e^*}(\omega'^*)\circlearrowleft z$ corresponds to the event that the spin $z$ is completely surrounded by $C_{e^*}(\omega'^*)$. The remaining indicator function over the event $\mathcal{A}_{z,w}=\{C_{e_i^*}(\omega'^*)\circlearrowleft z,C_{e_i^*}(\omega'^*)\circlearrowleft w\}$ only depends on $\omega'$ which gives us
    \begin{equation}
        \sum_{z,w\in\mathcal{C}_x^0}\mathds{E}\left[\mathds{P}_i[\mathcal{A}_{z,w}]\right]=\sum_{z,w\in\mathcal{C}_x^0}\mathds{E}\left[\mathds{E}[\mathcal{A}_{z,w}|\omega_1,...,\omega_{i-1},\omega_i=0]\right]=\sum_{z,w\in\mathcal{C}_x^0}\mathds{E}[\mathcal{A}_{z,w}|\omega_i=0].
    \end{equation}
    Putting everything together, we are looking at
    \begin{equation}\label{Eq8}
        \mathrm{Var}(F)\leq(\delta f_n)^2\sum_{e\in E(\mathcal{C}_\Lambda^0)}\sum_{z,w\in\mathcal{C}_x^0}\mathds{E}[\mathcal{A}_{z,w}|\omega_e=0].
    \end{equation}
    Together with the result from (\ref{Eq6}) we arrive at the total bound
    \begin{equation}
        \mathrm{Var}(f_n)\leq(\delta f_n)^2\left(\sum_{z,w\in\mathcal{C}_x^0}\mathds{E}[\ind{z\notin C_\infty,z\leftrightarrow w}]+\sum_{e\in E(\mathcal{C}_\Lambda^0)}\sum_{z,w\in\mathcal{C}_x^0}\mathds{E}[\mathcal{A}_{z,w}|\omega_e=0]\right).
    \end{equation}
    This is the appropriate point to take the infinite volume limit. In order to do so we note that
    \begin{equation}
        \mathds{E}[\mathcal{A}_{z,w}|\omega_e=0]=\frac{\mathds{E}[\mathcal{A}_{z,w}]}{\mathds{E}[\omega_e=0]}.
    \end{equation}
    We do not have to worry about false contributions from configurations with $\omega_e=1$ as in that case $C_{e^*}(\omega^*)=\emptyset$ and thus such configurations are not in $\mathcal{A}_{z,w}$. By translation invariance of the infinite volume measure we note that $\mathds{E}[\omega_e=0]$ becomes independent of the edge $e$ in the limit $\Lambda\uparrow\mathds{Z}^2$ while it also remains bounded away from zero. This gives us
    \begin{equation}\label{Eq9}
        \mathrm{Var}_\beta^-(f_n)\leq(\delta f_n)^2\left(\sum_{z,w\in\mathcal{C}_x}\phi_{p(\beta)}[z\nlr\infty,z\leftrightarrow w]+\frac{1}{\phi_{p(\beta)}[\omega_0=0]}\sum_{e\in\mathcal{E}}\sum_{z,w\in\mathcal{C}_x^0}\phi_{p(\beta)}[\mathcal{A}_{z,w}]\right).
    \end{equation}

    \begin{figure}[h!]
        \centering
        \begin{tikzpicture}
          \draw[step=0.5cm,gray!20,very thin] (-2.5,-2.5) grid (2.5,2.5);
          
          \draw[thick,blue] (0,0.5) -- (0,1);
          \draw[thick,blue] (0,1) -- (1,1);
          \draw[thick,blue] (0,0.5) -- (1,0.5);
          \draw[thick,blue] (1,0.5) -- (1,1);
          
          \draw[thick,blue] (0,0) -- (0.5,0);
          \draw[thick,blue] (0,0) -- (0,-0.5);
          \draw[thick,blue] (-0.5,0) -- (-1,0);
          \draw[thick,blue] (0,-0.5) -- (-0.5,-0.5);
          \draw[thick,blue] (-0.5,0) -- (-0.5,0.5);
          \draw[thick,blue] (-0.5,0.5) -- (-1,0.5);
          \draw[thick,blue] (-1,0.5) -- (-1,0);
    
          \draw[thick,blue] (-1.5,-0.5) -- (-1.5,1.5);
          \draw[thick,blue] (-1.5,1.5) -- (-1,1.5);
          \draw[thick,blue] (-1,1.5) -- (-1,1);
          \draw[thick,blue] (-1,1) -- (-0.5,1);
          \draw[thick,blue] (-0.5,1) -- (-0.5,1.5);
          \draw[thick,blue] (-0.5,1.5) -- (1.5,1.5);
          \draw[thick,blue] (1.5,1.5) -- (1.5,1);
          \draw[thick,blue] (1.5,1) -- (2,1);
          \draw[thick,blue] (2,1) -- (2,0.5);
          \draw[thick,blue] (2,0.5) -- (1.5,0.5);
          \draw[thick,blue] (1.5,0.5) -- (1.5,0);
          \draw[thick,blue] (1.5,0) -- (1,0);
          \draw[thick,blue] (1,0) -- (1,-0.5);
          \draw[thick,blue] (1,-0.5) -- (0.5,-0.5);
          \draw[thick,blue] (0.5,-0.5) -- (0.5,-1);
          \draw[thick,blue] (0.5,-1) -- (0,-1);
          \draw[thick,blue] (0,-1) -- (0,-1.5);
          \draw[thick,blue] (0,-1.5) -- (-0.5,-1.5);
          \draw[thick,blue] (-0.5,-1.5) -- (-0.5,-1);
          \draw[thick,blue] (-0.5,-1) -- (-1,-1);
          \draw[thick,blue] (-1,-1) -- (-1,-0.5);
          \draw[thick,blue] (-1,-0.5) -- (-1.5,-0.5);

          \draw[thick,blue] (-1,-1) -- (-1.5,-1);
          \draw[thick,blue] (-1.5,-1) -- (-1.5,-1.5);
          \draw[thick,blue] (-2.5,-1) -- (-2,-1);
          \draw[thick,blue] (-2,-0.5) -- (-2,-1);
          \draw[thick,blue] (-1.5,0) -- (-2.5,0);
          \draw[thick,blue] (-2,0) -- (-2,0.5);
          \draw[thick,blue] (-2.5,1) -- (-2,1);
          \draw[thick,blue] (-2,1) -- (-2,2);
          \draw[thick,blue] (-2,2) -- (-1.5,2);
          \draw[thick,blue] (-1.5,2) -- (-1.5,2.5);
          \draw[thick,blue] (-1,1.5) -- (-1,2);
          \draw[thick,blue] (-1,2) -- (0.5,2);
          \draw[thick,blue] (0,1.5) -- (0,2);
          \draw[thick,blue] (1,2.5) -- (1,2);
          \draw[thick,blue] (1,2) -- (2.5,2);
          \draw[thick,blue] (2,2) -- (2,1.5);
          \draw[thick,blue] (2,1.5) -- (2.5,1.5);
          \draw[thick,blue] (-1,-0.5) -- (-1.5,-0.5);
          \draw[thick,blue] (2,0.5) -- (2,0);
          \draw[thick,blue] (2,0) -- (2.5,0);
          \draw[thick,blue] (2,0) -- (2,-0.5);
          \draw[thick,blue] (2,-0.5) -- (1.5,-0.5);
          \draw[thick,blue] (2,1) -- (2.5,1);
          \draw[thick,blue] (0.5,-1) -- (0.5,-2);
          \draw[thick,blue] (0.5,-2) -- (1.5,-2);
          \draw[thick,blue] (1,-2) -- (1,-1.5);
          \draw[thick,blue] (1,-1) -- (2,-1);
          \draw[thick,blue] (2,-1) -- (2,-2.5);
          \draw[thick,blue] (2,-1.5) -- (2.5,-1.5);
          \draw[thick,blue] (2,-2) -- (2.5,-2);
          \draw[thick,blue] (-2,-1.5) -- (-2,-2.5);
          \draw[thick,blue] (-2,-2) -- (-0.5,-2);
          \draw[thick,blue] (-1,-2) -- (-1,-1.5);
          \draw[thick,blue] (-0.5,-2) -- (-0.5,-2.5);

          \draw[double, line width=0.8pt, double distance=1pt, cyan] (0.5,0) -- (0.5,0.5);
          \draw[double, line width=0.8pt, double distance=1pt, cyan] (-1,0.5) -- (-1.5,0.5);
          \draw[double, line width=0.8pt, double distance=1pt, cyan] (0,-0.5) -- (0.5,-0.5);
          \draw[double, line width=0.8pt, double distance=1pt, cyan] (1,1) -- (1,1.5);
          
          \draw[thick, red, dashed] (-0.75,-0.75) -- (-0.75,-0.25);
          \draw[thick, red, dashed] (-0.75,-0.25) -- (-1.25,-0.25);
          \draw[thick, red, dashed] (-1.25,-0.25) -- (-1.25,0.75);
          \draw[thick, red, dashed] (-1.25,0.75) -- (-0.25,0.75);
          \draw[thick, red, dashed] (-0.25,0.75) -- (-0.25,1.25);
          \draw[thick, red, dashed] (-0.25,1.25) -- (1.25,1.25);
          \draw[thick, red, dashed] (1.25,1.25) -- (1.25,0.25);
          \draw[thick, red, dashed] (1.25,0.25) -- (0.75,0.25);
          \draw[thick, red, dashed] (0.75,0.25) -- (0.75,-0.25);
          \draw[thick, red, dashed] (0.75,-0.25) -- (0.25,-0.25);
          \draw[thick, red, dashed] (0.25,-0.25) -- (0.25,-0.75);
          \draw[thick, red, dashed] (0.25,-0.75) -- (-0.75,-0.75);
    
          \draw[thick, red, dashed] (0.75,0.25) -- (-0.25,0.25);
          \draw[thick, red, dashed] (-0.25,0.25) -- (-0.25,0.75);
          \draw[thick, red, dashed] (-0.25,-0.75) -- (-0.25,-1.25);
          \draw[thick, red, dashed] (1.25,0.75) -- (1.75,0.75);
          \draw[thick, red, dashed] (-1.25,0.75) -- (-1.25,1.25);
          \draw[thick, red, dashed] (-0.75,-0.25) -- (-0.25,-0.25);
          \draw[thick, red, dashed] (-0.25,-0.25) -- (-0.25,0.25);

          \draw[ultra thick, black, dotted] (0,0) -- (0,0.5);
          
          
        \end{tikzpicture}
    
        \caption{The solid edges represent a possible configuration $\omega'$ with the dotted edge corresponding to $e_i$, which is set to $0$ for $\omega'$ and $1$ for $\omega$. The dashed edges represent the dual cluster $C_{e_i^*}(\omega'^*)$. At last, the double edges are additional open edges in $\omega$ that are closed in $\omega'$. It is clear that spin configurations on $\omega$ and $\omega'$ can always be paired (up to multiplicity) such that they match everywhere except on the few islands inside of $C_{e_i^*}(\omega'^*)$. In $\omega$, these islands may connect to one another and therefore must have a common spin while they remain separated in $\omega'$ and possess independent spins.}
        \label{Fig2}
    \end{figure}

    The remaining step is to prove that both sums in (\ref{Eq9}) are of order $O(|\mathcal{C}_x^0|)$ and thus get suppressed by $(\delta f_n)^2$ in the limit $n\to\infty$. Recall from the introduction that in the supercritical phase, the susceptibility of finite clusters is finite. Thus, we find the rather simple bound
    \begin{equation}
        \sum_{z,w\in\mathcal{C}_x}\phi_{p(\beta)}[z\nlr\infty,z\leftrightarrow w]\leq|\mathcal{C}_x^0|\cdot\chi_p^f
    \end{equation}
    for the first sum. In order to obtain a bound for the second sum, we want to start by explicitly carrying out the sum over $w$ giving us
    \begin{align}
        \sum_{w\in\mathcal{C}_0^x}\phi_{p(\beta)}[C_{e^*}(\omega^*)\circlearrowleft z,C_{e^*}(\omega^*)\circlearrowleft w]&\leq\sum_{w\in\mathds{Z}^2}\phi_{p(\beta)}[C_{e^*}(\omega^*)\circlearrowleft z,C_{e^*}(\omega^*)\circlearrowleft w]\nonumber\\
        &=\phi_{p(\beta)}[\ind{C_{e^*}(\omega^*)\circlearrowleft z}\cdot|\mathrm{Int}(C_{e^*}(\omega^*)|],
    \end{align}
    where we set the interior of a dual cluster to $\mathrm{Int}(C^*):=\{y\in\mathds{Z}^2\,:\,C^*\circlearrowleft y\}$. The isoperimetric inequality on $\mathds{Z}^2$ then gives us
    \begin{equation}
        |\mathrm{Int}(C^*)|\leq \frac{1}{16}|C^*|^2
    \end{equation}
    and thus
    \begin{equation}
        \sum_{e\in\mathcal{E}}\sum_{z,w\in\mathcal{C}_x^0}\phi_{p(\beta)}[\mathcal{A}_{z,w}]\leq\frac{1}{16}\sum_{z\in\mathcal{C}_x^0}\sum_{e\in\mathcal{E}}\phi_{p(\beta)}[\ind{C_{e^*}(\omega^*)\circlearrowleft z}\cdot|C_{e^*}(\omega^*)|^2].
    \end{equation}
    Using Cauchy-Schwarz we further obtain
    \begin{equation}
        \phi_{p(\beta)}[\ind{C_{e^*}(\omega^*)\circlearrowleft z}\cdot|C_{e^*}(\omega^*)|]\leq\sqrt{\phi_{p(\beta)}[\ind{C_{e^*}(\omega^*)\circlearrowleft z}]}\sqrt{\phi_{p(\beta)}[|C_{e^*}(\omega^*)|^4]}.
    \end{equation}
    Since $p(\beta)>p_c$, the dual system is subcritical. Therefore, Lemma \ref{Lem1} can be used to deduce $\sqrt{\phi_{p(\beta)}[|C_{e^*}(\omega^*)|^4]}=M<\infty$. Also, by translation invariance $M$ is independent of the choice of $e\in\mathcal{E}$. So far, we managed to arrive at
    \begin{equation}
        \sum_{e\in\mathcal{E}}\sum_{z,w\in\mathcal{C}_x^0}\phi_{p(\beta)}[\mathcal{A}_{z,w}]\leq\frac{M}{16}\sum_{z\in\mathcal{C}_x^0}\sum_{e\in\mathcal{E}}\sqrt{\phi_{p(\beta)}[\ind{C_{e^*}(\omega^*)\circlearrowleft z}]}=\frac{M}{16}|\mathcal{C}_x^0|\sum_{e\in\mathcal{E}}\sqrt{\phi_{p(\beta)}[\ind{C_{e^*}(\omega^*)\circlearrowleft 0}]},
    \end{equation}
    where we again utilized the translation invariance of $\phi_{p(\beta)}$. The final remaining step is to use
    \begin{equation}
        \sqrt{\phi_{p(\beta)}[\ind{C_{e^*}(\omega^*)\circlearrowleft 0}]}\leq\sqrt{\phi_{p(\beta)}[\exists y^*\,:\,e^*\leftrightarrow y^*,\|y^*-e^*\|>\|e^*\|]}\leq\sqrt{C_{p^*}'}e^{-c_{p^*}'\|e^*\|/2},
    \end{equation}
    where we have again leveraged the fact that the dual system is subcritical. The distance from the origin $\|e^*\|$ is simply the minimal distance over the two endpoints of $e^*$. This exponential decay implies that the sum over $e\in\mathcal{E}$ is finite completing the proof.
\end{proof}

In order to finalize the previous proof we now want to establish the existence of the coupling $\mathds{P}$ we made use of. This proof is based on the proof of Lemma 1.5 in \cite{10.1007/978-3-030-32011-9_2}. The only difference is the addition of property iii).

\begin{lemma}\label{Lem4}
    Let $G=(V,E)\subset\mathds{Z}^2$ be some subgraph, for once not necessarily connected, induced and with a connected complement\footnote{We want to apply this lemma to the subgraph $\mathcal{C}_\Lambda^0$ with edges $e_1,...,e_i$ removed which does not enjoy these properties even if $\mathcal{C}_\Lambda^0$ does.}. Further, let $\eta,\tilde\eta\in\Omega_{\mathcal{E}\backslash E}$ be some planar boundary conditions such that
    \begin{equation}
        \eta_e=\tilde\eta_e\quad\forall e\in\mathcal{E}\backslash(E\cup\{a\})
    \end{equation}
    for some fixed edge $a\in\mathcal{E}\backslash E$. At $a$, we fix $\eta_a=0,\tilde\eta_a=1$. Then there exists a coupling $\mathds{P}(\omega,\tilde\omega)$ on $\{0,1\}^{E}$ satisfying
    \begin{enumerate}
        \item[i)] $\mathds{P}(\omega)=\phi_{p,G}^\eta(\omega),\mathds{P}(\tilde\omega)=\phi_{p,G}^{\tilde\eta}(\tilde\omega)$,
        \item[ii)] $\mathds{P}[\omega\leq\tilde\omega]=1$,
        \item[iii)] $\mathds{P}[D(\omega,\tilde\omega)^*\subseteq C_{a^*}(\omega^*,\eta^*)]=1$ where $D(\omega,\tilde\omega):=\{e\in E\,:\,\omega_e<\tilde\omega_e\}$.
    \end{enumerate}
\end{lemma}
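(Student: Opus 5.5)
We construct the coupling by exploring the dual cluster of $a^*$ with respect to the smaller configuration $\omega$, in the spirit of the proof of Lemma 1.5 in \cite{10.1007/978-3-030-32011-9_2}. Enumerate the edges of $E$ dynamically. At each step we look at the set $\mathcal{D}_t$ of dual edges already revealed to be closed-dual, i.e.\ $e^*$ with $\omega_e=0$, that are connected to $a^*$ through dual-open edges of $\omega^*$ together with the boundary dual edges coming from $\eta^*$. Since $\eta_a=0$, $a^*$ is dual-open, so this cluster is non-empty. We then pick the next unrevealed edge $e\in E$ whose dual $e^*$ is adjacent to $\mathcal{D}_t$. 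If none exists, the exploration stops.

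At each step we sample $(\omega_e,\tilde\omega_e)$ jointly from the conditional laws of $\phi^\eta_{p,G}$ and $\phi^{\tilde\eta}_{p,G}$ given the revealed values. We use the monotone coupling via a single uniform variable $U_e$: set $\omega_e=\ind{U_e\le \phi^\eta[\omega_e=1\,|\,\dots]}$ and $\tilde\omega_e=\ind{U_e\le \phi^{\tilde\eta}[\omega_e=1\,|\,\dots]}$. This keeps $\omega_e\le\tilde\omega_e$ as long as the conditional probabilities are ordered. That ordering follows from the comparison between boundary conditions (stochastic domination in the random cluster model, the FKG-type monotonicity recalled in the stochastic domination subsection), provided the revealed values satisfy $\omega\le\tilde\omega$ on the explored edges. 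The induction hypothesis guarantees this.

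When the exploration stops, the set $S$ of revealed-closed edges whose duals form $C_{a^*}(\omega^*,\eta^*)$ separates the unexplored region from $a$. More precisely, every unexplored edge $e$ has $e^*\notin C_{a^*}$, and the dual cluster boundary consists of edges open in $\omega$, hence open in $\tilde\omega$. The key observation is that the boundary of this dual cluster is a primal circuit of $\omega$-open edges, and it also surrounds (or is attached to) the modification at $a$. On the unexplored region, the conditional law of both measures is therefore a random cluster measure with the same boundary condition. Conditioned on the revealed data, the effect of $a$ on the connectivity outside is identical for $\eta$ and $\tilde\eta$, because $a$'s endpoints are already connected by an $\omega$-open path around the dual cluster, or are separated in both. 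The domain Markov property gives this. We therefore sample the remaining edges identically, $\omega_e=\tilde\omega_e$.

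This yields i) by construction, since each coordinate is sampled from the correct conditional laws. It yields ii) from the monotone step-by-step sampling. It yields iii) because differences can only occur on explored edges, all of which have duals in or adjacent to $C_{a^*}(\omega^*,\eta^*)$. For edges where $\omega_e=0<\tilde\omega_e$, the dual $e^*$ is dual-open and adjacent to the cluster, hence belongs to it.

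The main obstacle is the third paragraph: justifying rigorously that after the exploration the two conditional measures coincide on the unexplored edges. Naively the boundary conditions still differ at $a$. One must show that the cluster-count weight $q^{k}$ is affected identically by $\eta_a$ versus $\tilde\eta_a$ once the enclosing $\omega$-open circuit or separating structure is revealed. This is where planarity of $\eta$ (hence of the duality) is essential. I would argue it by explicitly comparing $k(\omega^\eta)-k(\omega^{\tilde\eta})$ as a function of the unexplored edges and showing it is constant given the revealed information. Non-connectedness of $G$ also needs care here, to ensure boundary clusters are counted consistently.
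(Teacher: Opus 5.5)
Your route is genuinely different from the paper's. You reveal the dual cluster of $a^*$ in the smaller configuration and sample monotonically along the way. The paper instead couples two heat-bath chains through common clocks and uniforms, and uses Lemma~\ref{Lem2} only locally: an edge whose dual does not touch $C_{a^*}$ has the same endpoint connectivity in both chains. Your construction gives ii) and iii) cleanly. The marginal of $\tilde\omega$ is also fine by the standard martingale argument, even though the order of revealment is driven by $\omega$.

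The genuine gap is exactly the step you flag, and the reason you give for it is wrong. After the exploration, the conditional laws on the unexplored edges $U$ are random-cluster measures with boundary configurations $\xi=(\omega_R,\eta)$ and $\tilde\xi=(\tilde\omega_R,\tilde\eta)$. These are not the same boundary condition. They differ at $a$ and at every revealed edge with $\omega_e=0<\tilde\omega_e$. Moreover, ``separated in both'' fails whenever $a^*$, or such an edge, lies on a dual circuit of $C$. If that circuit encloses unexplored edges, $\xi$ keeps the inner and outer $\omega$-open boundary circuits apart while $\tilde\xi$ joins them. So the wirings induced on $U$ really differ, and the domain Markov property alone does not remove this.

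What you need, and what is true, is that $k(\omega_U\cup\tilde\xi)-k(\omega_U\cup\xi)$ does not depend on $\omega_U$. Argue it via the faces of $\mathds{R}^2$ minus $C$:
\begin{enumerate}
\item[1.] No edge of $U$ crosses $C$, so each lies in a single face, and $\xi$-paths never change faces.
\item[2.] $\xi$ and $\tilde\xi$ agree except on edges crossing $C$.
\item[3.] Every edge whose dual touches $C$ without belonging to it is $\omega$-open. Inside a given face, these edges link all endpoints of crossing edges into a single $\omega$-cluster $\Gamma_j$; this is the planar fact already behind Lemma~\ref{Lem2}.
\end{enumerate}
Hence a $\tilde\xi$-path can switch faces only through the $\Gamma_j$'s. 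The partition of the vertices of $U$ induced by $\tilde\xi$ is therefore obtained from that of $\xi$ by merging blocks attached to the $\Gamma_j$'s, with at most one such block per face. Since $\omega_U$ never joins two faces, these blocks lie in distinct clusters for every $\omega_U$. The merging therefore lowers $k$ by the same amount whatever $\omega_U$ is. So the two conditional laws coincide, and copying $\omega_U$ into $\tilde\omega_U$ is legitimate.

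Once this is filled in, your approach buys something over the paper's. In the dynamical coupling, iii) must also survive updates that open an edge of $C_{a^*}$, which the paper's induction never checks, and this can fail. Let $a,e,f,g$ be the four edges at a vertex $v$, with $e,f,g$ closed in $\omega$, only $e$ among them open in $\tilde\omega$, and all other edges open. Now open $f$ and then $g$ in both chains. Then $e$ still disagrees, but $e^*$ is cut off from $a^*$. In your exploration, disagreements are confined to $C_{a^*}$ by construction.
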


However, for better readibility, let us separate the following lemma from the main proof.

\begin{lemma}\label{Lem2}
    Let $\omega,\tilde\omega\in\Omega_\mathcal{E}$ such that $\omega\leq\tilde\omega$ and there exists some cluster $C\in\mathcal{K}(\omega^*)$ such that $\omega_e<\tilde\omega_e\iff e^*\in C$. Then we find that for any edge $\braket{xy}\in\mathcal{E}$ such that $\omega_{\braket{xy}}=0$ and $\braket{xy}^*\notin C$ that
    \begin{equation}
        x\overset{\omega}{\longleftrightarrow}y\iff x\overset{\tilde\omega}{\longleftrightarrow}y.
    \end{equation}
\end{lemma}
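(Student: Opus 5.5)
One direction is immediate from monotonicity. Since $\omega\leq\tilde\omega$, every $\omega$-open path is also $\tilde\omega$-open, so $x\overset{\omega}{\longleftrightarrow}y$ implies $x\overset{\tilde\omega}{\longleftrightarrow}y$.

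For the converse I take a $\tilde\omega$-open path $\gamma$ from $x$ to $y$ and modify it into an $\omega$-open path with the same endpoints. The only edges of $\gamma$ that can fail to be $\omega$-open are those in $D:=\{e\,:\,\omega_e<\tilde\omega_e\}$, and by hypothesis their duals all lie in the single dual cluster $C\in\mathcal{K}(\omega^*)$. The plan is to reroute $\gamma$ around each such edge. Let $e=\braket{uv}\in D$ be an edge used by $\gamma$. Since $e^*\in C$, the dual edge $e^*$ is $\omega^*$-open and belongs to $C$. I look at the boundary of the finite set of primal faces, i.e.\ dual vertices, spanned by $C$. Its outer boundary (the hull) is a closed contour of primal edges, and this contour consists of edges whose duals touch $C$ without belonging to it. By maximality of $C$ in $\mathcal{K}(\omega^*)$, such dual edges are $\omega^*$-closed, so the primal edges on the contour are $\omega$-open.

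The endpoints $u,v$ of $e$ are adjacent to a dual edge of $C$. I first want to show they lie on, or are $\omega$-connected to, this open boundary contour. The cleanest way is to follow the circuit of primal edges surrounding the connected set $C$, a standard planar-duality fact. Then $u$ and $v$ are joined by an $\omega$-open arc of that circuit avoiding $e$, and I splice this arc into $\gamma$ in place of $e$. Iterating over the finitely many $D$-edges on $\gamma$ yields an $\omega$-open walk from $x$ to $y$.

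The hypothesis $\braket{xy}^*\notin C$ together with $\omega_{\braket{xy}}=0$ is where I expect the subtlety, and I need to check whether it is really required. If $\braket{xy}^*\notin C$, the closed edge $\braket{xy}$ lies in a different dual cluster from $C$. I would use it to rule out the case where $x$ and $y$ lie in distinct components separated only by $C$ itself, for instance $x$ inside a hole of $C$ and $y$ outside. In that case the outer contour alone does not suffice, and one needs inner boundary contours of $C$ around each hole. Each hole's boundary is likewise $\omega$-open, and the endpoints of any $D$-edge on $\gamma$ lie on a common boundary component of $C$, because $u$ and $v$ are separated from each other only by $e^*$ within the face structure.

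So the main obstacle is the planar topology step: making rigorous that for a finite dual cluster $C$, both endpoints of any primal edge crossed by $C$ lie on a common $\omega$-open boundary circuit of $C$. I would establish this via the standard fact that the edges dual to the edge-boundary of a connected set of faces form open circuits, as in the proof of Lemma 1.5 in \cite{10.1007/978-3-030-32011-9_2}. Wired or boundary effects are handled by the planar boundary conditions, which leave $C$ finite and bounded.
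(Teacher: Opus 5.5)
Your monotone direction is fine, but the converse rests on a false claim: that both endpoints of every $D$-edge $e=\langle uv\rangle$ on $\gamma$ lie on a common $\omega$-open boundary circuit of $C$, so that $e$ can be spliced out on its own. This fails whenever $e^*$ lies on a cycle of $C$. For example, suppose $C$ contains the four dual edges surrounding a primal vertex $u$. Then all four primal edges at $u$ are $\omega$-closed and $u$ is $\omega$-isolated. For $e=\langle uv\rangle\in D$ there is then no $\omega$-open arc from $u$ to $v$ at all: $u$ sits in a hole of $C$ and $v$ on the outer side, so ``separated only by $e^*$'' is wrong. Your per-edge step never uses $\langle xy\rangle^*\notin C$, so it proves too much. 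Applied to the one-edge path $\gamma=(u,v)$ with $x=u$, $y=v$, it would give $u\overset{\omega}{\longleftrightarrow}v$, which is false. This also settles your doubt: the hypothesis $\langle xy\rangle^*\notin C$ is genuinely needed.

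The repair, which is the paper's argument, is to splice out the whole middle of $\gamma$ at once rather than edge by edge. Let $v_i$ be the vertex just before the first $D$-edge of $\gamma$, and $v_j$ the vertex just after the last one. The subpaths from $x$ to $v_i$ and from $v_j$ to $y$ contain no $D$-edges. Hence they are $\omega$-open and never cross $C$, so $v_i$ lies in the component of the complement of $C$ containing $x$, and $v_j$ in the one containing $y$. Because $\langle xy\rangle^*\notin C$, the segment from $x$ to $y$ does not meet $C$, so these are the same component $U$. Both $v_i$ and $v_j$ lie in $U$ and are adjacent to $C$ across a $D$-edge. They are therefore on the single boundary circuit of $U$ facing $C$, whose primal edges are $\omega$-open by maximality of $C$; this is exactly the circuit fact you wanted to use. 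This gives $x\overset{\omega}{\longleftrightarrow}v_i\overset{\omega}{\longleftrightarrow}v_j\overset{\omega}{\longleftrightarrow}y$. Whatever $\gamma$ does in between, including excursions into holes of $C$, is simply discarded. In your language: reroute maximal excursions of $\gamma$ out of $U$, never individual crossing edges.
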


\begin{proof}
    Due to $\omega\leq\tilde\omega$ the $\implies$-direction is trivial. For the converse, let $P=[v_0=x,v_1,...,,v_N=y]$ be a $\tilde\omega$-open path that contains some edges which are closed in $\omega$.  Further, let $i=\min\{k\in\{0,...,N\}\,:\,\braket{v_kv_{k+1}}^*\in C\}$ and $j=\max\{k\in\{0,...,N\}\,:\,\braket{v_{k-1}v_k}^*\in C\}$. Since $P$ contains at least one edge that is closed in $\omega$, implying that its dual lies in $C$, we deduce that $i<j$ are well-defined. By construction, all edges $\braket{v_{l-1}v_l},l=1,...,i$ are open in $\tilde\omega$ and their duals are not contained in $C$, hence, they must be open in $\omega$ as well. We conclude that $x\overset{\omega}{\longleftrightarrow}v_i$ and by the same line of reasoning also $y\overset{\omega}{\longleftrightarrow}v_j$ (in case that $i=0$ or $j=N$ the connection is trivial). Furthermore, $v_i,v_j$ lie on the same $\omega$-open boundary of $C$ as $x,y$ must be part of the same connected component of the complement of $C$ (due to the condition that $\braket{xy}^*\notin C$). This implies that $x\overset{\omega}{\longleftrightarrow}v_i\overset{\omega}{\longleftrightarrow}v_j\overset{\omega}{\longleftrightarrow}y$. We conclude that any $\tilde\omega$-open path that connects $x,y$ and is closed in $\omega$ allows for a construction of an $\omega$-open path connecting $x,y$ which proves $\impliedby$.
\end{proof}

With this in mind, we shall proceed with the proof of Lemma \ref{Lem4}.\medskip

\textit{Proof of Lemma \ref{Lem4}.} Our goal is to construct a continuous-time Markov chain which converges to the desired coupling. Therefore, let us assign both an exponential clock\footnote{We can think of an exponential clock as some large number of unstable atoms where the clock rings whenever an atom decays. However, every time one atom decays we add a new one such that the average decay rate remains the same for all times.} $\mathscr{C}_e$ as well as a family of uniform distributed random variables $U_{e,k}\in[0,1],k\in\mathds{N}$ to every edge $e\in E$. Further, we choose some arbitrary starting configuration $\omega^0=\tilde\omega^0=\omega_0$. The evolution in time works as follow: If at time $t$ the clock $\mathscr{C}_e$ rings for the $k$th time, we update the bond at $e$ via
\begin{align}
    \omega^t_e:=\begin{cases}
        1&U_{e,k}\geq\phi_{p,G}^\eta[\omega_e=0\,|\,\omega_{E\backslash \{e\}}=\omega_{E\backslash \{e\}}^{t^-}]\\
        0&\mathrm{otherwise}
    \end{cases},\nonumber\\
    \tilde\omega^t_e:=\begin{cases}
        1&U_{e,k}\geq\phi_{p,G}^{\tilde\eta}[\omega_e=0\,|\,\omega_{E\backslash \{e\}}=\tilde\omega_{E\backslash \{e\}}^{t^-}]\\
        0&\mathrm{otherwise}
    \end{cases},
\end{align}
where $\omega^{t^-},\tilde\omega^{t^-}$ are the configurations just before the clock rang. Note that we assume $p\in(0,1)$ here such that the Markov chains $(\omega^t)_{t\in\mathds{R}_{\geq0}}$ and $(\tilde\omega^t)_{t\in\mathds{R}_{\geq0}}$ are both irreducible. Further, the rules for the bond updates are chosen such that $\phi_{p,G}^\eta$ and $\phi_{p,G}^{\tilde\eta}$ are the (by irreducibility of the chains unique) stationary measures. Therefore, the law of $\omega^t$ converges to $\phi_{p,G}^\eta$ and likewise for $\tilde\omega^t$ and $\phi_{p,G}^{\tilde\eta}$. Properties ii) and iii) can be proven using a continuous time induction. In order to do so, we may assume that at time $t$ the clock $\mathscr{C}_{\braket{xy}}$ rings and the pair $(\omega^{t^-},\tilde\omega^{t^-})$ satisfies both ii) and iii) (both properties are clear at $t=0$). First, let us quickly compute the conditional expectation $\phi_{p,G}^\eta[\omega_{\braket{xy}}=0\,|\,\omega_{E\backslash \{\braket{xy}\}}=\omega_{E\backslash \{\braket{xy}\}}^{t^-}]$. In case that $x,y$ are already connected with edges in $\mathcal{E}\backslash\{\braket{xy}\}$, we find
\begin{equation}
    \phi_{p,G}^\eta[\omega_{\braket{xy}}=0\,|\,\omega_{E\backslash \{\braket{xy}\}}=\omega_{E\backslash \{\braket{xy}\}}^{t^-}]=1-p.
\end{equation}
If, however, the edge $\braket{xy}$ would connect two previously disjoint clusters we would instead find
\begin{equation}
    \phi_{p,G}^\eta[\omega_{\braket{xy}}=0\,|\,\omega_{E\backslash \{\braket{xy}\}}=\omega_{E\backslash \{\braket{xy}\}}^{t^-}]=\frac{2(1-p)}{2(1-p)+p}=\frac{1-p}{1-p/2}\geq1-p.
\end{equation}
For ii), we realize that $\omega^{t^-}\leq\tilde\omega^{t^-},\eta\leq\tilde\eta\implies$ whenever $x,y$ are connected in $\mathcal{E}\backslash\{\braket{xy}\}$ regarding $(\omega^{t^-},\eta)$ they are so regarding $(\tilde\omega^{t^-},\tilde\eta)$ as well. Thus, we obtain
\begin{equation}
    \phi_{p,G}^\eta[\omega_{\braket{xy}}=0\,|\,\omega_{E\backslash \{\braket{xy}\}}=\omega_{E\backslash \{\braket{xy}\}}^{t^-}]\geq\phi_{p,G}^{\tilde\eta}[\omega_e=0\,|\,\omega_{E\backslash \{e\}}=\tilde\omega_{E\backslash \{e\}}^{t^-}]\implies\omega^t\leq\tilde\omega^t.
\end{equation}
Therefore, ii) still holds at time $t$ and thus by induction for all times. For iii), we shall additionally assume that $\braket{xy}^*\not\leftrightarrow C_{a^*}((\omega^{t^-})^*,\eta^*)$ (meaning neither of the two endpoints of $\braket{xy}^*$ are contained in $C_{a^*}((\omega^{t^-})^*,\eta^*)$). Let us set the two configurations $\xi,\tilde\xi\in\Omega_\mathcal{E}$ defined as
\begin{equation}
    \xi_e:=\begin{cases}
        \omega^{t^-}_e&e\in E\backslash\{\braket{xy}\}\\
        0&e=\braket{xy}\\
        \eta_e&e\in\mathcal{E}\backslash E
    \end{cases},\qquad\tilde\xi_e:=\begin{cases}
        \tilde\omega^{t^-}_e&e\in E\backslash\{\braket{xy}\}\\
        0&e=\braket{xy}\\
        \tilde\eta_e&e\in\mathcal{E}\backslash E
    \end{cases}.
\end{equation}
By our induction assumption, the pairing $\xi,\tilde\xi$ satisfy the conditions of Lemma \ref{Lem2} where the cluster containing the differences is $C_{a^*}((\omega^{t^-})^*,\eta^*)$. Since $\braket{xy}^*\notin C_{a^*}((\omega^{t^-})^*,\eta^*)$, we conclude that $x\overset{\xi}{\longleftrightarrow}y\iff x\overset{\tilde\xi}{\longleftrightarrow}y$. This implies that the bond gets updated in the same manner for both $\omega^t$ and $\tilde\omega^t$ and hence, possible differences can only be generated at edges for which their dual connects to $C_{a^*}((\omega^{t^-})^*,\eta^*)$.\qed

\medskip

This lemma completes the proof of Proposition \ref{Prop1}. Thus, we have now successfully established that for any $\beta>\beta_c$ the variance of $f_n(\textbf{x}^+)$ regarding $\mu_{\beta}^-$ vanishes in the limit $n\to\infty$. By symmetry of the problem, this of course also proves the same statement for $\mu_\beta^+$. However, recall that it is not clear so far which value $f_n(\textbf{x}^+)$ attains in the limit. In the high-temperature phase we knew that the system remained $\mathds{Z}_2$-symmetric and thus $f_n(\textbf{x}^+)$ had to attain the value $1/2$ $\mu_\beta$-a.s. In the symmetry broken phase the vanishing variance only tells us that $f_n(\textbf{x}^+)$ attains \textbf{some} value $\mu_\beta^-$-a.s. in the limit $n\to\infty$. In order to figure out which value, we consider the recursion
\begin{equation}
    f_n(\textbf{x}^+)=f(f_{n-1}(\textbf{x}^+_1),...,f_{n-1}(\textbf{x}^+_4)).
\end{equation}
In the limit $n\to\infty$ the $f_{n-1}(\textbf{x}^+_i)$ become $\mu_\beta^-$-a.s. constant and by translation invariance, they also attain the same value. Therefore, we must investigate the dynamics of the map
\begin{equation}
    g^\alpha(x):=f^\alpha(x,...,x).
\end{equation}
$f^\alpha$ corresponds to the evolution map $f=f_1$ evaluated at the current value of $\alpha$. The behaviour is very different depending on whether we are in the regime $\alpha<1/4,\,\alpha=1/4$ or $\alpha>1/4$. Figure \ref{Fig4} presents a qualitative picture of the dynamics of $g^\alpha$. For $\alpha<1/4$, $g^\alpha$ has an unstable fixed point at $x=1/2$ while everything else gets attracted by the stable fixed points at $x=0,1$. At $\alpha=1/4$, $g^\alpha$ is simply the identity and then for $\alpha>1/4$ the behaviour reverses, i.e.\ $x=1/2$ becomes stable and attracts everything besides the two unstable fixed points $x=0,1$. We would like to conclude from this investigation that at $\beta>\beta_c$ we find
\begin{equation}
    \lim_{n\to\infty}\mathcal{T}_\alpha^n\mu_\beta^-=\begin{cases}
        \mu_{\beta=\infty}^-&\alpha\in(0,1/4)\\
        \mu_{\beta=0}&\alpha\in(1/4,1/2)
    \end{cases}.
\end{equation}

\usepgfplotslibrary{groupplots}
\pgfplotsset{compat=1.18}

\begin{figure}[t!]
    \begin{tikzpicture}
        \begin{groupplot}[
            group style={
                group size=3 by 1,
                horizontal sep=1.2cm,
            },
            width=0.35\textwidth,
            height=0.35\textwidth,
            xmin=0, xmax=1,
            ymin=0, ymax=1,
            xtick={0, 0.5, 1},
            ytick={0, 0.5, 1},
            grid=major,
            xlabel={$x$},
            samples=100,
            domain=0:1,
            no markers 
        ]
        
        \nextgroupplot[title={$\alpha < 1/4$}, ylabel={$g^\alpha(x)$}]
            \addplot[gray, dashed, thick] {x}; 
            \addplot[blue, ultra thick] {4*0*x + (0.5 - 2*0)*(6*x^2 - 4*x^3)};
    
        \nextgroupplot[title={$\alpha = 1/4$}]
            \addplot[gray, dashed, thick] {x};
            \addplot[red, ultra thick] {4*0.25*x + (0.5 - 2*0.25)*(6*x^2 - 4*x^3)};
    
        \nextgroupplot[title={$\alpha > 1/4$}]
            \addplot[gray, dashed, thick] {x};
            \addplot[green!60!black, ultra thick] {4*0.5*x + (0.5 - 2*0.5)*(6*x^2 - 4*x^3)};
        
        \end{groupplot}
        
    \end{tikzpicture}

    \caption{The three possible dynamics of $g^\alpha$ depending on the value of $\alpha$.}
    \label{Fig4}
\end{figure}

However, we must be careful as the dynamics of $\mu_\beta^-[f_n^\alpha(\textbf{x}^+)]$ only obey the dynamics of $g_\alpha$ asymptotically. For finite $n$, we still get some corrections due to correlations with neighbouring spins. Thus, we must prove that these corrections cannot move the system outside of the basin of attraction of the expected fixed point. For this conclusion, it suffices to prove that $\mu_\beta^-[f_n^\alpha(\textbf{x}^+)]\leq C<1/2\,\forall n\in\mathds{N}$ for $\alpha\in[0,1/4)$ as well as $0<c\leq\mu_\beta^-[f_n^\alpha(\textbf{x}^+)]\leq\frac{1}{2}\,\forall n\in\mathds{N}$ for $\alpha\in(1/4,1/2]$. This we shall do via the following lemma.

\begin{lemma}
    For any $\alpha\in[0,1/4]$ we find
    \begin{equation}
        \lim_{n\to\infty}\mu_\beta^-[f_n^\alpha(\mathrm{\mathbf{x}}^+)]\leq\frac{1-\theta_\beta}{2}
    \end{equation}
    and for any $\alpha\in[1/4,1/2]$ we get
    \begin{equation}
        \frac{1-\theta_\beta}{2}\leq\lim_{n\to\infty}\mu_\beta^-[f_n^\alpha(\mathrm{\mathbf{x}}^+)]\leq\frac{1}{2}.
    \end{equation}
\end{lemma}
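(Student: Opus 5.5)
The plan is to reduce everything to the scalar sequence $a_n:=\mu_\beta^-[f_n^\alpha(\textbf{x}^+)]$ and to compare it with the linear case $\alpha=1/4$. There, $f_n^{1/4}$ is the block average, so $\mu_\beta^-[f_n^{1/4}(\textbf{x}^+)]=\frac{1-\theta_\beta}{2}$ exactly for every $n$.

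\emph{Step 1 (the bound $a_n\le 1/2$).} I would first prove $a_n\le\frac12$ for all $n$ and all $\alpha$. By the preceding lemma, $f_n^\alpha(\textbf{x}^+(\sigma))$ is increasing in $\sigma$. Stochastic domination gives $\mu_\beta^-\preceq\mu_\beta^+$, so $a_n\le\mu_\beta^+[f_n^\alpha(\textbf{x}^+)]$. By the spin flip and property ii) of $f_n$, the right-hand side equals $1-a_n$. This yields the upper bound $\frac12$ claimed for $\alpha\in[1/4,1/2]$.

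\emph{Step 2 (comparison with $\alpha=1/4$).} Write $f^\alpha=f^{1/4}+D$, where
\begin{equation*}
    D(x)=\left(\tfrac12-2\alpha\right)\Big(\sum_{i<j}x_ix_j-\sum_{i<j<k}x_ix_jx_k-\tfrac12\sum_ix_i\Big).
\end{equation*}
Then $D$ is antisymmetric, $D(1-x)=-D(x)$. On $\{0,1\}^4$ it takes the values $\mp\frac12(\frac12-2\alpha)$ when exactly one, respectively three, entries equal $1$, and it vanishes otherwise. So $D$ pushes minority-biased blocks further towards the minority value when $\alpha<1/4$, and towards $\frac12$ when $\alpha>1/4$.

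I would run an induction on $n$ using the recursion $f_n^\alpha=f^\alpha(f_{n-1}^\alpha(\textbf{x}_1^+),\dots,f_{n-1}^\alpha(\textbf{x}_4^+))$. The inputs $Y_i:=f_{n-1}^\alpha(\textbf{x}^+_i)$ are increasing functions of $\sigma$. Under $\mu_\beta^-$ the vector $(Y_i)$ should be stochastically dominated by its reflection $(1-Y_i)$, again by $\mu_\beta^-\preceq\mu_\beta^+$ and the flip symmetry. Pairing each configuration with its reflection should then give $\mu_\beta^-[D(Y)]\le0$ for $\alpha\le1/4$ and $\ge0$ for $\alpha\ge1/4$.

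Iterating, I expect to obtain $a_n\le\frac{1-\theta_\beta}{2}$ for $\alpha\le1/4$ and $a_n\ge\frac{1-\theta_\beta}{2}$ for $\alpha\ge1/4$. Concretely, at each level one compares $f^\alpha_n$ with the operator that uses $f^\alpha$ on the first $n-1$ levels and $f^{1/4}$ on the last one, and telescopes through the levels. Combined with Step 1, this gives the statement, and in particular $\liminf$ and $\limsup$ are bounded as claimed.

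\emph{Step 3 (fallback via the asymptotic recursion).} If Step 2 cannot be closed, I would use Proposition~\ref{Prop1}. Expanding the multilinear $f^\alpha$, the mixed moments $\mu_\beta^-[Y_iY_j]$ and $\mu_\beta^-[Y_iY_jY_k]$ differ from the corresponding powers of $a_{n-1}$ by at most a constant times $\max_i\mathrm{Var}_\beta^-(Y_i)^{1/2}\to0$, using $Y_i\in[0,1]$ and Cauchy--Schwarz. Together with translation invariance this gives $a_n=g^\alpha(a_{n-1})+o(1)$.

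Since $g^\alpha(x)<x$ on $(0,1/2)$ for $\alpha<1/4$, and $g^\alpha(x)>x$ there for $\alpha>1/4$, the limit points of $(a_n)$ form a $g^\alpha$-invariant subset of $[0,1/2]$. This already forces the limit to be $0$ or $\frac12$, respectively to lie in $\{0,\frac12\}$ as well. To pin down the limit, it then remains to exclude the unwanted endpoint. That requires a uniform-in-$n$ bound, which is exactly what Step 2 provides.

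\emph{Main obstacle.} I expect the main difficulty to be the sign claim $\mu_\beta^-[D(Y)]\le0$ in Step 2. $D$ is antisymmetric but not monotone: it takes the value $0$ at four ones but is positive at three. So plain stochastic domination does not suffice. I would first try to exploit the reflection coupling $(Y,1-Y)$ together with positive association (FKG) of $\mu_\beta^-$. The aim is to show that under $\mu_\beta^-$ configurations with exactly one input near $1$ outweigh their reflections, which have exactly three inputs near $1$.
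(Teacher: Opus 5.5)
Your Steps 1 and 2 follow the paper's own route. Your $D$ is exactly $(\alpha-\tfrac14)\,\mathrm{d}f^\alpha/\mathrm{d}\alpha$, the $f^{1/4}$-part returns $a_{n-1}$ by linearity and invariance, and the paper proves $a_n\le\tfrac12$ precisely as in your Step 1.

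The gap is the sign claim itself. By multilinearity, $\mu_\beta^-[D(Y)]$ is the expectation of $D$ on $\{0,1\}^4$ under the law $\nu$ of the four level-$(n-1)$ coarse spins of the top block. So what you need is $\nu[\text{exactly one }+]\ge\nu[\text{exactly three }+]$. This is the entire content of the lemma, and you leave it open. Moreover, the properties you plan to use (domination by the reflection, positive association, square symmetry) cannot give it. Take $\nu=0.6\,\delta_{(-,-,-,-)}+0.3\,\mathrm{Unif}\{\text{exactly three }+\}+0.1\,\delta_{(+,+,+,+)}$ on $\{-1,1\}^4$. It is invariant under all symmetries of the square and positively associated. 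It is also dominated by its spin flip: couple the mass $0.6$ at $(-,-,-,-)$ to $(-,-,-,-)$, to the one-plus configurations and to $(+,+,+,+)$ with masses $0.1,0.3,0.2$, and send all remaining mass to $(+,+,+,+)$. Yet $\nu[\text{exactly one }+]=0<0.3=\nu[\text{exactly three }+]$. Step 3 cannot substitute for this, since excluding the wrong point of $\{0,\tfrac12\}$ requires exactly the uniform bound of Step 2.

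The idea you are missing is the paper's conditioning argument.
\begin{itemize}
\item By rotation invariance (exact in the volumes $\Lambda=[-m,m]^2$), it suffices to compare $\mu^-_\beta[\sigma_1=\sigma_2=\sigma_3=-,\sigma_4=+]$ with $\mu^-_\beta[\sigma_1=\sigma_2=\sigma_3=+,\sigma_4=-]$.
\item The global flip turns the first into $\mu^+_\beta[\sigma_1=\sigma_2=\sigma_3=+,\sigma_4=-]$. You now compare one configuration under two ordered measures instead of two configurations under one measure.
\item Condition on $(\sigma_1,\sigma_4)=(+,-)$. This event has the same probability under $\mu^+_\beta$ and $\mu^-_\beta$, by the flip composed with a block symmetry exchanging sites $1$ and $4$.
\item What remains is the increasing event $\{\sigma_2=\sigma_3=+\}$. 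The conditional domination $\mu^-_\beta[\cdot\,|\,\sigma_1,\sigma_4]\preceq\mu^+_\beta[\cdot\,|\,\sigma_1,\sigma_4]$ (FKG lattice condition, Holley) finishes the comparison.
\end{itemize}
This conditional domination is the operative ingredient, and it is exactly what the example violates. Given $\sigma_1=+$, $\sigma_4=-$, the measure $\nu$ becomes $\delta_{(+,+,+,-)}$ while its flip becomes $\delta_{(+,-,-,-)}$.

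Be aware that this settles only the first level, where the inputs are Ising spins. For the induction step the paper reuses the argument for $\mathcal{T}^n_\alpha\mu^-_\beta$, citing R3 and symmetry. But R3 only yields $\mathcal{T}^n_\alpha\mu^-_\beta\preceq\mathcal{T}^n_\alpha\mu^+_\beta$, which the example shows is not enough. Conditional ordering given two coarse spins needs its own argument. Holley's inequality is not directly available there, because the joint fine--coarse law violates the lattice condition: for two complementary two-plus blocks $s,s'$ one has $t_\alpha(+|s)\,t_\alpha(+|s')=\tfrac14$ but $t_\alpha(+|s\vee s')\,t_\alpha(+|s\wedge s')=0$.
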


\begin{proof}
    Recall that $f_n^{\alpha=1/4}$ is linear and therefore, by translation invariance of the infinite volume Gibbs measure
    \begin{equation}
        \mu_\beta^-[f_n^{\alpha=1/4}(\textbf{x}^+)]=\mu_\beta^-[(\textbf{x}^+)_0]=\frac{\phi_{p_\beta}^1[0\nlr\infty]}{2}=\frac{1-\theta_\beta}{2}\quad\forall n\in\mathds{N}.
    \end{equation}
    For arbitrary $\alpha$, we want to prove by induction that the value remains bounded either below or above the value for $\alpha=1/4$, depending on which region of $\alpha$ we consider. In order to do so, we shall quickly compute
    \begin{equation}
        \frac{\mathrm{d}f^\alpha}{\mathrm{d}\alpha}=\sum_{i}x_i-2\left(\sum_{i<j}x_ix_j-\sum_{i<j<k}x_ix_jx_k\right)=\begin{cases}
            0&\sum_ix_i=0\\
            1&\sum_ix_i=1\\
            0&\sum_ix_i=2\\
            -1&\sum_ix_i=3\\
            0&\sum_ix_i=4
        \end{cases}.
    \end{equation}
    Therefore, for $n=1$ we find that
    \begin{equation}
        \mu_{\mathcal{C}_\Lambda^0,\beta}^-\left[\frac{\mathrm{d}f_1^\alpha}{\mathrm{d}\alpha}\right]=\mu_{\mathcal{C}_\Lambda^0,\beta}^-\left[\sum_i(\textbf{x}^+)_i=1\right]-\mu_{\mathcal{C}_\Lambda^0,\beta}^-\left[\sum_i(\textbf{x}^+)_i=3\right].
    \end{equation}
    We may choose $\Lambda$ as the square $[-m,m]^2$ and consider the spin at the origin. Doing so, we profit from exact rotation symmetry in finite volume already. Thus, we can just choose one specific configuration satisfying $\sum_i(\textbf{x}^+)_i=1$ and one satisfying $\sum_i(\textbf{x}^+)_i=3$ and compare those. Let us therefore compute
    \begin{align}
        \mu_{\mathcal{C}_\Lambda^0,\beta}^-\left[\ind{\sigma_1=\sigma_2=\sigma_3=-1,\sigma_4=+1}\right]-\mu_{\mathcal{C}_\Lambda^0,\beta}^-\left[\ind{\sigma_1=\sigma_2=\sigma_3=+1,\sigma_4=-1}\right]\nonumber\\
        =\mu_{\mathcal{C}_\Lambda^0,\beta}^+\left[\ind{\sigma_1=\sigma_2=\sigma_3=+1,\sigma_4=-1}\right]-\mu_{\mathcal{C}_\Lambda^0,\beta}^-\left[\ind{\sigma_1=\sigma_2=\sigma_3=+1,\sigma_4=-1}\right].
    \end{align}
    To continue, we condition these expectations on the spins $\sigma_1,\sigma_4$. By the law of total expectation we then obtain
    \begin{equation}
        \mu_{\mathcal{C}_\Lambda^0,\beta}^+[\ind{\sigma_1=+1,\sigma_4=-1}\mu_{\mathcal{C}_\Lambda^0,\beta}^+[\ind{\sigma_2=\sigma_3=+1}]\,|\,\sigma_1,\sigma_4]]-\mu_{\mathcal{C}_\Lambda^0,\beta}^-[\ind{\sigma_1=+1,\sigma_4=-1}\mu_{\mathcal{C}_\Lambda^0,\beta}^-[\ind{\sigma_2=\sigma_3=+1}\,|\,\sigma_1,\sigma_4]].
    \end{equation}
    The fact that $\ind{\sigma_2=\sigma_3=+1}$ is increasing together with stochastic domination $\mu_{\mathcal{C}_\Lambda^0,\beta}^-[\cdot\,|\,\sigma_1,\sigma_4]\preceq\mu_{\mathcal{C}_\Lambda^0,\beta}^+[\cdot\,|\,\sigma_1,\sigma_4]$ gives us
    \begin{equation}
        \mu_{\mathcal{C}_\Lambda^0,\beta}^+[\ind{\sigma_2=\sigma_3=+1}]\,|\,\sigma_1,\sigma_4]\geq\mu_{\mathcal{C}_\Lambda^0,\beta}^-[\ind{\sigma_2=\sigma_3=+1}\,|\,\sigma_1,\sigma_4].
    \end{equation}
    Further, by symmetry we also get
    \begin{equation}
        \mu_{\mathcal{C}_\Lambda^0,\beta}^+[\ind{\sigma_1=+1,\sigma_4=-1}]=\mu_{\mathcal{C}_\Lambda^0,\beta}^-[\ind{\sigma_1=+1,\sigma_4=-1}].
    \end{equation}
    All in all, we conclude that
    \begin{equation}
        \mu_{\mathcal{C}_\Lambda^0,\beta}^-\left[\frac{\mathrm{d}f_1^\alpha}{\mathrm{d}\alpha}\right]\geq0,
    \end{equation}
    which implies that $\mu_{\mathcal{C}_\Lambda^0,\beta}^-[f_1^\alpha(\textbf{x}^+)]$ is increasing in $\alpha$. For the induction step, we may assume that 
    \begin{equation}\label{Eq10}
        \mu_{\mathcal{C}_\Lambda^0,\beta}^-[f_k^{\alpha<1/4}(\textbf{x}^+)]\leq\mu_{\mathcal{C}_\Lambda^0,\beta}^-[f_k^{\alpha=1/4}(\textbf{x}^+)]\leq\mu_{\mathcal{C}_\Lambda^0,\beta}^-[f_k^{\alpha>1/4}(\textbf{x}^+)]
    \end{equation}
    holds for all $k\leq n$ for some $n\in\mathds{N}$. We then must prove that this still holds for $k=n+1$. Let us start by considering
    \begin{equation}
        \mu_{\mathcal{C}_\Lambda^0,\beta}^-[f_{n+1}^\alpha(\textbf{x}^+)]=\mathcal{T}_\alpha^n\mu_{\mathcal{C}_\Lambda^0,\beta}^-[f^\alpha(\textbf{x}^+)].
    \end{equation}
    We note that
    \begin{equation}
        \mathcal{T}_\alpha^n\mu_{\mathcal{C}_\Lambda^0,\beta}^-[f^{\alpha=1/4}(\textbf{x}^+)]=\mathcal{T}_\alpha^n\mu_{\mathcal{C}_\Lambda^0,\beta}^-[(\textbf{x}^+)_i]=\mu_{\mathcal{C}_\Lambda^0,\beta}^-[f_n^\alpha(\textbf{x}^+_i)]
    \end{equation}
    for which we can use the induction hypothesis. Thus, the proof reduces itself to the statement
    \begin{equation}
        \mathcal{T}_\alpha^n\mu_{\mathcal{C}_\Lambda^0,\beta}^-\left[\frac{\mathrm{d}f^\alpha}{\mathrm{d}\alpha}\right]\geq0,
    \end{equation}
    which is just the $n=1$ case with $\mu_{\mathcal{C}_\Lambda^0,\beta}^-$ replaced by $\mathcal{T}_\alpha^n\mu_{\mathcal{C}_\Lambda^0,\beta}^-$. Since the kernel $\mathcal{T}_\alpha$ preserves both the symmetries of the Ising model as well as the stochastic ordering (recall condition R3 from the introduction which gives us $\mu_{\mathcal{C}_\Lambda^0,\beta}^-\preceq\mu_{\mathcal{C}_\Lambda^0,\beta}^+\implies \mathcal{T}^n\mu_{\mathcal{C}_\Lambda^0,\beta}^-\preceq \mathcal{T}^n\mu_{\mathcal{C}_\Lambda^0,\beta}^+$), this statement also holds for $\mathcal{T}_\alpha^n\mu_{\mathcal{C}_\Lambda^0,\beta}^-$. We conclude that (\ref{Eq10}) holds for arbitrary $k\in\mathds{N}$ as well as for all $\mathcal{C}_\Lambda^0$ whenever $\Lambda$ is of the form $[-m,m]^2$. In the limit $m\to\infty\implies\Lambda\uparrow\mathds{Z}^2$ we then find
    \begin{equation}
        \mu_\beta^-[f_n^{\alpha<1/4}(\textbf{x}^+)]\leq\mu_\beta^-[f_n^{\alpha=1/4}(\textbf{x}^+)]=\frac{1-\theta_\beta}{2}\leq\mu_\beta^-[f_n^{\alpha>1/4}(\textbf{x}^+)].
    \end{equation}
    At last, by invoking stochastic domination one more time, the upper bound of $1/2$ is trivial as
    \begin{equation}
        1=\mu_\beta^-[f_n^\alpha(\textbf{x}^+)]+\mu_\beta^-[f_n^\alpha(\textbf{x}^-)]=\mu_\beta^-[f_n^\alpha(\textbf{x}^+)]+\mu_\beta^+[f_n^\alpha(\textbf{x}^+)].
    \end{equation}
    For $\alpha\in[0,1/2]$, $f_n^\alpha$ is increasing and thus $\mu_\beta^+[f_n^\alpha(\textbf{x}^+)]\geq\mu_\beta^-[f_n^\alpha(\textbf{x}^+)]\implies \mu_\beta^-[f_n^\alpha(\textbf{x}^+)]\leq1/2$.
\end{proof}

\begin{corollary}\label{Cor1}
    For $\alpha\in[1/4,1/2)$ we find
    \begin{equation}
        \lim_{n\to\infty}\mathcal{T}_\alpha^n\mu_{\beta_c}=\mu_{\beta=0}.
    \end{equation}
\end{corollary}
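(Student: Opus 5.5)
The plan is to avoid any direct estimate at $\beta_c$, where the susceptibility diverges and the variance bound of Proposition \ref{Prop2} breaks down. Instead I would sandwich $\mu_{\beta_c}$ between low-temperature states, for which the flow is already controlled, using stochastic domination. At $\beta_c$ the Gibbs measure is unique, so $\mu_{\beta_c}=\mu^+_{\beta_c}=\mu^-_{\beta_c}$. By the monotonicity in $\beta$ and in the boundary condition recalled in the subsection on stochastic domination, every $\beta>\beta_c$ satisfies
\begin{equation*}
\mu^-_\beta\preceq\mu^-_{\beta_c}=\mu_{\beta_c}=\mu^+_{\beta_c}\preceq\mu^+_\beta .
\end{equation*}
I would first prove this in finite volume, with the stated ordering $\mu^{\tau}_{G,\beta}\preceq\mu^{\tau'}_{G,\beta'}$, and then pass to the limit.

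\textbf{Reduction to increasing observables.} By the expansion of $\sigma_A$ into products of indicators, weak convergence follows once I show that $\mathcal{T}^n_\alpha\mu_{\beta_c}[\sigma_x=+1\ \forall x\in B]\to 2^{-|B|}$ for every finite $B$. Conditionally on $\sigma^{(0)}$ the coarse spins are independent, so this probability equals
\begin{equation*}
\mu_{\beta_c}\Big[\prod_{x\in B}f_n(\mathbf{x}^+_x)\Big].
\end{equation*}
The product is an increasing function of $\sigma^{(0)}$, because each factor is increasing and nonnegative by the monotonicity lemma for $f_n$. Hence
\begin{equation*}
\mu^-_\beta\Big[\prod_{x\in B}f_n(\mathbf{x}^+_x)\Big]\le\mu_{\beta_c}\Big[\prod_{x\in B}f_n(\mathbf{x}^+_x)\Big]\le\mu^+_\beta\Big[\prod_{x\in B}f_n(\mathbf{x}^+_x)\Big]\qquad\forall\beta>\beta_c .
\end{equation*}

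\textbf{Limits of the outer terms.} By Proposition \ref{Prop1} the variance of each $f_n(\mathbf{x}^+_x)$ under $\mu^\pm_\beta$ tends to zero. By Cauchy--Schwarz, the expectation of the product is then asymptotically $\prod_{x\in B}\mu^\pm_\beta[f_n(\mathbf{x}^+_x)]$. Translation invariance makes all these factors equal to a single value $a_n^\pm$. It therefore remains to find the limit of $a_n^-$, since $a_n^+=1-a_n^-$.
\begin{itemize}
\item[i)] For $\alpha=1/4$ the linear computation gives $a^-_n=(1-\theta_\beta)/2$ exactly. The sandwich then gives, for every $\beta>\beta_c$,
\begin{equation*}
\Big(\tfrac{1-\theta_\beta}{2}\Big)^{|B|}\le\liminf_n\,\mathcal{T}^n_{1/4}\mu_{\beta_c}[\cdot]\le\limsup_n\,\mathcal{T}^n_{1/4}\mu_{\beta_c}[\cdot]\le\Big(\tfrac{1+\theta_\beta}{2}\Big)^{|B|}.
\end{equation*}
Letting $\beta\downarrow\beta_c$ and using continuity of the magnetization at $\beta_c$, i.e.\ $\theta_\beta\to m_{\beta_c}=0$, yields $2^{-|B|}$.
\item[ii)] For $\alpha\in(1/4,1/2)$ the preceding lemma gives $(1-\theta_\beta)/2\le a_n^-\le 1/2$, uniformly bounded away from $0$. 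I would combine this with the recursion $f_{n+1}=f(f_n(\mathbf{x}^+_1),\dots,f_n(\mathbf{x}^+_4))$. Since $f$ is a polynomial and the variances vanish, this gives $a^-_{n+1}=g^\alpha(a^-_n)+o(1)$. As $1/2$ is the attracting fixed point of $g^\alpha$ on $(0,1)$ for $\alpha>1/4$, and $a^-_n$ stays in a compact subset of $(0,1/2]$, it follows that $a_n^-\to1/2$. Both outer bounds then tend to $2^{-|B|}$ at a single fixed $\beta>\beta_c$. Alternatively, the limit $\beta\downarrow\beta_c$ from case i) also works here, since the same lemma bounds $a_n^-$ between $(1-\theta_\beta)/2$ and $1/2$.
\end{itemize}

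\textbf{Main obstacle.} The delicate step is justifying $a^-_{n+1}=g^\alpha(a^-_n)+o(1)$ and that the perturbed orbit still converges to the attracting fixed point. I would prefer to bypass this entirely by using only the uniform bracket $(1\mp\theta_\beta)/2$, valid for all $\alpha\ge1/4$, together with $\theta_\beta\to0$. This needs just the limits $n\to\infty$ and then $\beta\downarrow\beta_c$ taken in that order, and, for $\alpha>1/4$, the factorization of the product expectation from the vanishing variance of Proposition \ref{Prop1}.
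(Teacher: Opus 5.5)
Your argument, like the paper's, rests on the sandwich $\mu^-_\beta\preceq\mu_{\beta_c}\preceq\mu^+_\beta$, and this sandwich is false for every $\beta>\beta_c$. Stochastic domination of Ising spin measures is monotone in the boundary condition (or in an external field) at fixed $\beta$, but not in $\beta$. The ordering $\mu^\tau_{G,\beta}\preceq\mu^{\tau'}_{G,\beta'}$ for $\beta\le\beta'$ that you import from the paper is incorrect, for two reasons. With $\tau=\tau'\equiv-1$, raising $\beta$ pushes spins down. Also, two distinct spin-flip invariant measures are never ordered, since $f\mapsto-f(-\,\cdot\,)$ preserves increasing functions.

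For your specific inequalities, take the decreasing event $D_L=\{\sigma\equiv-1\text{ on }\Lambda_L\}$. Define the bulk cost of full alignment $\rho(\beta):=\lim_{L}L^{-2}\log\sum_{\sigma\in\Sigma_{\Lambda_L}}e^{\beta\sum_{\langle xy\rangle}(\sigma_x\sigma_y-1)}$. It is strictly decreasing in $\beta$, because its derivative is $2(\mu^+_\beta[\sigma_0\sigma_y]-1)<0$ for $y\sim 0$. For any Gibbs measure $\mu$ at inverse temperature $\beta$, the DLR equations write $\mu[D_L]$ as an average over outside configurations $\tau$ of $\mu^\tau_{\Lambda_L,\beta}[D_L]=e^{-\rho(\beta)L^2+O(L)}$. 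This holds uniformly in $\tau$, since the outside only touches $O(L)$ edges. Hence $\mu^+_\beta[D_L]>\mu_{\beta_c}[D_L]$ for $L$ large. A minus droplet in the plus phase pays the smaller bulk price $\rho(\beta)$ plus a perimeter price, while at $\beta_c$ it pays $\rho(\beta_c)$ per site. So $\mu_{\beta_c}\not\preceq\mu^+_\beta$, and by spin flip $\mu^-_\beta\not\preceq\mu_{\beta_c}$.

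You would therefore need the two inequalities specifically for $\prod_{x\in B}f_n(\mathbf{x}^+_x)$, and nothing in your proposal supplies them. Both case i) and case ii) lose their bracket, so your otherwise sound factorization and $\theta_\beta\to0$ steps have nothing to squeeze. The paper's own proof makes the same move, with the lower comparison $\mu_{\beta=0}\preceq\mu_{\beta_c}$, which the symmetry remark above also excludes. Following it therefore does not close the gap.

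What survives avoids comparing different temperatures. The proof of Proposition \ref{Prop2} uses only free boundary conditions and the $\mathds{Z}_2$-symmetry, so it runs verbatim at $\beta_c$. For $\alpha\ge1/4$, where $\delta f_n\le\alpha^n$, it gives
$\mathrm{Var}_{\beta_c}(f_n(\mathbf{x}^+))\le\alpha^{2n}\sum_{z,w\in\mathcal{C}_x^0}\phi_{p_c}[z\leftrightarrow w]=\alpha^{2n}2^{(15/4+o(1))n}$,
using $\phi_{p_c}[z\leftrightarrow w]=\|z-w\|^{-1/4+o(1)}$ and $|\mathcal{C}_x^0|=4^n$. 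The mean is exactly $1/2$ by symmetry, so your factorization step then proves the corollary for $\alpha\in[1/4,2^{-15/8})$. At $\alpha=1/4$ this is precisely the paper's direct block-magnetization argument, which is valid.

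For $\alpha\in[2^{-15/8},1/2)$ a genuinely new ingredient is needed. One option is a correct monotone comparison such as $\mu_{\beta_c,-h}\preceq\mu_{\beta_c}\preceq\mu_{\beta_c,h}$ with the critical Ising measures in an external field $\pm h$. This would have to be combined with analogues of Proposition \ref{Prop1} and of the mean bracket for those measures. Another option is influence bounds on $f_n$ that beat the worst-case $\alpha^n$.
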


\begin{proof}
    We have already established this result for $\alpha=1/4$. However, we can now extend it to $\alpha\in(1/4,1/2)$ using the previous results which proves
    \begin{equation}
        \lim_{n\to\infty}\mathcal{T}_\alpha^n\mu_{\beta}=\mu_{\beta=0}\qquad\forall\beta>\beta_c,\,\alpha\in(1/4,1/2).
    \end{equation}
    We recall that $\mu_{\beta=0}\preceq\mu_{\beta_c}\preceq\mu_\beta^+$ for any $\beta>\beta_c$. Further, since the kernel $\mathcal{T}_\alpha$ is monotonic, stochastic ordering is preserved such that
    \begin{equation}
        \mu_{\beta=0}=\mathcal{T}_\alpha^n\mu_{\beta=0}\preceq \mathcal{T}^n_\alpha\mu_{\beta_c}\preceq \mathcal{T}_\alpha^n\mu_{\beta>\beta_c}\qquad\forall n\in\mathds{N}.
    \end{equation}
    Therefore, for any increasing function $f$ we find
    \begin{equation}
        \mu_{\beta=0}[f]\leq\lim_{n\to\infty}\mathcal{T}_\alpha^n\mu_{\beta_c}[f]\leq\lim_{n\to\infty}\mathcal{T}_\alpha^n\mu_\beta^+[f]=\mu_{\beta=0}[f]\implies\lim_{n\to\infty}\mathcal{T}_\alpha^n\mu_{\beta_c}=\mu_{\beta=0}.
    \end{equation}
\end{proof}

This completes the proof of the main result stated in the introduction. The limit of the RG-flow for the $2\times 2\to1$ coarse-graining is given by
\begin{align}
    \alpha\in(0,1/4)&:\quad \lim_{n\to\infty}\mathcal{T}_\alpha^n\mu_\beta^\pm=\begin{cases}
            \mu_{\beta=0}&\beta<\beta_c\\
            \,\,\,\,\,?&\beta=\beta_c\\
            \mu_{\beta=\infty}^\pm&\beta>\beta_c
        \end{cases},\nonumber\\
        \alpha=1/4&:\quad\lim_{n\to\infty}\mathcal{T}_\alpha^n\mu_\beta^\pm=\prod_{x\in\mathds{Z}^2}\left(\frac{1+m_\beta}{2}\delta_{\pm1}(\sigma_x)+\frac{1-m_\beta}{2}\delta_{\mp1}(\sigma_x)\right),\\
        \alpha\in(1/4,1/2)&:\quad\lim_{n\to\infty}\mathcal{T}_\alpha^n\mu_\beta^\pm=\begin{cases}
            \mu_{\beta=0}&\beta<\infty\\
            \mu_{\beta=\infty}^\pm&\beta=\infty
        \end{cases}.\nonumber
\end{align}
Although not formally proven, it seems very believable that these results simply extend to the cases $\alpha=0,1/2$. Thus, the remaining open question is what happens in the case $\alpha\in[0,1/4)$ and $\beta=\beta_c$. It is hoped that in this scenario, the limit really is given by a non-trivial fixed point. A full proof, however, will require some more sophisticated methods and arguments. Nevertheless, we have gained at least some insight to the RG-flow at criticality. Recalling the argument presented in the proof of Corollary \ref{Cor1}, it is clear that the RG-flow away from criticality must necessarily exhibit a saddle-point behaviour. If instead
\begin{equation}
    \lim_{\beta\searrow\beta_c}\lim_{n\to\infty}\mathcal{T}^n\mu_\beta^+=\mu_{\beta=0}
\end{equation}
holds, then stochastic domination automatically forces $\lim_{n\to\infty}\mathcal{T}^n\mu_{\beta_c}=\mu_{\beta=0}$.

\section{Generalizations}\label{Sec5}

There are two generalizations we shall shortly discuss before we conclude this article:
\begin{enumerate}
    \item[i)] higher dimensions,
    \item[ii)] different single-spin spaces like higher-state Potts-models.
\end{enumerate}
For i) let us consider a general RGT on $\mathds{Z}^3$ using cubic blocks. Besides R1-R4 we also want to simplify our lives a bit by imposing that the single-block kernel $t(\sigma|s)$ is a function depending only on $\sum_xs_x$ and not the geometric positioning of the spins\footnote{This was already the case in $2\mathrm{D}$, although there it was a consequence of the spatial symmetries and not something that had to be imposed.}. It is then also clear that the cluster weights of connecting the coarse spin to $n$ fine spins only depend on $n$ and not on the geometric choices of edges. Let us denote the cluster weights of $n=1,...,8$ open edges as $A,...,H$. In that case, we obtain a three parameter family of RGTs given by\footnote{One should not get confused by the double usage of $\beta$, here it simply refers to a paramter in the RG and not the inverse temperature of a model.}
\begin{align}
    8+\to+&=1=8A+28B+56C+70D+56E+28F+8G+H,\nonumber\\
    7+\to+&=1-\alpha=7A+21B+35C+35D+21E+7F+G,\nonumber\\
    6+\to+&=1-\beta=6A+15B+20C+15D+6E+F,\nonumber\\
    5+\to+&=1-\gamma=5A+10B+10C+5D+E,\nonumber\\
    4+\to+&=1/2=4A+6B+4C+D,\\
    3+\to+&=\gamma=3A+3B+C,\nonumber\\
    2+\to +&=\beta=2A+B,\nonumber\\
    1+\to+&=\alpha=A,\nonumber\\
    0+\to+&=0,\nonumber
\end{align}

for the allowed values $0\leq\alpha\leq\beta\leq\gamma\leq1/2$. Solving this system of equations gives us the cluster weights
\begin{gather*}
    A = \alpha,\qquad
    B = \beta - 2\alpha,\qquad
    C = \gamma - 3\beta + 3\alpha,\qquad
    D = -4\alpha + 6\beta - 4\gamma + \tfrac{1}{2}, \\
    E = 5\alpha - 10\beta + 9\gamma - \tfrac{3}{2},\qquad
    F = -6\alpha + 14\beta - 14\gamma + \tfrac{5}{2},\qquad
    G = 6\alpha - 14\beta + 14\gamma - \tfrac{5}{2},\qquad
    H = 0.
\end{gather*}
The linear map, corresponding to $A=1/8,B=C=\cdots =H=0$ is given by $\alpha=1/8,\beta=2/8$ and $\gamma=3/8$. These weights then allow us to compute the evolution map $f$ which is a polynomial of degree seven with coefficients $A-G$. Computing the derivative of $f$ gives us
\begin{equation}
    \delta f=\max\{\alpha,\beta-\alpha,\gamma-\beta,1/2-\gamma\}.
\end{equation}
This has a nice interpretation as it is the maximal width of any interval in the partition $[0,\alpha,\beta,\gamma,1/2]$. Now, the proof for the high-temperature phase only used the exponential decay of clusters which is established in all dimensions. Thus, for any value of $\alpha,\beta,\gamma$ that satisfy $8\cdot(\delta f)^2<1$ the proof still holds and we obtain convergence towards the $T=\infty$ fixed point within all of the high-temperature phase.

For the discussion of the low-temperature phase we are interested in the dynamics of $f$. For all values $0\leq\alpha\leq\beta\leq\gamma$ such that they are also strictly below their respective values in the linear case $\alpha<1/8,\beta<2/8$ and $\gamma<3/8$, the dynamics of $f$ look equivalent to the two-dimensional case at $\alpha_{2\mathrm{D}}<1/4$. However, for values beyond the linear case there are certain triplets that produce more complicated dynamics than what we had seen in the $2\mathrm{D}$ case. Explicitly, there are some choices that lead to five rather than just three fixed points:

\begin{figure}[h!]
    \centering
    \begin{tikzpicture}
        \begin{groupplot}[
            group style={
                group size=3 by 1,
                horizontal sep=1.2cm,
            },
            width=0.35\textwidth,
            height=0.35\textwidth,
            xmin=0, xmax=1,
            ymin=0, ymax=1,
            xtick={0, 0.5, 1},
            ytick={0, 0.5, 1},
            grid=major,
            xlabel={$x$},
            samples=100,
            domain=0:1,
            no markers 
        ]
        
        \nextgroupplot[title={Sublinear}, ylabel={$f(x)$}]
            \addplot[gray, dashed, thick] {x}; 
            \addplot[blue, ultra thick] {28*0.05*x^2 + 56*(0.1-3*0.05)*x^3 + 70*(0.5+6*0.05-4*0.1)*x^4 + 56*(9*0.1-10*0.05-1.5)*x^5 +28*(2.5+14*0.05-14*0.1)*x^6+8*(14*0.1-14*0.05-2.5)*x^7};
    
        \nextgroupplot[title={$\alpha=\beta=\gamma=0.5$}]
            \addplot[gray, dashed, thick] {x};
            \addplot[red, ultra thick] {8*0.5*x + 28*(0.5-2*0.5)*x^2 + 56*(3*0.5+0.5-3*0.5)*x^3 + 70*(0.5-4*0.5+6*0.5-4*0.5)*x^4 + 56*(5*0.5+9*0.5-10*0.5-1.5)*x^5 +28*(2.5-6*0.5+14*0.5-14*0.5)*x^6+8*(6*0.5+14*0.5-14*0.5-2.5)*x^7};
    
        \nextgroupplot[title={$\alpha=0,\beta=0.4,\gamma=0.5$}]
            \addplot[gray, dashed, thick] {x};
            \addplot[green!60!black, ultra thick] {28*0.4*x^2 + 56*(0.5-3*0.4)*x^3 + 70*(0.5+6*0.4-4*0.5)*x^4 + 56*(9*0.5-10*0.4-1.5)*x^5 +28*(2.5+14*0.4-14*0.5)*x^6+8*(14*0.5-14*0.4-2.5)*x^7};
        
        \end{groupplot}
    \end{tikzpicture}
    \caption{The dynamics of $f$ for different values of the input parameters.}
    \label{fig:placeholder}
\end{figure}

However, in order to conclude the "correct" scaling limit in the low-temperature phase we would also have to prove that the variance of $f_n$ vanishes. In our proof for the Ising model on $\mathds{Z}^2$ we used the concept of duality which is much stronger in two dimensions compared to higher ones. Although, the actual statement we proved using duality was that conditioning on $\omega_e=0,1$ has exponential decaying impact on the cluster configurations farther away. Establishing such a result in higher dimensions should not be too much of an obstacle. It remains to be checked whether one can make it work for this specific application.\\

Now, let us also investigate our second proposed generalization. Surprisingly, once we turn our attention to higher-state Potts-models, we no longer obtain any RGT besides the linear one that is representable as a cluster model in the simplest form we had used so far. The following proposition presents a proof of this fact.

\begin{proposition}\label{Prop3}
    Let $B$ be a block of size $|B|>2$. Then there exists no RGT for Potts models with $q\geq3$ that is compatible with cluster weights besides the linear transformation obtained by forbidding more than one edge to be open simultaneously.
\end{proposition}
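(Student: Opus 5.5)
The plan is to set up, for a $q$-state Potts block, the same matching system used for the $2\times2\to1$ transformation, and show it forces all multi-edge cluster weights to vanish. A cluster representation of a single block assigns a weight $\phi(S)$ to every set $S\subseteq B$ of fine vertices to which the coarse spin is connected. It also assigns a weight $\phi(\emptyset)$ to the empty set, which R4 forces to be zero, as in the decimation example. Given a fine configuration $s\in\{1,\dots,q\}^B$, the coarse spin must be parallel to every fine spin it is connected to. Hence
\begin{equation*}
t(\sigma\,|\,s)=\sum_{\emptyset\neq S\subseteq B}\phi(S)\,\ind{s_y=\sigma\,\forall y\in S},
\end{equation*}
which is exactly the analogue of the evolution map $f_1$, and the kernel must satisfy R1--R4 with $\sum_\sigma t(\sigma|s)=1$.

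\textbf{Step 1: normalization, and why it is weak.} For every $s$ we need $\sum_{\sigma}t(\sigma|s)=\sum_{S}\phi(S)\ind{s|_S\ \mathrm{constant}}=1$. Taking $s$ with all spins equal gives $\sum_S\phi(S)=1$. For $q=2$ this was the only constraint beyond symmetry, and nonlinear solutions existed: for example $B,D\neq0$ in the $2\times2\to1$ case.

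\textbf{Step 2: the new constraint for $q\geq3$.} With at least three colours available we can choose configurations that were impossible for $q=2$. Fix any $T\subseteq B$ with $|T|\geq2$, and take $s$ equal to colour $1$ on $T$, with the remaining sites coloured so that no other pair shares a colour with anything in a way that completes $S$-constancy. The simplest choice is to colour $B\setminus T$ with colours $2,3$ in a pattern making every set constant under $s$ either a subset of $T$ or a singleton/pair outside $T$. Comparing such configurations by inclusion--exclusion (Möbius inversion over the lattice of set partitions of $B$) should show that the normalization identity, imposed for \emph{all} colourings, forces $\sum_{S:\,s|_S\,\mathrm{const}}\phi(S)$ to be independent of the partition induced by $s$.

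\textbf{Step 3: inversion.} Partitions of $B$ into at most $q$ blocks are realizable by colourings. I would argue that with $q\geq3$ and $|B|>2$, the realizable partitions are rich enough for the Möbius inversion to isolate each $\phi(S)$ with $|S|\geq2$. Starting from the all-singletons partition (if $q\geq|B|$) or from partitions with one doubled block, one gets $\sum_{|S|=1}\phi(S)=1$. This forces $\sum_{|S|\geq2,\,S\subseteq P}\phi(S)=0$ for every block $P$ of any realizable partition, and induction on $|S|$ yields $\phi(S)=0$ for all $|S|\geq2$.

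\textbf{Main obstacle.} The main obstacle is Step 3 when $q<|B|$: not every partition is realizable with $q$ colours. One must check that partitions into at most $3$ blocks still suffice for the induction. For instance, use partitions $\{S, B\setminus S\}$ together with three-block refinements, and subtract them to isolate each $\phi(S)$. The hypothesis $|B|>2$ is needed here, because for $|B|=2$ the three-colour refinements do not exist beyond the two-block case. With all multi-edge weights zero, only single-edge weights survive, and R1 makes them equal to $1/|B|$. This is the linear transformation.
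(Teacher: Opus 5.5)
Your setup is right: weights $\phi(S)$ on connection sets, $\phi(\emptyset)=0$ by R4, and the normalization $\sum_{S:\,s|_S\ \mathrm{const}}\phi(S)=1$ for every colouring $s$. However, the step that is supposed to extract $\phi(S)=0$ from this is never actually carried out.

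\begin{itemize}
\item Step 2 is vacuous. Saying that $\sum_{S:\,s|_S\,\mathrm{const}}\phi(S)$ is independent of the partition induced by $s$ \emph{is} the normalization identity, so no inversion has been performed there.
\item Step 3 only works when the all-singletons partition is realizable, i.e.\ $q\ge|B|$.
\item Your alternative, obtaining $\sum_{|S|=1}\phi(S)=1$ from a partition with one doubled block $\{i,j\}$, is circular. That partition only yields $\sum_{|S|=1}\phi(S)+\phi(\{i,j\})=1$.
\item One identity per partition $\pi$ only controls the total $\sum_{P\in\pi}\sum_{S\subseteq P,\,|S|\ge2}\phi(S)$, not each block separately. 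Separating the blocks needs the partitions $\{P\}\cup\{\{x\}:x\in B\setminus P\}$, which again requires $q\ge|B|$.
\end{itemize}

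So the case $q<|B|$, which includes natural cases such as $q=3$ with $2\times2$ blocks, is exactly what you leave as ``one must check''. As written, the proof is incomplete.

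The missing idea is a local comparison of two colourings that differ at a single site. It needs only three colours and no global inversion. Fix $S\subseteq B$ with $|S|\ge2$ and $v\in S$, and colour $B\setminus S$ with $a$. Compare the colouring that is $b$ on all of $S$ with the one that is $b$ on $S\setminus\{v\}$ and $c$ at $v$. The monochromatic sets of the two colourings coincide except for the sets $T\subseteq S$ with $v\in T$ and $|T|\ge2$. Subtracting the two normalization identities therefore gives $\sum_{T\subseteq S,\,v\in T,\,|T|\ge2}\phi(T)=0$, and induction on $|S|$ yields $\phi(S)=0$.

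This is precisely the paper's proof. It first treats $|S|=2$, getting $2A_1+A_2=2A_1$, and then proceeds inductively via $(a,\dots,a,b,\dots,b)$ versus $(a,\dots,a,b,\dots,b,c)$. Your remark about using $\{S,B\setminus S\}$ together with three-block refinements points at this, but it is the entire proof rather than a side check. Once you use it, Step 2, the M\"obius inversion and the all-singletons partition become unnecessary.

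Incidentally, your explanation of why $|B|>2$ is needed is not right. For $|B|=2$, comparing $(b,b)$ with $(b,c)$ already gives $\phi(\{1,2\})=0$.
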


\begin{proof}
    We want to prove inductively, that all edge weights of edge-configurations with more than one open edge must be zero. Let us denote the states as $a,b,c,...$. Further, let $A_2$ be (the weight of) some edge configuration that connects the coarse spin to exactly two fine spins. Further, let us consider configurations for which all fine spins not hit by $A_2$ are in state $a$. The remaining two states can be either $b,b$ or $b,c$. Due to the normalization condition we find that
    \begin{equation}
        \left[(a,...,a,b,b)\to b\right]=1-\left[(a,...,a,b,b)\to a\right]=\left[(a,...,a,b,c)\to b\right]+\left[(a,...,a,b,c)\to c\right].
    \end{equation}
    The cluster weight of $[(a,...,a,x,y)\to a]$ is independent of the states $x,y$. Thus, we now find that
    \begin{equation}
        2A_1+A_2=2A_1\implies A_2=0.
    \end{equation}
    Therefore, cluster weights for all edge configurations that hit two coarse spins are zero. We can extend this inductively to any number of edges by considering transition probabilities $(a,...,a,b,...,b,c)\to b/c$ and $(a,...,a,b,...,b)\to b$. We conclude that the only possible cluster representable RGT is the one given by $A_n=0,n>1$. Note that we did not have to assume that all edge configurations with $n$ open edges have the same weight. This result even holds if we allow the cluster weights to depend on the geometric positioning of the edges.
\end{proof}

The reason this does not fail for the Ising model is that once we specify all $+$ spins, the complement is also specified as all the remaining spins must be $-$. However, as soon as we have more than two states, specifying all spins of state $a$ leaves some freedom in the choice of states for the remaining spins. This additional information cannot be represented by vertical bonds alone. There is away to circumvent this issue. Let us highlight this idea on the example of the $3$-states Potts model on the $2\mathrm{D}$ triangular-lattice using a $3\to1$ coarse-graining with triangular blocks. The possible single block-configurations we must consider are the following:

\begin{figure}[htpb!]
    \centering
    \begin{tikzpicture}
        
        \draw[black][-] (-0.85,0) to (0.85,0);
        \draw[black][-] (-0.933,0.134) to (-0.067,1.596);
        \draw[black][-] (0.933,0.134) to (0.067,1.596);
        \draw[black] (-1,0) circle (0.15);
        \draw[black] (1,0) circle (0.15);
        \draw[black] (0,1.73) circle (0.15);
        \node[] at (0,-0.35) {$1$};
        \node[] at (-1,0) {\scriptsize$a$};
        \node[] at (1,0) {\scriptsize$a$};
        \node[] at (0,1.73) {\scriptsize$a$};
        \node[] at (0,0.577) {\scriptsize$a$};

        \begin{scope}[shift={(3.5,0)}]

            \draw[black][-] (-0.85,0) to (0.85,0);
            \draw[black][-] (-0.933,0.134) to (-0.067,1.596);
            \draw[black][-] (0.933,0.134) to (0.067,1.596);
            \draw[black] (-1,0) circle (0.15);
            \draw[black] (1,0) circle (0.15);
            \draw[black] (0,1.73) circle (0.15);
            \node[] at (0,-0.35) {$1-\alpha$};
            \node[] at (-1,0) {\scriptsize$a$};
            \node[] at (1,0) {\scriptsize$a$};
            \node[] at (0,1.73) {\scriptsize$b$};
            \node[] at (0,0.577) {\scriptsize$a$};
        \end{scope}

        \begin{scope}[shift={(7,0)}]
            \draw[black][-] (-0.85,0) to (0.85,0);
            \draw[black][-] (-0.933,0.134) to (-0.067,1.596);
            \draw[black][-] (0.933,0.134) to (0.067,1.596);
            \draw[black] (-1,0) circle (0.15);
            \draw[black] (1,0) circle (0.15);
            \draw[black] (0,1.73) circle (0.15);
            \node[] at (0,-0.35) {$\alpha$};
            \node[] at (-1,0) {\scriptsize$a$};
            \node[] at (1,0) {\scriptsize$a$};
            \node[] at (0,1.73) {\scriptsize$b$};
            \node[] at (0,0.577) {\scriptsize$b$};
        \end{scope}

        \begin{scope}[shift={(10.5,0)}]
            \draw[black][-] (-0.85,0) to (0.85,0);
            \draw[black][-] (-0.933,0.134) to (-0.067,1.596);
            \draw[black][-] (0.933,0.134) to (0.067,1.596);
            \draw[black] (-1,0) circle (0.15);
            \draw[black] (1,0) circle (0.15);
            \draw[black] (0,1.73) circle (0.15);
            \node[] at (0,-0.35) {$1/3$};
            \node[] at (-1,0) {\scriptsize$a$};
            \node[] at (1,0) {\scriptsize$b$};
            \node[] at (0,1.73) {\scriptsize$c$};
            \node[] at (0,0.577) {\scriptsize$a$};
        \end{scope}

    \end{tikzpicture}
    \caption{Possible block configurations for a $3\to1$ coarse-graining of the $3$-states Potts model.}
    \label{Fig3}
\end{figure}

Again, we want to set the probability of the coarse spin being anti-parallel to all fine spins to zero. If we restricted ourselves to vertical bonds, Proposition \ref{Prop3} would imply that $\alpha=1/3$ is the only RGT in this family that allows for a cluster representation. However, we can also allow horizontal bonds between the fine spins. This then gives us the following bond configurations:

\begin{figure}[htpb!]
    \centering
    \begin{tikzpicture}
        
        \draw[black][ultra thick][-] (-1,0) to (0,0.577);
        \draw[black][ultra thick][-] (1,0) to (0,0.577);
        \draw[black][ultra thick][-] (0,1.73) to (0,0.577);
        \draw[black][-] (-1,0) to (1,0);
        \draw[black][-] (-1,0) to (0,1.73);
        \draw[black][-] (1,0) to (0,1.73);
        \fill[black] (-1,0) circle (0.1);
        \fill[black] (1,0) circle (0.1);
        \fill[black] (0,1.73) circle (0.1);
        \fill[blue] (0,0.577) circle (0.1);
        \node[] at (0,-0.35) {$C$};

        \begin{scope}[shift={(-3.5,0)}]
            \draw[black][ultra thick][-] (-1,0) to (0,0.577);
            \draw[black][ultra thick][-] (1,0) to (0,0.577);
            \draw[black][-] (-1,0) to (1,0);
            \draw[black][-] (-1,0) to (0,1.73);
            \draw[black][-] (1,0) to (0,1.73);
            \fill[black] (-1,0) circle (0.1);
            \fill[black] (1,0) circle (0.1);
            \fill[black] (0,1.73) circle (0.1);
            \fill[blue] (0,0.577) circle (0.1);
            \node[] at (0,-0.35) {$B$};
        \end{scope}

        \begin{scope}[shift={(-7,0)}]
            \draw[black][ultra thick][-] (-1,0) to (0,0.577);
            \draw[black][-] (-1,0) to (1,0);
            \draw[black][-] (-1,0) to (0,1.73);
            \draw[black][-] (1,0) to (0,1.73);
            \fill[black] (-1,0) circle (0.1);
            \fill[black] (1,0) circle (0.1);
            \fill[black] (0,1.73) circle (0.1);
            \fill[blue] (0,0.577) circle (0.1);
            \node[] at (0,-0.35) {$A$};
        \end{scope}

        \begin{scope}[shift={(3.5,0)}]
            \draw[black][ultra thick][-] (-1,0) to (0,0.577);
            \draw[black][-] (-1,0) to (1,0);
            \draw[black][-] (-1,0) to (0,1.73);
            \draw[black][ultra thick][-] (1,0) to (0,1.73);
            \fill[black] (-1,0) circle (0.1);
            \fill[black] (1,0) circle (0.1);
            \fill[black] (0,1.73) circle (0.1);
            \fill[blue] (0,0.577) circle (0.1);
            \node[] at (0,-0.35) {$D$};
        \end{scope}
    \end{tikzpicture}
    \caption{Bond configurations including horizontal bonds.}
    \label{fig:placeholder}
\end{figure}

The last configuration is the only additional one we have to consider. First of all, because we demand that configurations where the coarse spin is anti-parallel to all fine spins have a probability of zero, automatically all bond-configurations where the coarse spin is not connected to any fine spin have a weight of zero as well. Hence, we must not consider bond configurations which consist of only horizontal bonds. On top of that if we take $D$ and add any other open edge then the connectivity becomes the same as in $C$ such that we have already included these cases. By matching these cluster weights with the transition probabilities from Figure \ref{Fig3} we obtain the following equations.
\begin{equation}
    \begin{aligned}
        3A+3B+C+3D&=1\\
    2A+B&=1-\alpha\\
    A+D&=\alpha\\
    A&=1/3
    \end{aligned}\quad\implies\quad\begin{aligned}
        A&=1/3\\
        B&=1/3-\alpha\\
        C&=0\\
        D&=\alpha-1/3
    \end{aligned}
\end{equation}
While this still forces $A=1/3$, the introduction of the $D$ configuration allows us to represent the RGT for arbitrary values of $\alpha$. Thus, we have found a valid way to represent such a transformation with cluster rules. A key thing to note is that although there suddenly can be horizontal bonds in layers above the base level, these horizontal bonds are not as bad as they seem. They are absent from the topmost layer after $n$ RG-steps and they are restricted to edges within the RG-block of a given spin at the top. Therefore, connections between spins at the top can still only come through connections from the input model at the base layer. Nevertheless, it remains an open question whether the arguments we have used for the Ising model also apply in such a construction or not.

\section{Conclusions}

So far, the advantages of the cluster representation have definitely not been fully explored, although it is questionable whether the negative cluster weights will ever allow for connectivity arguments to be applied at full strength to RGTs. Yet this might be an avenue that is worth exploring further regarding the question of the RG-flow at criticality. It is intriguing to see whether powerful arguments from the world of cluster representations like crossing probabilities can be combined with cluster representations of RGTs. So far, we had always kept the spins in the bottom layer and only worked with clusters in the RG-regime, i.e. the layers above. The obvious next step to tackle the RG-flow at criticality is to represent the Ising spins in the bottom layer via the Edwards-Sokal coupling and then integrate out all spins, thus only keeping bond variables on the entirety of the lattice. Doing so, one can represent correlations between coarse spins by signed weights of connectivity events. Two coarse spins are connected if and only if they connect to a common Ising cluster via vertical bonds. The main difficulty in this approach are the negative cluster weights which forces one to keep analytic control over the sign at all steps. We shall see whether in the future this approach leads to a complete argument or not.\medskip

The concept of the evolution map itself has already been sufficient to obtain a complete picture of the RG-flow away from criticality for almost the full range of physically reasonable RGTs on the $2\times 2\to1$ coarse-graining geometry on $\mathds{Z}^2$. Especially the convergence of the entire low-temperature phase to the trivial $T=0$ fixed point for $\alpha<1/4$ seems to be a novel result. The generalization of these results to higher dimensions appears as a straightforward exercise. Yet the most striking features of RG-theory remain outside of what seems possible for now. Among them are the existence of a non-trivial fixed point as well as universality (keep in mind that, while the authors of \cite{Chelkak-Smirnov} have proven universality among the Ising models on different lattices, we are more so thinking about universality among different models, i.e.\ $\{-1,1\}$-spin models with arbitrary couplings). The arguments presented in this paper are tailored for the nearest-neighbour Ising model and, therefore, make no statement on the RG-flow of systems with arbitrary couplings.

\section*{Acknowledgements}

The author would like to thank his supervisor Uwe-Jens Wiese for the initial inspiration with this idea as well as continuous support throughout the entirety of this project and beyond. Further gratitude is due to Matthias Blau, Sebastian Baader, Urs Wenger and Nico Scheidegger for fruitful discussions. A special thank you is also due to Slava Rychkov and Stanislav Smirnov for reading a previous version of this manuscript and providing helpful comments.

\printbibliography

@book{friedli_velenik_2017,
place={Cambridge},
title={Statistical Mechanics of Lattice Systems: A Concrete Mathematical Introduction},
DOI={10.1017/9781316882603},
ISBN={978-1-107-18482-4},
publisher={Cambridge University Press},
author={Friedli, Sacha and Velenik, Yvan},
year={2017}
}

@article{van_Enter_1993,
   title={Regularity properties and pathologies of position-space renormalization-group transformations: Scope and limitations of Gibbsian theory},
   volume={72},
   ISSN={1572-9613},
   url={http://dx.doi.org/10.1007/BF01048183},
   DOI={10.1007/bf01048183},
   number={5–6},
   journal={Journal of Statistical Physics},
   publisher={Springer Science and Business Media LLC},
   author={van Enter, Aernout C. D. and Fernández, Roberto and Sokal, Alan D.},
   year={1993},
   month=sep, pages={879–1167} }

@article{Fortuin:1971dw,
    author = "Fortuin, C. M. and Kasteleyn, P. W.",
    title = "{On the Random cluster model. 1. Introduction and relation to other models}",
    doi = "10.1016/0031-8914(72)90045-6",
    journal = "Physica",
    volume = "57",
    pages = "536--564",
    year = "1972"
}

@InProceedings{10.1007/978-3-030-32011-9_2,
author="Duminil-Copin, Hugo",
editor="Barlow, Martin T.
and Slade, Gordon",
title="Lectures on the Ising and Potts Models on the Hypercubic Lattice",
booktitle="Random Graphs, Phase Transitions, and the Gaussian Free Field",
year="2020",
publisher="Springer International Publishing",
address="Cham",
pages="35--161",
isbn="978-3-030-32011-9"
}

@article {CoVe2012,
    AUTHOR = {Coquille, Loren and Velenik, Yvan},
     TITLE = {A finite-volume version of {A}izenman-{H}iguchi theorem for
              the 2d {I}sing model},
   JOURNAL = {Probab. Theory Related Fields},
  FJOURNAL = {Probability Theory and Related Fields},
    VOLUME = {153},
      YEAR = {2012},
    NUMBER = {1-2},
     PAGES = {25--44},
       DOI = {10.1007/s00440-011-0339-6},
       URL = {http://dx.doi.org/10.1007/s00440-011-0339-6},
}

@article{EfronStein,
author = {B. Efron and C. Stein},
title = {{The Jackknife Estimate of Variance}},
volume = {9},
journal = {The Annals of Statistics},
number = {3},
publisher = {Institute of Mathematical Statistics},
pages = {586 -- 596},
year = {1981},
doi = {10.1214/aos/1176345462},
URL = {https://doi.org/10.1214/aos/1176345462}
}

@article{Jensen,
author = {J. L. W. V. Jensen},
title = {{Sur les fonctions convexes et les inégalités entre les valeurs moyennes}},
volume = {30},
journal = {Acta Mathematica},
number = {none},
publisher = {Institut Mittag-Leffler},
pages = {175 -- 193},
year = {1906},
doi = {10.1007/BF02418571},
URL = {https://doi.org/10.1007/BF02418571}
}

@article{Chelkak-Smirnov,
author = {Chelkak, Dmitry and Smirnov, Stanislav},
year = {2009},
month = {10},
pages = {},
title = {Universality in the 2D Ising model and conformal invariance of fermionic
observables},
volume = {189},
journal = {Inventiones mathematicae},
doi = {10.1007/s00222-011-0371-2}
}

@article{Duminil-Copin,
author = {Duminil-Copin, Hugo},
year = {2020},
month = {03},
pages = {},
title = {Exponential Decay of Truncated Correlations for the Ising Model in any Dimension for all but the Critical Temperature
},
volume = {374},
journal = {Commun. Math. Phys.},
doi = {10.1007/s00220-019-03633-y}
}

@article{Kadanoff,
  title = {Scaling laws for Ising models near ${T}_{c}$},
  author = {Kadanoff, Leo P.},
  journal = {Physics Physique Fizika},
  volume = {2},
  issue = {6},
  pages = {263--272},
  numpages = {10},
  year = {1966},
  month = {Jun},
  publisher = {American Physical Society},
  doi = {10.1103/PhysicsPhysiqueFizika.2.263},
  url = {https://link.aps.org/doi/10.1103/PhysicsPhysiqueFizika.2.263}
}

@article{WilsonI,
  title = {Renormalization Group and Critical Phenomena. I. Renormalization Group and the Kadanoff Scaling Picture},
  author = {Wilson, Kenneth G.},
  journal = {Phys. Rev. B},
  volume = {4},
  issue = {9},
  pages = {3174--3183},
  numpages = {0},
  year = {1971},
  month = {Nov},
  publisher = {American Physical Society},
  doi = {10.1103/PhysRevB.4.3174},
  url = {https://link.aps.org/doi/10.1103/PhysRevB.4.3174}
}

@article{WilsonII,
  title = {Renormalization Group and Critical Phenomena. II. Phase-Space Cell Analysis of Critical Behavior},
  author = {Wilson, Kenneth G.},
  journal = {Phys. Rev. B},
  volume = {4},
  issue = {9},
  pages = {3184--3205},
  numpages = {0},
  year = {1971},
  month = {Nov},
  publisher = {American Physical Society},
  doi = {10.1103/PhysRevB.4.3184},
  url = {https://link.aps.org/doi/10.1103/PhysRevB.4.3184}
}

@article{Ising,
    author = {Ising, Ernst},
    title = {Beitrag zur Theorie des Ferromagnetismus},
    journal = {Zeitschrift für Physik},
    year = {1925},
    month = {02},
    doi = {10.1007/BF02980577},
    url = {https://doi.org/10.1007/BF02980577},
    pages = {253--258},
    volume = {31},
    issue = {1},
}

\nocite{*}

\end{document}